\newtheorem{theorem}{Theorem}[section]
\newtheorem{corollary}{Corollary}[section]
\newtheorem{proposition}{Proposition}[section]
\newtheorem{lemma}{Lemma}[section]
\newtheorem{remark}{Remark}[section]
\newcommand{\nabb}{\mbox{$\nabla \mkern-13mu /$\,}}
\title{Quasimodes and a Lower Bound on the Uniform Energy Decay Rate for Kerr-AdS Spacetimes}
\author{Gustav Holzegel\footnote{Imperial College London, Department of Mathematics, South Kensington Campus, London SW7 2AZ, United Kingdom.}
\, and Jacques Smulevici\footnote{Laboratoire de Math\'ematiques, Universit\'e Paris-Sud 11, b\^at. 425, 91405 Orsay, France.}
}
\begin{document}
\maketitle

\begin{abstract}
We construct quasimodes for the Klein-Gordon equation on the black hole exterior of Kerr-Anti-de Sitter (Kerr-AdS) spacetimes. Such quasimodes are associated with time-periodic approximate solutions of the Klein Gordon equation
 and provide natural candidates to probe the decay of solutions on these backgrounds.
They are constructed as the solutions of a semi-classical non-linear 
% (in the Kerr-AdS case) 
eigenvalue problem arising after separation of variables, with the (inverse of the) angular momentum playing the role of the semi-classical parameter. Our construction results in \emph{exponentially small} errors in the semi-classical parameter. This implies that general solutions to the Klein Gordon equation on Kerr-AdS cannot decay faster than logarithmically. The latter result completes previous work by the authors, where a logarithmic decay rate was established as an upper bound.
\end{abstract}

 \tableofcontents

\section{Introduction}
There is currently a lot of mathematical activity concerning the analysis of waves on the exterior of black hole backgrounds.\footnote{See \cite{mihalisnotes} for an introduction and a review of recent results.} The main motivation is the black hole stability problem, i.e.~the conjectured non-linear asymptotic stability of the two-parameter family of asymptotically-flat Kerr spacetimes $\left(\mathcal{M},g_{M,a}\right)$, the latter being stationary solutions of the vacuum Einstein equations $Ric\left[g\right]=0$. From the point of view of non-linear partial differential equations, the analysis of linear scalar waves on black holes is a prerequisite to successfully understand the non-linear hyperbolic Einstein equations in a neighborhood of the Kerr family. 

A major insight that crystallized in the last decade  \cite{DafRod, DafRod2, DafRodsmalla, DafRodlargea, Toha1, Toha2, AndBlue, Dyatlov1, Dyatlov2} is that the fundamental geometric obstacles to the decay of waves, namely superradiance and trapped null-geodesics, can be overcome by exploiting the normal hyperbolicity of the trapping, the redshift effect near the event horizon and the natural dispersion of waves in asymptotically flat (and asymptotically de Sitter) spacetimes. In particular, polynomial decay rates have been established for solutions to the scalar wave equation on the exterior of any member of the sub-extremal Kerr family of spacetimes \cite{DafRodlargea}.

% Since then, the multiplier methods pioneered in .... have been applied in various other contexts, elucidating the subtle interplay of the redshift, the superradiance and the trapping in different contexts. 

Changing the black hole geometry can have dramatic effects on the behavior of linear waves through the subtle interplay of the redshift, the superradiance and the trapping. Aretakis \cite{Aretakis, Aretakis2} showed that for \emph{extremal} black holes (whose vanishing surface gravity leads to a degeneration of the redshift effect) the transversal derivatives of general solutions to the wave equation will grow along the event horizon. In \cite{ShlapRoth}, it is proven that for the \emph{massive} wave equation on a sub-extremal ($|a|<M$) Kerr spacetime, exponentially growing solutions can be constructed on the exterior, exploiting an amplification of the superradiance caused by the confining properties of the mass term. % due to the mass.

 In this paper, we shall be interested in a black hole geometry for which a strong trapping phenomenon leads to a very slow (only logarithmic) decay of waves. More precisely, we will study the behavior of solutions to the massive wave equation
 \begin{equation} \label{mwe}
\left(\square_g +\frac{\alpha}{l^2}\right) \psi=0
\end{equation} 
in the exterior of asymptotically Anti-de-Sitter (AdS) black holes with spacetime metric $g$.
 
Due to their AdS asymptotics, these spacetimes are not globally hyperbolic. Nonetheless, the equation (\ref{mwe}) is well posed in suitably weighted Sobolev spaces, denoted here by $H^k_{AdS}$, provided $\alpha$ satisfies the Breitenlohner-Friedmann bound $\alpha<9/4$. See \cite{Holzegelwp} as well as \cite{Vasy2, Bachelot, Bachelot2, Ishibashi} and \cite{cw:mweads} for a complete treatment of general boundary conditions. 

The \emph{global} properties of solutions to (\ref{mwe}) on the exterior of non-superradiant\footnote{This means that the parameters of the black hole satisfy $r_+^2> |a|l$. See Remark \ref{rem:uh} and Section \ref{sec:prelim}.} Kerr-AdS black holes were studied in \cite{HolzegelAdS, gs:dpkads, CWgh}. In particular, boundedness was obtained in \cite{HolzegelAdS, CWgh} and logarithmic decay in time for general $H^2_{AdS}$ solutions in \cite{gs:dpkads}. We summarize these results in the following theorem. We refer to Section \ref{se:fmb} for the precice definitions of the Kerr-AdS spacetimes, the area-radius $r_+$ of the event horizon and the $\Sigma_{t^\star}$--foliation and to Section \ref{sec:norms} for the definitions of the norms and energies used in the statement below. At this point we only remark that $e_1\left[\psi\right]$ is an energy density involving all first derivatives of $\psi$ while $e_2\left[\psi\right]$ and $\tilde{e}_2\left[\psi\right]$ involve all second derivatives (with appropriate weights):

\begin{theorem} \label{theo:pre}
Let $(g,\mathcal{R})$ denote the black hole exterior of a Kerr-AdS spacetime with mass $M>0$, angular momentum per unit mass $a$ and cosmological constant $\Lambda=-\frac{3}{l^2}$. Assume that the parameters satisfy $\alpha < \frac{9}{4}$, $|a| < l$. Fix a spacelike slice $\Sigma_{t^\star_0}$ intersecting $\mathcal{H}^+$. Then the following is true.
\begin{enumerate}
\item Equation (\ref{mwe}) is well-posed in $CH^k_{AdS}$ on $(g,\mathcal{R})$ for any $k\geq 2$ for initial data prescribed on $\Sigma_{t^\star_0}$. See \cite{Holzegelwp}.
\item \label{labbounded} The solutions of (\ref{mwe}) arising from data prescribed on $\Sigma_{t^\star_0}$ remain uniformly bounded on the black hole exterior provided $r_+^2 > |a|l$ holds. In particular,
\begin{equation} \label{bdness}
\int_{\Sigma_{t^\star}} e_1\left[\psi\right] \left(t^\star\right) \ r^2 \sin \theta dr d\theta d\phi \lesssim \int_{\Sigma_{t^\star_0}} e_1\left[\psi\right] \left(t^\star_0\right) \ r^2 \sin \theta dr d\theta d\phi \, .
\end{equation}
Analogous statements hold for all higher $H^k_{AdS}$-norms. In particular,
\begin{align} \label{bndhigher}
\int_{\Sigma_{t^\star}} e_2\left[\psi\right] \left(t^\star\right) \ r^2 \sin \theta dr d\theta d\phi \lesssim \int_{\Sigma_{t^\star_0}} e_2\left[\psi\right] \left(t^\star_0\right) \ r^2 \sin \theta dr d\theta d\phi \, .
\end{align}
and the same statement for $\tilde{e}_2\left[\psi\right]\left(t^\star\right)$. See  \cite{HolzegelAdS, CWgh}.\footnote{The aforementioned papers (as well as \cite{Holzegelwp}) are only concerned with the $\tilde{e}_2\left[\psi\right]$-energy. It is remarked that by commutation with angular momentum operators one can prove boundedness for the ${e}_2\left[\psi\right]$-energy (which differs from the  $\tilde{e}_2\left[\psi\right]$-energy through the weights of the angular derivatives). For completeness, we provide an explicit proof of this statement in the appendix of this paper.}
\item \label{labdecay} If the parameters satisfy $r_+^2 > |a|l$, the solutions of (\ref{mwe}) satisfy the following global decay estimate:
\begin{align} \label{decay}
\int_{\Sigma_{t^\star}} e_1\left[\psi\right] \left(t^\star\right) \ r^2 \sin \theta dr d\theta d\phi \lesssim   \frac{1}{\left[\log( 2+ t^\star)\right]^2}  \int_{\Sigma_{t^\star_0}} e_2\left[\psi\right] \left(t^\star_0\right) \ r^2 \sin \theta dr d\theta d\phi
\end{align}
for all $t^\star \ge t^\star_0 > 0$. See \cite{gs:dpkads}.
\end{enumerate}
\end{theorem}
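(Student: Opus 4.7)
The plan is to address the three parts in turn, using well-established techniques adapted to the AdS setting, and to identify which step is the genuine obstacle.

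For well-posedness (1), the key point is that Kerr-AdS has a timelike conformal boundary at $r=\infty$, so (\ref{mwe}) must be viewed as a mixed initial-boundary value problem. I would build a Dirichlet-type boundary condition into the definition of $H^k_{AdS}$ and show, by a Hardy-type inequality near $r=\infty$, that the quadratic form associated with the spatial part of $\square_g+\alpha/l^2$ is coercive modulo lower-order terms precisely under the Breitenlohner-Friedmann bound $\alpha<9/4$. Combined with energy estimates for a suitable timelike vector field and a Galerkin approximation, this yields existence, uniqueness and continuous dependence in $CH^k_{AdS}$ for every $k\geq 2$.

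For boundedness (2), the main difficulty is that $T=\partial_t$ fails to be causal in the ergoregion (superradiance). Under the non-superradiance condition $r_+^2>|a|l$, however, the Hawking-Reall Killing field $K=\partial_t+\Omega_+\partial_\phi$, with $\Omega_+$ the angular velocity of the horizon, is causal throughout the exterior $\mathcal{R}$. Using $K$ as a multiplier produces a coercive conservation law, and a red-shift multiplier of Dafermos-Rodnianski type near $\mathcal{H}^+$ restores the transversal derivative at the horizon, yielding (\ref{bdness}). To prove (\ref{bndhigher}) I would commute (\ref{mwe}) with the Killing fields $T$ and $\Phi=\partial_\phi$ and with the red-shift vector field, and close the higher-order estimates via elliptic regularity together with the Hardy inequality at conformal infinity; one additionally commutes with the angular momentum operators to pass from $\tilde{e}_2[\psi]$ to $e_2[\psi]$, as indicated in the footnote.

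For the logarithmic decay (3), the main obstacle is the phenomenon of \emph{stable trapping}: the reflecting conformal boundary forces a family of null geodesics to remain trapped in the exterior in a stable manner, precluding any polynomial integrated local energy decay. Following the strategy of \cite{gs:dpkads}, I would reduce matters, after Fourier transform in $t^\star$, to a resolvent estimate for the stationary operator $P$ obtained after separation of $K$, of the form $\|\chi(P-\sigma)^{-1}\chi\|_{L^2\to L^2}\lesssim e^{C|\sigma|}$ on the real axis. Such an exponentially-growing bound is established by a Carleman estimate with a carefully designed weight that compensates for the failure of convexity in the trapping region; a standard Burq-type argument then converts it, in combination with (\ref{bndhigher}), into the $1/[\log(2+t^\star)]^2$ decay of the $e_1$-energy. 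The hardest step is this Carleman estimate, and it is precisely here that the quasimode construction of the present paper shows that the exponential loss on the resolvent, hence the mere logarithmic decay, is sharp.
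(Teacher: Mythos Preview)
Most of Theorem~\ref{theo:pre} is a summary of results proved elsewhere (parts~1 and~3 are cited to \cite{Holzegelwp} and \cite{gs:dpkads}, and the $\tilde e_2$-boundedness in part~2 to \cite{HolzegelAdS,CWgh}); the only piece actually proved in \emph{this} paper is the upgrade from $\tilde e_2$- to $e_2$-boundedness in the appendix. Your sketches for parts~1 and~3 are broadly in the right spirit, though for part~3 the argument of \cite{gs:dpkads} proceeds via separation of variables and Carleman-type estimates for the resulting radial ODE rather than via a resolvent bound obtained from a global Carleman estimate; these are related but not identical routes.

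The genuine gap is in your treatment of the $e_2$-boundedness. You write that one ``additionally commutes with the angular momentum operators'' as if this were routine. In Kerr-AdS the $\Omega_i$ are \emph{not} Killing fields: the deformation tensor ${}^{(\Omega_i)}\pi$ is nonzero, so commutation with $\Omega_i$ produces a right-hand side in (\ref{commute}) that decays in $r$ but not in $t$. Since no integrated local energy decay is available (indeed, the whole point of the paper is that decay is only logarithmic), the basic energy estimate for $\Omega_i\psi$ cannot be closed on its own---the spacetime error grows linearly in the length of the time interval.

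The appendix circumvents this by (i) commuting instead with \emph{localized} operators $\widetilde\Omega_i=\chi(r)\Omega_i$ supported in $\{r\ge R\}$, so that the commutator error carries a small factor $\epsilon=\epsilon(R)$; (ii) deriving, from the divergence identity (\ref{min}), a spacetime integrated estimate for $\widetilde\Omega_i\psi$ that is allowed to lose a factor of $(\tau_2-\tau_1)$; and (iii) combining these with the already-known $\tilde E_2$-boundedness in the compact region $\{r\le 2R\}$ to obtain (\ref{boot}), from which uniform $E_2$-boundedness follows by absorbing the $\epsilon\sup_\tau E_2[\psi](\tau)$ term. (Alternatively, as noted in the appendix, one can commute with the Carter operator, which \emph{does} commute exactly, at the cost of going to $H^3_{AdS}$.) Your proposal does not identify this obstruction or supply any mechanism to overcome it.
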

%\[
%\input{kerrads.pstex_t}
%\]
\begin{remark}
The constant implicit in the symbol ``$\lesssim$" appearing in \ref{labbounded}. and \ref{labdecay}. depends only on the fixed parameters $M$, $\ell$, $a$ and $\alpha$.
\end{remark}

\begin{remark} \label{rem:uh}
The condition $r_+^2 > |a| l$ on the parameters in \ref{labbounded}.~and \ref{labdecay}. guarantees the existence of a globally causal Killing vectorfield on the black hole exterior, the Hawking-Reall vectorfield \cite{HawkingReall}, which explains why such black holes are sometimes referred to as ``non-superradiant". If one restricts to \underline{axisymmetric} solutions of (\ref{mwe}), this condition can be dropped for both \ref{labbounded}.~and \ref{labdecay}. in Theorem \ref{theo:pre}.
\end{remark}
\begin{remark}
Note that the boundedness statement (\ref{bdness}) does not lose derivatives, while the decay statement (\ref{decay}) does. This is the familiar loss of derivatives caused by the existence of trapped null-geodesics \cite{jr:swle}.
\end{remark}

This logarithmic decay rate (\ref{decay}) was conjectured to be sharp in  \cite{gs:dpkads} in view of the discovery of a new stable trapping phenomenon, itself a consequence of the coupling between the lack of dispersion at the asymptotic end and the usual (unstable) trapping on black hole exteriors. See again \cite{gs:dpkads}. 

% Earlier works in the physics community supported this conjecture, by computing numerically quasi-normal modes approaching the real axis exponentially \cite{Festuccia}. 

\subsection{The main results}
In this paper, we shall prove that the logarithmic decay estimate of Theorem \ref{theo:pre} is indeed sharp. Recall that for the obstacle problem, it is classical \cite{jr:swle} that lower bounds on the rate of energy decay can be obtained from the construction of approximate eigenfunctions, also called \emph{quasimodes}, of the associated elliptic operator, obtained by formally taking the Fourier transform in time of the wave operator. Our main theorem establishes the existence of such quasimodes for Kerr-AdS with exponentially small errors.\footnote{Note that in order to deduce the sharpness of the logarithmic decay rate from the quasimodes, polynomial errors would a priori not be sufficient.} 

The statement of the following theorem will involve the quantity $r_{max} \in \left(r_{+},\frac{3M}{1-\frac{a^2}{l^2}}\right]$, which is determined in Lemma \ref{lem:popro} as the location of the unique maximum of a simple radial function. For Schwarzschild-AdS, $r_{max}=3M$ will be the location of the well-known photon sphere.

\begin{theorem}[Quasimodes for Kerr-AdS]\label{th:main}
Let $(g,\mathcal{R})$ denote the black hole exterior of a Kerr-AdS spacetime, with mass $M>0$, angular momentum per unit mass $a$ and cosmological constant $\Lambda=-\frac{3}{l^2}$. Assume that the parameters satisfy $\alpha < \frac{9}{4}$, $|a| < l$. Let $(t,r,\theta,\varphi)$ denote standard Boyer-Lindquist coordinates on $\mathcal{R}$. Then, for $\delta > 0$ sufficiently small (depending only on the parameters $l$, $M$, $a$, $\alpha$), there exists a family of non-zero functions $\psi_\ell\in H^k_{AdS}$ for any $k\geq 0$ such that
\begin{enumerate}
\item $\psi_{\ell}\left(t,r,\theta,\varphi\right)=e^{i \omega_\ell t}\varphi_\ell(r,\theta)$
(axisymmetric and time-periodic), \label{prop1}
\item $0<c< \frac{\omega^2_\ell}{\ell\left(\ell+1\right)}<C$, for constants $c$ and $C$ independent of $\ell$ \\(uniform bounds on the frequencies), \label{prop2}
\item for all $t^\star \ge t^\star_0$, for all $k \ge 0$, $||\left(\square_g+\frac{\alpha}{l^2}\right)\psi_{\ell}||_{H^k_{Ads}(\Sigma_{t^\star})} \le C_k e^{-C_k \ell }||\psi_{\ell}||_{H^0_{AdS}(\Sigma_{t^\star_0})}$, for some $C_k > 0$ independent of $\ell$ (approximate solutions to the wave equation), \label{prop3}
\item the support of $F_\ell:=\left(\square_g+\frac{\alpha}{l^2}\right)\psi_{\ell}$ is contained in $\{ r_{max}\le r \le r_{max}+\delta \}$ (spatial localization of the error), \label{prop4}
\item the support of $\varphi_\ell(r,\theta)$ is contained in $\{r \geq r_{max} \}$ \\(spatial localization of the solution).  \label{prop5}
\end{enumerate}
\end{theorem}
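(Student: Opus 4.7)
The natural approach is to separate variables, reduce to a semi-classical Schrödinger problem on the radial line, and build Agmon-localized quasimodes in a stable potential well located to the right of $r_{max}$.

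First, I would write $\psi_\ell=e^{i\omega t}S(\theta)R(r)$ with $m=0$ (axisymmetric), which splits $(\square_g+\alpha/l^2)\psi=0$ into an angular Sturm--Liouville problem with separation constant $\lambda$ and a radial ODE. After switching to the tortoise coordinate $r^\star$ and conjugating by the appropriate weight, the radial equation takes the Schrödinger form
\begin{equation*}
-\frac{d^2 u}{d(r^\star)^2}+V_{\omega,\lambda,\alpha}(r)\,u = 0,
\end{equation*}
and rescaling $h:=1/\sqrt{\ell(\ell+1)}$, $E:=h^2\omega^2$ exhibits the problem as semi-classical. The key analytic fact, which I would import from the stable-trapping analysis in \cite{gs:dpkads}, is that the rescaled potential $h^2 V$ admits a local minimum in $(r_{max},\infty)$ bordered by a finite barrier on the left peaking at $r_{max}$ and by the AdS barrier (growing like $r^2/l^2$) on the right; hence, at the energy level corresponding to the top of the $r_{max}$ barrier, there is a classically allowed well $[a(E),b(E)]$ with $a(E)>r_{max}$ and a forbidden region to its left down to $r_{max}$.

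Second, with the separation constant $\lambda$ momentarily frozen, I would construct a semi-classical quasimode in the well by a standard Harmonic-approximation / WKB argument: expand $V$ around its minimum, solve the harmonic oscillator eigenvalue problem to get a leading-order energy $E_{\ell}^{(0)}$, and improve it by a finite Bohr--Sommerfeld-type expansion so that the associated radial profile $R_\ell$ satisfies the radial ODE modulo $O(h^N)$. The decisive improvement from $O(h^N)$ to $e^{-C/h}=e^{-C\ell}$ comes from Agmon estimates in the forbidden regions $[r_{max},a(E)]$ and $[b(E),\infty)$: any approximate eigenfunction concentrated in the well decays at the Agmon rate $e^{-d_{Ag}(r)/h}$ into the barriers, so truncating $R_\ell$ to $\{r\ge r_{max}\}$ by a smooth cut-off $\chi$ produces an error $[-\partial_{r^\star}^2,\chi]R_\ell+V(1-\chi)R_\ell$ that is supported in $\{r_{max}\le r\le r_{max}+\delta\}$ and is pointwise exponentially small, giving \ref{prop4} and \ref{prop5}.

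Third, the eigenvalue problem is genuinely coupled/nonlinear, since the angular Sturm--Liouville operator depends on $\omega$. I would handle this by a self-consistency argument: for each $\omega$ in a small window around the naive value $\sqrt{\ell(\ell+1)/(\text{coefficient})}$, solve the angular problem to get $\lambda(\omega,\ell)$, insert into the radial problem to get the required well-energy $E(\omega,\lambda(\omega,\ell))$, and use monotonicity/continuity of both maps together with an intermediate value (or contraction) argument to produce an $\omega_\ell$ matching the quasimode energy. Smoothness of $(\omega,\lambda)\mapsto V$ and explicit control of the Kerr--AdS angular operator for $m=0$, together with the bounds from step~two, give the uniform frequency bound \ref{prop2}. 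Finally, the spacetime norm bound \ref{prop3} follows by translating the radial ODE error into $\|(\square_g+\alpha/l^2)\psi_\ell\|_{H^k_{AdS}(\Sigma_{t^\star})}$, using that the cut-off error is compactly supported away from horizon and conformal infinity, so the $H^k_{AdS}$ norm is equivalent to an ordinary $H^k$ norm there, and commuting $\partial_t$ with the time-periodic factor costs only polynomial powers of $\omega_\ell$ that are absorbed into the exponential.

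The main obstacle is the Agmon step in the AdS geometry: one must show that the effective semi-classical potential is truly separated from the energy on an interval $[r_{max},r_{max}+\delta]$ uniformly in $\ell$, and that the associated Agmon distance stays bounded below, so that a genuine $e^{-C\ell}$ bound (rather than only polynomial smallness) survives both the cut-off and the coupling with the angular eigenvalue equation. A secondary technical point is ensuring that the quasimode, a priori only in a weighted local space, lies in $H^k_{AdS}$ for all $k$—this reduces to verifying that $R_\ell$ decays fast enough at the AdS boundary, which is automatic once Agmon decay is established on the outer forbidden region extending to infinity.
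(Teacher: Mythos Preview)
Your overall architecture---separation of variables with $m=0$, reduction to a semi-classical radial problem, Agmon decay in the barrier region near $r_{max}$, and a smooth cut-off to extend by zero---matches the paper. But there is a genuine gap in your second step that would prevent you from reaching the exponential bound $e^{-C\ell}$.

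A finite-order WKB/Bohr--Sommerfeld construction produces a radial profile $R_\ell$ satisfying the ODE only modulo $O(h^N)$ \emph{in the classically allowed region}. Agmon estimates control how $R_\ell$ decays into the barrier, so the commutator error $[\partial_{r^\star}^2,\chi]R_\ell$ near $r_{max}$ is indeed exponentially small; but the $O(h^N)$ residual inside the well survives, and it is that residual which dominates $\|(\square_g+\alpha/l^2)\psi_\ell\|$. Agmon decay in the forbidden region does not upgrade a polynomial-in-$h$ approximation to an exponential one. The paper avoids this entirely: instead of WKB, it solves the \emph{exact} Dirichlet eigenvalue problem for $P(h)$ on $[r^\star_{max},\pi/2]$ (Dirichlet at $r^\star_{max}$, weighted-$L^2$ condition at the AdS end). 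Existence of eigenvalues $\kappa=h^2\omega^2$ in the window $(1/l^2,V_{max})$ is established by a Weyl-law counting argument via Dirichlet--Neumann bracketing. Because $u_\ell$ is then an exact eigenfunction on $[r^\star_{max},\pi/2]$, the \emph{only} error after cutting off is $[\partial_{r^\star}^2,\chi]u_\ell$, supported in $[r^\star_{max},r^\star_{max}+\delta']$ and exponentially small by Agmon. That is what yields $e^{-C\ell}$.

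A secondary correction to your picture of the potential: in the semi-classical scaling the principal part $V_\sigma=\Delta_-/(r^2+a^2)^2$ is monotone decreasing on $[r_{max},\infty)$ with limit $1/l^2$; there is no local minimum, and the $r^2$-growth you invoke sits in the lower-order term $h^2 V_{mass}$ (whose location of balance with $V_\sigma$ runs off to infinity as $h\to 0$). Confinement on the right comes from the AdS boundary condition, handled through a Hardy inequality, not from a semi-classical barrier; correspondingly the paper needs Agmon only on the left, in $[r^\star_{max},r^\star_{max}+\delta']$.

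Your handling of the nonlinear coupling between $\omega$ and the angular eigenvalue by a self-consistency/continuity argument is in the right spirit. The paper implements it via the implicit function theorem along a homotopy parameter $b\in[0,1]$ interpolating between the decoupled linear problem and the fully coupled one, with the key non-degeneracy coming from the pointwise estimate $a^2 V_\sigma<1$.
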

Note that the $\psi_{\ell}$ have constant $H^k_{AdS}$-norms and hence exhibit no decay. On the other hand, a standard application of Duhamel's formula shows that the $\psi_\ell$ are good approximations to the solution of $\left(\square_g+\frac{\alpha}{l^2}\right) \psi=0$ arising from the data induced by $\psi_{\ell}$, at least up to a time $t \sim e^{C_k \ell}$. 

\begin{corollary}\label{cor}
Let $(\mathcal{R},g)$ denote the black hole exterior of a Kerr-AdS spacetime as in Theorem \ref{th:main}. Denote by $SCH^2_{AdS}$ the set of $CH^2_{AdS}$ solutions to \eqref{mwe} with $\alpha<\frac{9}{4}$. Let $t^\star_0\geq 0$ be fixed and define for any non-zero $\psi$ and $t^\star \geq 0$
\[
Q \left[\psi\right] \left(t^\star\right) := \log(2+t^\star) \left[\frac{\int_{\Sigma_{t^\star} \cap \{r\geq r_{max}\}} e_1\left[\psi\right] \left(t^\star\right) \ r^2 \sin \theta dr d\theta d\phi}{\int_{\Sigma_{t^\star_0}} e_2\left[\psi\right] \left(t^\star_0\right) \ r^2 \sin \theta dr d\theta d\phi}\right]^\frac{1}{2} \, .
\] 
Then there exists a universal (depending only on $M$, $\alpha$, $|a|$ and $l$) constant $C > 0$ such that
\[
\limsup_{t^\star \rightarrow + \infty} \sup_{ \psi \in SCH^{2}_{AdS}, \psi\neq 0 } Q \left[\psi\right] \left(t^\star\right) > C > 0.
\]
\end{corollary}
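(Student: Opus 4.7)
The plan is to test the functional $Q[\psi](t^\star)$ on the genuine $SCH^2_{AdS}$-solutions $\tilde\psi_\ell$ that arise from the initial data induced by the quasimodes $\psi_\ell$ of Theorem \ref{th:main} on $\Sigma_{t^\star_0}$. Set $R_\ell:=\tilde\psi_\ell-\psi_\ell$; then $R_\ell$ solves $(\square_g+\alpha/l^2)R_\ell=-F_\ell$ with vanishing data on $\Sigma_{t^\star_0}$. Applying Duhamel's principle together with the $H^1_{AdS}$-boundedness estimate (\ref{bdness}) of Theorem \ref{theo:pre}, the time-periodicity from property \ref{prop1} (so that $\|F_\ell(s)\|_{H^0_{AdS}}$ is constant in $s$), and property \ref{prop3} at $k=0$, one obtains
\[
\left(\int_{\Sigma_{t^\star}}e_1[R_\ell]\,r^2\sin\theta\,dr\,d\theta\,d\phi\right)^{1/2}\lesssim (t^\star-t^\star_0)\,e^{-C_0\ell}\,\|\psi_\ell\|_{H^0_{AdS}(\Sigma_{t^\star_0})}.
\]

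Next I would compare the $H^1_{AdS}$- and $H^2_{AdS}$-norms of the quasimode itself to its $H^0_{AdS}$-norm. Since $\psi_\ell=e^{i\omega_\ell t}\varphi_\ell$ is axisymmetric, one has $\partial_t\psi_\ell=i\omega_\ell\psi_\ell$ and $\partial_\varphi\psi_\ell=0$, so the $e_1$-density contains the term $|\partial_t\psi_\ell|^2=\omega_\ell^2|\psi_\ell|^2$, yielding (by property \ref{prop2}) the lower bound $\|\psi_\ell\|_{H^1_{AdS}(\Sigma_{t^\star})}\gtrsim\ell\,\|\psi_\ell\|_{H^0_{AdS}(\Sigma_{t^\star_0})}$, uniformly in $t^\star$. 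By the spatial localization in property \ref{prop5}, this bound persists when the integration is restricted to $\{r\geq r_{max}\}$. The matching upper bound $\|\psi_\ell\|_{H^2_{AdS}(\Sigma_{t^\star_0})}\lesssim\ell^2\|\psi_\ell\|_{H^0_{AdS}(\Sigma_{t^\star_0})}$ will follow from the semi-classical eigenfunction structure of $\varphi_\ell$ underlying the construction of Theorem \ref{th:main}: second time derivatives generate a factor $\omega_\ell^2$, mixed derivatives a factor $\omega_\ell$, and pure spatial second derivatives are controlled on the same scale via elliptic regularity for the angular operator whose eigenvalue $\omega_\ell^2\sim\ell(\ell+1)$ governs $\varphi_\ell$ modulo the exponentially small error.

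Combining these two inputs via the triangle inequality gives
\[
\|\tilde\psi_\ell\|_{H^1_{AdS}(\Sigma_{t^\star}\cap\{r\geq r_{max}\})}\gtrsim \ell\,\|\psi_\ell\|_{H^0_{AdS}(\Sigma_{t^\star_0})}-C(t^\star-t^\star_0)\,e^{-C_0\ell}\,\|\psi_\ell\|_{H^0_{AdS}(\Sigma_{t^\star_0})}.
\]
Choosing $t^\star=t_\ell:=e^{C_0\ell/2}$ makes the second term of order $e^{-C_0\ell/2}\|\psi_\ell\|_{H^0_{AdS}}$, negligible compared to the main $\ell\,\|\psi_\ell\|_{H^0_{AdS}}$-term for $\ell$ large, while $\log(2+t_\ell)\sim\ell$. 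Dividing by the $H^2_{AdS}$-norm controlled in the previous paragraph then yields $Q[\tilde\psi_\ell](t_\ell)\gtrsim \ell\cdot\ell/\ell^2\gtrsim 1$, uniformly in $\ell$. Since $t_\ell\to\infty$ as $\ell\to\infty$, the claimed limsup lower bound follows.

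The main obstacle will be the quantitative upper bound $\|\psi_\ell\|_{H^2_{AdS}}\lesssim\ell^2\|\psi_\ell\|_{H^0_{AdS}}$: although property \ref{prop1} makes time derivatives trivial and property \ref{prop3} controls the combined operator $\square_g+\alpha/l^2$ applied to $\psi_\ell$, extracting pointwise control of second-order pure spatial derivatives of $\varphi_\ell$ requires entering the proof of Theorem \ref{th:main} and using the explicit semi-classical construction — namely that $\varphi_\ell$ is a quasi-eigenfunction of a natural second-order angular-radial operator with eigenvalue of size $\ell^2$. The Duhamel estimate itself, by contrast, is entirely standard given the boundedness statement and the exponential smallness of $F_\ell$.
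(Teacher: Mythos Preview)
Your proposal is correct and follows essentially the same strategy as the paper: Duhamel for the difference $\tilde\psi_\ell-\psi_\ell$, the reverse triangle inequality, the frequency localization of the quasimode data to compare the $e_1$- and $e_2$-energies, and the choice $t_\ell\sim e^{c\ell}$ so that $\log t_\ell\sim\ell$ balances the loss of one $\ell$ in passing from $e_2$ to $e_1$.

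Two small points. First, the density $e_1$ carries the time derivative with weight $r^{-2}$, so your lower bound via $\partial_t\psi_\ell$ yields $\omega_\ell\|\psi_\ell\|_{H^{0,-2}_{AdS}}$ rather than $\omega_\ell\|\psi_\ell\|_{H^{0}_{AdS}}$; this is harmless provided you use the same reference norm in the $H^2$ upper bound, but it is worth tracking. Second, the paper avoids the intermediate passage through $H^0$ altogether: instead of proving $e_1\gtrsim\ell^2\|\psi_\ell\|^2_{H^0}$ and $e_2\lesssim\ell^4\|\psi_\ell\|^2_{H^0}$ separately, it establishes directly that $\int e_2[\tilde\psi_\ell]\lesssim\ell^2\int e_1[\tilde\psi_\ell]$ on the initial slice, using that $\partial_{t^\star}$ and $\Omega_i$ applied to the (frequency-localized) data cost a factor $\ell$, and that the homogeneous wave equation satisfied by $\tilde\psi_\ell$ converts second radial derivatives into angular and time derivatives. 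This sidesteps precisely the ``main obstacle'' you identify, since one never needs a standalone bound on $\|\psi_\ell\|_{H^2_{AdS}}$ in terms of $\|\psi_\ell\|_{H^0_{AdS}}$ with the correct $r$-weights near infinity.
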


\begin{remark}
Corollary \ref{cor} implies that the (semi-)local energy in $\{r \ge r_{max}\}$ cannot decay universally faster than $\left(\log t^\star\right)^{-2}$, unless one loses more derivatives. 
\end{remark}

\begin{remark}
We emphasize that no smallness assumption on the angular momentum $a$ is needed apart from the condition $|a| < l$ which ensures that the metric is a regular black hole metric.
\end{remark}

\begin{remark} \label{rem:surad}
Note that the $\psi_\ell$ constructed in Theorem \ref{th:main} are axisymmetric, while the decay estimate of Theorem \ref{theo:pre} holds for general solutions (provided $r_+^2 > |a|l$ holds in the non-axisymmetric case). Since we are concerned here with a \underline{lower} bound on the uniform decay rate, an analysis within axisymmetry is sufficient. This allows us to drop the non-superradiant condition $r_+^2 > |a|l$ in the analysis. On the other hand, for (sufficiently) superradiant black holes $r_+^2 <|a|l$ one can adapt the proof of \cite{ShlapRoth} to construct exponentially growing solutions. Hence in this case, the quasimodes we construct are not the ``worst" solutions on these backgrounds.
\end{remark}

\begin{remark}
Note that the stable trapping occurs only in the region $r \ge r_{\max}$ and is associated to certain frequencies. As a consequence, stronger local energy decay in $ r \le r_{max}$ or for some frequency projections of solutions is a priori compatible with the results of this paper.
\end{remark}
%\marginpar{We should remark what happens in $r\leq r_{max}$.}
%\begin{remark}
%We expect that similar results to Theorem \ref{th:main} and Corollary \ref{cor} hold for the set of Kerr-AdS spacetimes for which boundedness and decay hold.
%\end{remark}

% as well as a all ranges of the Klein-Gordon mass satisfying the Breiterlohner bound. %We have considered here only the simplest case for clarity in the exposition

\subsection{Related works and discussion}
\subsubsection{Non-linear analysis on asymptotically AdS spacetimes}
In \cite{gs:lwp,gs:stab}, the non-linear spherically-symmetric Einstein-Klein-Gordon system for asymptotically AdS initial data was studied, and in particular, the asymptotic stability of Schwarzschild-AdS was proven within this model. 

For a discussion connecting the logarithmic decay to the non-linear stability or instability of asymptotically AdS black holes, we refer to Section 1.4 of \cite{gs:dpkads}. We also mention the recent heuristic analysis of \cite{dhms:nsaads} drawing attention to a potential stability mechanism caused by the lack of exact non-linear resonances in this setting. For AdS itself, instability was conjectured in \cite{DafHolnotes, AndAdS} and \cite{Newtontalk}. More recently, both numerical and additional heuristic evidence has been presented \cite{BizonAdS}.
Finally, let us note that asymptotically AdS solutions to the Einstein equations have been constructed in \cite{FriedrichAdS}.

\subsubsection{Quasi-normal modes of the asymptotically AdS black holes}
Quasi-normal modes, also called resonances, are complex frequencies generalizing the well-known normal modes to systems which dissipate energy. There is a strong connection between quasimodes and resonances \cite{tz:fqr}. One way to mathematically define them is as poles of the meromophic continuation of a truncated resolvent. In the case of asymptotically de-Sitter black holes, this theory has been very successfully developed, see \cite{Haefner, Dyatlov1, Dyatlov2}. In a recent paper \cite{Gannot}, Gannot has established, in the case of Schwarzschild-AdS, the existence of a sequence of quasi-normal modes (based on an independent construction of quasimodes), indexed by angular momentum $\ell$, and with imaginary parts of size $\mathcal{O}\left(\exp(-\ell/C)\right)$. In particular, his construction confirms the numerical results of \cite{Festuccia} and provides an independent proof of Theorem \ref{th:main} and Corollary \ref{cor} albeit restricted to the Schwarzschild-AdS case.
While we restrict ourselves here to the construction of quasimodes, we strongly believe that our results can be used as a basis for the construction of resonances in the Kerr-AdS case for the whole range of parameters satisfying $|a|< l$ and $r_+^2 > |a|l$.

%We believe that the quasimodes constructed in this paper
\subsubsection{Universal minimal decay rates of waves outside stationary black holes}
In the context of the obstacle problem in Minkowski space, a celebrated result of Burq \cite{nb:del} establishes a logarithmic decay rate for the local energy of waves, independently of the geometry of the obstacle causing the trapping. For waves outside black holes, in view of the results of \cite{gs:dpkads}, a natural conjecture is: Given any black hole exterior $(\mathcal{R},g)$ of a stationary spacetime, a logarithmic decay of energy similar to that of \cite{gs:dpkads} will hold, provided a uniform boundedness statement is true for solutions to the wave equation $\square_g \psi=0$ on $\mathcal{R}$.

%In the context of black holes, a natural conjecture would be that, given any stationary  a boundedness theorem holds, 
\subsection{Outline and overview of the proof}
Section \ref{sec:prelim} introduces the family of Kerr-AdS spacetimes as well as the norms required to state our estimates. In Section \ref{se:sere}, we exploit the classical fact that the wave equation separates on Kerr-AdS. An ingredient which considerably simplifies our analysis here is the important observation that we can restrict ourselves to \emph{axisymmetric} solutions. With axisymmetry, the separation of variables leads to relatively simple, one-dimensional, second order ordinary differential equations for the radial functions. In the case of Schwarzschild, they are roughly of the form of the semi-classical problem
\begin{equation} \label{semiclass}
-u^{\prime \prime}_\ell \frac{1}{\ell \left(\ell+1\right)} + V_{\sigma} u_\ell = \frac{\omega^2}{\ell \left(\ell +1\right)} u_\ell
\end{equation}
for a potential $V_\sigma\left(r\right)$, whose general form is depicted below.\footnote{For the purpose of this exposition, we neglect terms of lower order in $\frac{1}{\ell\left(\ell+1\right)}$ in the potential, as well as the mass-term. The latter is actually unbounded and needs to be absorbed with a Hardy inequality. We suppress such technical difficulties in the present discussion.} In Section \ref{sec:potential}, we shall describe in detail the analytic properties of the potentials appearing in these equations.
\[
\begin{picture}(0,0)%
\includegraphics{potential.pstex}%
\end{picture}%
\setlength{\unitlength}{1184sp}%
\begingroup\makeatletter\ifx\SetFigFont\undefined%
\gdef\SetFigFont#1#2#3#4#5{%
  \reset@font\fontsize{#1}{#2pt}%
  \fontfamily{#3}\fontseries{#4}\fontshape{#5}%
  \selectfont}%
\fi\endgroup%
\begin{picture}(11502,9994)(889,-9311)
\put(11776,-4561){\makebox(0,0)[lb]{\smash{{\SetFigFont{5}{6.0}{\rmdefault}{\mddefault}{\updefault}{\color[rgb]{0,0,0}$\frac{1}{l^2}$}%
}}}}
\put(12376,-1936){\makebox(0,0)[lb]{\smash{{\SetFigFont{5}{6.0}{\rmdefault}{\mddefault}{\updefault}{\color[rgb]{0,0,0}$E$}%
}}}}
\put(11101,-8911){\makebox(0,0)[lb]{\smash{{\SetFigFont{5}{6.0}{\rmdefault}{\mddefault}{\updefault}{\color[rgb]{0,0,0}$r^\star=\pi/2$}%
}}}}
\put(11251,-9211){\makebox(0,0)[lb]{\smash{{\SetFigFont{5}{6.0}{\rmdefault}{\mddefault}{\updefault}{\color[rgb]{0,0,0}$r=\infty$}%
}}}}
\put(5326,-7711){\makebox(0,0)[lb]{\smash{{\SetFigFont{5}{6.0}{\rmdefault}{\mddefault}{\updefault}{\color[rgb]{0,0,0}$r=r_{max}$}%
}}}}
\put(12301,-1636){\makebox(0,0)[lb]{\smash{{\SetFigFont{5}{6.0}{\rmdefault}{\mddefault}{\updefault}{\color[rgb]{0,0,0}$E+\delta$}%
}}}}
\put(5401,-286){\makebox(0,0)[lb]{\smash{{\SetFigFont{5}{6.0}{\rmdefault}{\mddefault}{\updefault}{\color[rgb]{0,0,0}$u=0$}%
}}}}
\put(11326,464){\makebox(0,0)[lb]{\smash{{\SetFigFont{5}{6.0}{\rmdefault}{\mddefault}{\updefault}{\color[rgb]{0,0,0}$u=0$}%
}}}}
\put(6001,-6961){\makebox(0,0)[lb]{\smash{{\SetFigFont{5}{6.0}{\rmdefault}{\mddefault}{\updefault}{\color[rgb]{0,0,0}$r=r_{max}+\delta^\prime$}%
}}}}
\put(12301,-2236){\makebox(0,0)[lb]{\smash{{\SetFigFont{5}{6.0}{\rmdefault}{\mddefault}{\updefault}{\color[rgb]{0,0,0}$E-\delta$}%
}}}}
\put(1576,-5836){\makebox(0,0)[lb]{\smash{{\SetFigFont{5}{6.0}{\rmdefault}{\mddefault}{\updefault}{\color[rgb]{0,0,0}$V_{\sigma}$}%
}}}}
\end{picture}%

\]
To construct the quasimodes, we first construct eigenfunctions for the problem (\ref{semiclass}) with Dirichlet conditions $u=0$ imposed on $u$ at $r=r_{max}$ and $r\rightarrow \infty$ (Section \ref{se:te}). In particular, we prove a version of Weyl's law, ensuring that for any energy between $\frac{1}{l^2}<E<V_{max}=V_\sigma\left(r_{max}\right)$ we can find (lots of) eigenvalues $\frac{\omega_\ell^2}{\ell\left(\ell+1\right)}$ of (\ref{semiclass}) in a strip $\left[E-\delta,E+\delta\right]$. In the Kerr case, the eigenvalue problem (\ref{semiclass}) turns into a problem of the form\footnote{The $\mu_{\ell}\left(a^2\omega^2\right)$ generalize the familiar spherical eigenvalues $\ell\left(\ell+1\right)$ of the Schwarzschild case to Kerr. See Section \ref{se:sphhar}.}
 \begin{equation} \label{semiclass2}
-u^{\prime \prime}_\ell \frac{1}{\mu_\ell\left(a^2\omega^2\right)} + V_{\sigma} u_\ell = \frac{\omega^2}{\mu_\ell\left(a^2\omega^2\right)} u_\ell \, ,
\end{equation}
which is \emph{non-linear} in $\omega^2$. An application of the implicit function theorem together with global estimates on the behavior of the eigenvalues still allows us to conclude the existence of eigenfunctions $u_\ell$ of (\ref{semiclass2}) with corresponding eigenvalues in the range $\left(\frac{1}{l^2},E+\delta\right]$. \emph{These estimates, together with the analysis of the potential in the Kerr-AdS case, constitute the core of our paper.}

In Section \ref{se:ae}, we recall the so-called Agmon estimates, their proof being included to make the paper self-contained. These estimates quantify that the solutions $u_\ell$ constructed from the above eigenvalue problem decay exponentially in $\ell$ in a region $\left[r_{max},r_{max}+\delta^\prime\right]$. 

 In Section \ref{se:cq}, the quasimodes are constructed by cutting off the solution $u_\ell$ of the eigenvalue problem in $\left[r_{max},r_{max}+\delta^\prime\right]$ so that it vanishes with all derivatives at $r_{max}$ and then continuing it to be identically zero in $\left[r_{hoz},r_{max}\right]$. The function $\phi_\ell$ thus constructed will be defined on $\left[r_{hoz},\infty\right)$ and the corresponding wave function $\psi_\ell = e^{i\omega_\ell} \phi_\ell$ will satisfy the wave equation everywhere except in the small strip $\left[r_{max},r_{max}+\delta^\prime\right]$, where the error is exponentially small by the Agmon estimate.
 
In the last section, we prove Corollary \ref{cor} using the Duhamel formula. Finally, Appendix \ref{app} contains a proof of boundedness for the second energy used in this paper. This boundedness statement differs from that obtained in \cite{HolzegelAdS, CWgh} in that it allows for stronger radial weights near infinity for the angular derivatives.

\subsection{Acknowledgments}
Theorem \ref{th:main} and Corollary \ref{cor} were announced for the Schwarzschild case at the workshop ``Mathematical Aspects of General Relativity" in Oberwolfach, Germany, in August 2012 \cite{Wolf}. We thank Matthieu L\'eautaud for showing us the Agmon estimates and Maciej Zworski for pointing out the work of Gannot \cite{Gannot}. G.H.~acknowledges support through NSF grant DMS-1161607 and thanks the Department of Mathematics at Orsay for its hospitality.

\section{Preliminaries} \label{sec:prelim}
\subsection{The Kerr-AdS family of spacetimes} \label{se:sads}
We recall here some basic facts about the family of Kerr-AdS spacetimes required in the paper. We refer the reader to the detailed discussion in our \cite{gs:dpkads}.

\subsubsection{The fixed manifold with boundary $\mathcal{R}$} \label{se:fmb}
Let $\mathcal{R}$ denote the manifold with boundary
\[
\mathcal{R}=[0,\infty) \times \mathbb{R} \times \mathbb{S}^2 \, .
\]
We define standard coordinates $y^*$ for $[0,\infty)$, $t^\star$ for $\mathbb{R}$ and $(\theta,\phi)$ for $\mathbb{S}^2$. This defines a coordinate system on $\mathcal{R}$, which is global up to the well-known degeneration of the spherical coordinates.  We define the event horizon $\mathcal{H}^+$ to be the boundary of $\mathcal{R}$:
$$
\mathcal{H}^+=\partial \mathcal{R}=\{ y^*=0 \}.
$$ 
The manifold $\mathcal{R}$ will coincide with the domain of outer communication of the black hole spacetimes including the future event horizon $\mathcal{H}^+$.

\subsubsection{The parameter space and the radial function}
A Kerr-AdS spacetime is characterized by its mass $M>0$, its angular momentum per unit mass $a$ and the value of the cosmological constant $\Lambda < 0$. For convenience, we shall use mostly $l= \sqrt \frac{3}{| \Lambda |}$, instead of $\Lambda$. For the spacetime to be regular, we will require $|a| < l$.
Let us thus fix $M,l > 0$ and $|a| < l$. We then define $r_+(M,l,a)> 0$ as the unique real solution of $\Delta_-(x)=0$ where

\begin{equation}\label{def:deltam}
\Delta_-(x)=\left( 1+\frac{x^2}{l^2}\right)(x^2+a^2)-2Mx \, .
\end{equation}

We now define a function $r$ on $\mathcal{R}$ as follows. As a function of $(t^\star,y^\star,\varphi,\theta)$, $r$ only depends on $y^\star$ and is a diffeomorphism from $[0,\infty)$ to $[r_+,\infty)$. The collection $(t^\star,r(y^\star),\varphi,\theta)$ then form a coordinate system on $\mathcal{R}$, global up to the degeneration of the spherical coordinates. %, which we call Kerr-AdS-star coordi-
%nates.
Moreover, the horizon $\mathcal{H}^+$ coincides with $\{ r =r_+\}$.

%\subsection{The tortoise coordinate $r^\star$}
\subsubsection{More coordinates: $r^\star$, $t$, $\tilde{\phi}$}
Let us define $r^\star$ by 

$$\frac{dr^\star}{dr}\left(r\right)=\frac{r^2+a^2}{\Delta_{-}(r)}, \quad r^\star(r=+\infty)=\pi/2.$$
where $\Delta_{-}(r)$ is given by \eqref{def:deltam}. Note that $r^\star(r_+)=-\infty$. 

By a small abuse of notation, we shall often write for functions $f$ and $g$, $f(r^\star)=g(r)$, instead of $f(r^\star)=g(r(r^\star))$ or $f(r^\star(r))=g(r)$.

Finally, let $r_{cut} = r_+ + \frac{r_{max}-r_+}{2}$, with $r_{max}>r_+$ defined in Lemma \ref{lem:popro} depending only on the parameters $M$, $a$ and $l$, and 
$\chi\left(r\right)$ be a smooth cut-off function with the following property
\begin{equation} \label{cofdef}
\chi\left(r\right) = \left\{ 
\begin{array}{rl} 
1 & \text{if } r \in \left[r_+, \frac{r_{cut}-r_+}{2}\right],\\ 
0 & \text{if } r \geq r_{cut}. \\ 
\end{array} \right.
\end{equation}
We introduce the time coordinate $t$ and another angular coordinate $\tilde{\phi}$ as 
\begin{equation} \label{tpdef}
t = t^\star - A\left(r\right) \chi\left(r\right) \textrm { \ \ \ and \ \ \ } \tilde{\phi} = \phi - B\left(r\right) \chi\left(r\right)\, , \nonumber
\end{equation}
where 
\begin{equation}
\frac{dA}{dr} = \frac{2Mr}{\Delta_- \left(1+\frac{r^2}{l^2}\right)}\, , \textrm{ \ \ \ \ \ \ \ \ } \frac{dB}{dr} = \frac{a \left(1-\frac{a^2}{l^2}\right)}{\Delta_-} \nonumber
\end{equation}
and $A$ and $B$ vanish at infinity.

Note that $t$, $\tilde{\phi}$ and $r^\star$ are not well behaved functions at the horizon $r=r_+$. As a consequence, the coordinate systems $(t,r,\theta,\tilde{\phi })$ and $(t,r^\star,\theta,\phi)$ only cover $\mathrm{int}\left(\mathcal{R}\right)$. Observe also that the two coordinate systems $\left(t,r,\theta,\tilde{\phi}\right)$ and $\left(t^\star,r,\theta,\phi\right)$ are identical for $r \geq r_{cut}$.

% For reasons that will become clear later, we shall also need to consider energies on constant $t$ slices, eventhough they will only cover $\mathrm{int}\left(\mathcal{R}\right)$. 

%\subsection{The $t$ coordinate}

\subsubsection{The Kerr-AdS metric for fixed $(a,M,l)$}
We may now introduce the Kerr-AdS metric as the unique smooth extension to $\mathcal{R}$ of the tensor given in the Boyer-Lindquist chart by:
\begin{align}\label{eq:metricbl}
g_{KAdS} = \frac{\Sigma}{\Delta_-} dr^2 + \frac{\Sigma}{\Delta_\theta} d\theta^2 + \frac{\Delta_\theta \left(r^2+a^2\right)^2 - \Delta_- a^2 \sin^2 \theta}{\Xi^2 \Sigma} \sin^2 \theta d\tilde{\phi}^2 \nonumber \\
- 2 \frac{\Delta_\theta \left(r^2+a^2\right)-\Delta_-}{\Xi \Sigma}a \sin^2 \theta \ d\tilde{\phi} dt - \frac{\Delta_- - \Delta_\theta a^2 \sin^2 \theta}{\Sigma} dt^2
\end{align}
where $\Delta_-$ is defined by \eqref{def:deltam} and 
\begin{align}
\Sigma = r^2 + a^2 \cos^2 \theta, 
\ \ \ \ \ 
\Delta_\theta = 1 - \frac{a^2}{l^2} \cos^2 \theta, \label{def:deltatheta}
 \ \ \ \ \ 
\Xi = 1 - \frac{a^2}{l^2} \, .
\end{align}
See \cite{gs:dpkads} for explicit expressions for the inverse of (\ref{eq:metricbl}).
%
%
\begin{comment}
Note the components of the inverse in these coordinates:
\begin{align}
g^{tt} = - \frac{\left(r^2+a^2\right)^2}{\Sigma \Delta_-} + \frac{a^2 \sin^2 \theta}{\Sigma \Delta_\theta} \, ,
\end{align}
\begin{align}
g^{t\tilde{\phi}} = \frac{\Xi \left(r^2+a^2\right)}{\Delta_- \Sigma} a - \frac{\Xi}{\Delta_\theta \Sigma} a 
\textrm{\ , \ \ \ \ } g^{\tilde{\phi} \tilde{\phi}} = \frac{\Xi^2}{\Sigma \Delta_\theta} \frac{1}{\sin^2 \theta} - \frac{\Xi^2 a^2}{\Sigma  \Delta_-} \, ,
\end{align}
\begin{align}
g^{rr} = \frac{\Delta_-}{\Sigma}, \ \ \ \ \ g^{\theta \theta} = \frac{\Delta_\theta}{\Sigma}, \ \ \  \ \ \sqrt{g} = \frac{\Sigma}{\Xi}\sin \theta \, .
\end{align}
\end{comment}
%
That the tensor \eqref{eq:metricbl} indeed extends to a smooth metric on $\mathcal{R}$ is clear from expressing the metric in $\left(t^\star,r,\theta,\phi\right)$ coordinates, which is carried out explicitly in \cite{gs:dpkads}. Note that for $a=0$, the metric (\ref{eq:metricbl}) reduces to the well-known Schwarzschild-AdS spacetime
$$
g=-\left(1-\frac{2M}{r}+\frac{r^2}{l^2}\right)dt^2+\left(1-\frac{2M}{r}+\frac{r^2}{l^2}\right)^{-1}dr^2+r^2d\sigma_{S^2} \, ,
$$
where $d\sigma_{S^2}$ is the standard metric on the unit sphere.

%Finally, a Penrose diagram of (the black hole exterior of) Kerr-AdS spacetime with two slices of constant $t^\star$ is depicted below.
%\[
% \input{conformal2.pstex_t}
%\]

\subsection{The norms} \label{se:wpbs}  \label{sec:norms}
Let $\slashed{g}$ and $\slashed{\nabla}$ denote the induced metric and the covariant derivative on the spheres $S^2_{t^\star, r}$ of constant $t^\star$ and $r$ in $\mathcal{R}$.

 We write $|\slashed{\nabla} ... \slashed{\nabla} \psi|^2=\slashed{g}^{AA^\prime} ... \slashed{g}^{BB^\prime} \slashed{\nabla}_A ... \slashed{\nabla}_{B} \psi  \slashed{\nabla}_{A^\prime} ... \slashed{\nabla}_{B^\prime} \psi$ to denote the induced norms on these spheres.
 
 We denote by $\Omega_i$ with $i=1,2,3$ the standard basis of angular momentum operators on the unit sphere in $\theta$, $\phi$ coordinates. 
 
 With these conventions, we define the energy densities
 \begin{align} \label{densities}
e_1 \left[\psi\right]  &= \frac{1}{r^2} \left(\partial_{t^\star} \psi\right)^2 + r^2 \left(\partial_r \psi\right)^2 + |\slashed{\nabla} \psi|^2 + \psi^2 \, , \nonumber \\
\tilde{e}_2 \left[\psi\right] &= e_1\left[\psi\right] + e_1\left[\partial_{t^\star} \psi\right] + r^4 \left(\partial_{r} \partial_{r} \psi\right)^2 + r^2 | \partial_r \slashed{\nabla} \psi|^2 + |\slashed{\nabla} \slashed{\nabla} \psi|^2 \, , \nonumber \\
{e}_2 \left[\psi\right]  &=\tilde{e}_2 \left[\psi\right] + \sum_i e_1\left[\Omega_i \psi\right]  \, .
\end{align}

Similarly, we define the following energy norms for the scalar field $\psi$, cf.~\cite{Holzegelwp}:
\begin{eqnarray}
\| \psi \|^2_{H^{0,s}_{AdS}\left(\Sigma_{t^\star}\right)} &=&  \int_{\Sigma_{t^\star}} r^s  \psi^2 r^2 dr \sin \theta d\theta d{\phi}\nonumber \\
\| \psi \|^2_{H^{1,s}_{AdS}\left(\Sigma_{t^\star}\right)} &=& 
\int_{\Sigma_{t^\star}} r^s \left[ r^2 \left(\partial_r \psi \right)^2 + | \nabb \psi |^2 + \psi^2 \right] r^2 dr \sin \theta d\theta d{\phi}\nonumber \\
\| \psi \|^2_{H^{2,s}_{AdS}\left(\Sigma_{t^\star}\right)} &=& \| \psi \|^2_{H^{1,s}_{AdS}\left(\Sigma_{t^\star}\right)} \nonumber \\
&+& \int_{\Sigma_{t^\star}} r^s \bigg[  r^4 \left(\partial_r \partial_r \psi \right)^2 
+r^2 | \nabb \partial_r \psi|^2 +| \nabb \nabb \psi |^2 \bigg] r^2 dr \sin \theta d\theta d{\phi}\, . \nonumber
\end{eqnarray}
Note in particular the relations
\begin{align}
\int_{\Sigma_{t^\star}} e_1 \left[\psi\right] \ r^2 dr \sin \theta d\theta d\phi = \| \psi \|^2_{H^{1,0}_{AdS}\left(\Sigma_{t^\star}\right)} + \| \partial_{t^\star} \psi \|^2_{H^{0,-2}_{AdS}\left(\Sigma_{t^\star}\right)} \, ,
\end{align}
\begin{align}
\int_{\Sigma_{t^\star}} e_2 \left[\psi\right]  \ r^2 dr \sin \theta d\theta d\phi = \| \Omega_i \psi \|^2_{H^{1,0}_{AdS}\left(\Sigma_{t^\star}\right)} + \| \partial_{t^\star} \psi \|^2_{H^{1,0}_{AdS}\left(\Sigma_{t^\star}\right)}  \nonumber \\ 
+  \| \psi \|^2_{H^{2,0}_{AdS}\left(\Sigma_{t^\star}\right)} + \| \partial_{t^\star} \partial_{t^\star} \psi \|^2_{H^{0,-2}_{AdS}\left(\Sigma_{t^\star}\right)} \, .
\end{align}

Higher order norms may be defined similarly. We denote by $H^{k,s}_{AdS}(\Sigma_{t^\star})$, the space of functions $\psi$ such that $ \nabla^i \psi \in L^2_{loc}(\Sigma_{t^\star})$ for $i=0,..., k$ and $\| \psi \|^2_{H^{k,s}_{AdS}\left(\Sigma_{t^\star}\right)} < \infty$.
We denote by $CH_{AdS}^{k,s}$ the set of functions $\psi$ defined on $\mathcal{R}$, such that 
\begin{align}
\psi \in &\bigcap_{q=0,..,k} C^q\left(\mathbb{R}_{t^\star};H^{k-q,s_q}_{AdS}(\Sigma_{t^\star})\right) \nonumber \\
&\textrm{where $s_k = -2$, $s_{k-1} = 0$ and $s_j = s$ for $j=0, ... , k-2$. }\nonumber
\end{align}
When $s=0$, we will feel free to drop the $s$ in the notation, i.e. $H^{k,0}_{AdS}:=H^{k}_{AdS}$ and $CH_{AdS}^{k}:=CH_{AdS}^{k,0}$. % as well as $L^2_{AdS}:=H^{0,0}_{AdS}$.

\subsection{A Final Remark}
In \cite{gs:dpkads}, the coordinates $t$ and $\tilde{\phi}$ in (\ref{tpdef}) are defined with the $\chi\left(r\right)$ of (\ref{cofdef}) being \emph{globally} equal to $1$. Here, for convenience in the subsequent analysis (which happens mostly away from the horizon, in $r\geq r_{max}$), we have altered these coordinates \emph{away from the horizon} to agree with the Boyer-Lindquist coordinates. Note that these two coordinate systems are equivalent in the sense that the statement ``$\| \psi \|^2_{H^{k,s}_{AdS}\left(\Sigma_{t^\star}\right)}$ decays logarithmically in $t^\star$" is independent of whether the coordinates (and $\Sigma_{t^\star}$-slices) of \cite{gs:dpkads} or the cut-off coordinates (\ref{tpdef}) are used.

\section{Separation of variables and reduced equations} \label{se:sere}
\subsection{The (modified) oblate spheroidal harmonics} \label{se:sphhar}
For each $\xi \in \mathbb{R}$, define the unbounded $L^2\left(\sin \theta d\theta d\tilde{\phi}\right)$-self-adjoint operator $P_{\mathbb{S}^2}\left(\xi\right)$ with domain being $H^2\left(\mathbb{S}^2\right)$-complex valued functions (see Section 7 of \cite{DafRodsmalla} for a more detailed discussion) as
\begin{eqnarray}\label{def:pop}
-P_{\mathbb{S}^2}\left(\xi\right) f &=& \frac{1}{\sin \theta} \partial_\theta \left(\Delta_\theta \sin \theta \partial_\theta f \right) + \frac{\Xi^2}{\Delta_\theta} \frac{1}{\sin^2 \theta} \partial_{\tilde{\phi}}^2 f \nonumber \\ 
&&\hbox{}+ \Xi \frac{\xi^2}{\Delta_\theta} \cos^2 \theta f - 2 i  \xi \frac{\Xi}{\Delta_\theta} \frac{a^2}{l^2} \cos^2 \theta \  \partial_{\tilde{\phi}}f \, \,.
\end{eqnarray}
%with $\xi = a \omega$. 
We also define the operator $P_{\alpha}$, which is equal to
\begin{equation} \label{def:palpha}
P_{\mathbb{S}^2,\alpha} \left(\xi\right) = \left\{
\begin{array}{rl} 
P \left(\xi\right) + \frac{\alpha}{l^2}a^2 \sin^2 \theta & \text{if } \alpha > 0 \, , \\
P\left(\xi\right) + \frac{|\alpha|}{l^2}a^2 \cos^2 \theta & \text{if } \alpha \leq 0 \, .
\end{array} \right. 
\end{equation}
For $l \rightarrow \infty$ the operator $P_{\mathbb{S}^2}(\xi)$ reduces to an oblate spheroidal operator
on $\mathbb{S}^2$ as considered for instance in \cite{DafRodsmalla}. If also $\xi=a=0$ we retrieve the Laplacian on the round sphere. Both operators $P_{\mathbb{S}^2}(\xi)$ and $P_{\mathbb{S}^2,\alpha}(\xi)$ have discrete spectrum. The eigenvalues and eigenvector of $P_{\mathbb{S}^2,\alpha}(\xi)$ will be denoted as follows:
\begin{align} 
%P_{\mathbb{S}^2}\left(\xi^2\right) \textrm{ \ has eigenvalues } \tilde{\lambda}_{m \ell}\left(\xi\right) \textrm{ with eigenfunctions $\tilde{S}_{m\ell} \left(a\xi, \cos \theta \right) e^{im \tilde{\phi}}$}, \nonumber \\
P_{\mathbb{S}^2,\alpha} \left(\xi\right) \textrm{ \ has eigenvalues }\lambda_{m \ell}\left(\xi\right) \textrm{ with eigenfunctions $S_{m\ell} \left(\xi, \cos \theta \right) e^{im \tilde{\phi}}$}. \nonumber
\end{align}
Later we will restrict attention to axisymmetric solutions and hence to the eigenvalues $\lambda_{0 \ell}$ of the operators
\begin{align} \label{angprob}
-P_{\theta,\alpha}\left(\xi^2\right) := -P_{\mathbb{S}^2,\alpha}\left(\xi\right) \Big|_{m=0} \, .
\end{align}
In fact, it will be convenient (in view of their manifest positivity) to work with the eigenvalues of the shifted operators (for $\alpha<0$)
\begin{align} \label{auxfo}
-P_{\theta,\alpha} \left(\xi^2\right) -\xi^2 = &\frac{1}{\sin \theta} \partial_\theta \left(\Delta_\theta \sin \theta \partial_\theta \left( \, \cdot \, \right) \right) -\frac{|\alpha|}{l^2} a^2 \cos^2 \theta - \frac{\sin^2 \theta}{\Delta_\theta} a^2  \xi^2 \, 
\end{align}
and with the $\cos^2$ replaced by a $\sin^2$ in the second term in case that $\alpha \leq 0$.
In view of these considerations, we shall denote the eigenvalues of the operator $P_{\theta,\alpha} \left(\xi^2\right) + \xi^2$ by ${\mu}_{\ell} \left(\xi^2\right)$. Note that by min-max and comparison with the spherical Laplacian (see Lemma 5.1 of \cite{gs:dpkads}) in the axisymmetric case we have
\begin{align} \label{minmaxangular}
\mu_{\ell} \left(\xi^2\right) \geq \mu_{\ell} \left(0\right) \geq \Xi \ell \left(\ell+1\right) > c_{a,l} \ell\left(\ell+1\right) \, .
\end{align}

\subsection{The separation of variables} \label{se:sepv}
In this section, we present the reduced equations obtained after separation of variables. For the \emph{construction} of quasimodes, it would be sufficient to start directly from the reduced equations. However, to explain their relation with (\ref{mwe}), we will instead derive them from the Klein-Gordon equation \eqref{mwe}. Thus, in this section $\psi$ will denote any regular solution to \eqref{mwe}. 

Let us introduce the time-Fourier transform\footnote{Note that in general $\psi$ is not an $L^2$ function in time and therefore, $\widehat{\psi}$ is defined only as a tempered distribution. Since here we are merely trying to justify the origin of the reduced equations, it will be sufficient to understand all computations formally.}
\begin{align} \label{foutime}
\psi \left(t,r,\theta,\tilde{\phi}\right) = \frac{1}{\sqrt{2\pi}} \int_{-\infty}^\infty e^{-i\omega t} \widehat{\psi} \left(\omega,r,\theta,\tilde{\phi}\right) d\omega \, .
\end{align}
 We decompose the $\widehat{\psi}$ of equation (\ref{foutime}) as
\begin{align}
\widehat{\psi}\left(\omega, r , \theta, \tilde{\phi}\right)  &= \sum_{m\ell} \left(\widehat{\psi}\right)^{(a\omega)}_{m\ell} \left(r\right) S_{m\ell} \left(a \omega, \cos \theta\right) e^{im\tilde{\phi}} \, ,
\end{align}
where $S_{m\ell}$ are the modified spheroidal harmonics introduced precedently and 
\begin{align}
 \left(\widehat{\psi}\right)^{(a\omega)}_{m\ell} \left(r\right) = \frac{1}{\sqrt{2\pi}} \int_{-\infty}^\infty dt \int_{\mathbb{S}^2\left(t,r\right)} d\tilde{\sigma} e^{i\omega t} \ S_{m\ell} \left(a \omega, \cos \theta\right) e^{-im\tilde{\phi}} \  \psi\left(t,r,\theta,\tilde{\phi}\right), \nonumber
\end{align} 
with $d\tilde{\sigma} = \sin \theta d\theta d\tilde{\phi}$.

After the renormalization
\begin{align}
u^{(a\omega)}_{m\ell} \left(r\right)  = \left(r^2+a^2\right)^\frac{1}{2} \left(\widehat{\psi}\right)^{(a\omega)}_{m\ell} \left(r\right) \,, 
\end{align}
we finally obtain from \eqref{mwe} the equation
\begin{align}
\left[ u^{(a\omega)}_{m\ell} \left(r\right)\right]^{\prime \prime} + \left(\omega^2 - V^{(a\omega)}_{m\ell} \left(r\right) \right) u^{(a\omega)}_{m\ell} =0\label{eq:uode} \, ,  %H^{(a\omega)}_{m\ell} \left(r\right) 
\end{align}
%with the right hand side being equal to
%\begin{align}
%H^{(a\omega)}_{m\ell} \left(r\right)  = \frac{\Delta_-}{\sqrt{r^2 + a^2}} F^{(a\omega)}_{m\ell} \left(r\right) \,  .
%\end{align}
where the potential $V^{(a\omega)}_{m\ell} \left(r\right)$ is defined as
\begin{align}
V^{(a\omega)}_{m\ell} \left(r\right) = V^{(a\omega)}_{+, m \ell}\left(r\right) + V^{(a\omega)}_{0, m \ell} \left(r\right) +V_\alpha \left(r\right) \, ,
\end{align}
where
\begin{eqnarray} \label{eq:v+}
V^{(a\omega)}_{+, m \ell}\left(r\right) &=& -\Delta_-^2 \frac{3 r^2 }{\left(r^2 + a^2\right)^4} + \Delta_-\frac{5\frac{r^4}{l^2} + 3r^2 \left(1+\frac{a^2}{l^2}\right) - 4Mr + a^2}{\left(r^2 + a^2\right)^3} \nonumber \\
&=& \left(r^2+a^2\right)^{-\frac{1}{2}} \left(\sqrt{r^2+a^2}\right)^{\prime \prime},\\
V^{(a\omega)}_{0, m \ell} \left(r\right)&=& \frac{\Delta_- \left(\lambda_{m\ell} + \omega^2 a^2\right)-\Xi^2 a^2 m^2 - 2m\omega a \Xi \left(\Delta_- - \left(r^2+a^2\right)\right)}{\left(r^2 + a^2\right)^2}\nonumber \\
V_\alpha \left(r\right) &=&-\frac{\alpha}{l^2} \frac{\Delta_-}{\left(r^2+a^2\right)^2} \left(r^2 + \Theta\left(\alpha\right) a^2 \right) \label{eq:va} \, ,
\end{eqnarray}
with $\Theta \left(\alpha\right)=1$ for $\alpha > 0$ and $\Theta \left(\alpha\right)=0$ if $\alpha\leq 0$ (recall that the $\lambda_{m\ell}$ also depend on $\alpha$ through (\ref{def:palpha})). Note that $V_+$ grows like $\frac{2r^2}{l^4}$ near infinity while the $V_0$-part remains bounded.
\subsection{The axisymmetric reduced equations} \label{sec:potential}
We now look at the axisymmetric case, i.e.~we consider the above equations under the assumption that $\psi$ is independent of the azimuthal variable $\tilde{\phi}$. The reduced equations are then obtained by setting $m=0$ in the decomposed equations. Hence, we will consider the following set of equations:
\begin{align}
\left[ u^{(a\omega)}_{0\ell} \left(r\right)\right]^{\prime \prime} + \left(\omega^2 - V^{(a\omega)}_{0\ell} \left(r\right) \right) u^{(a\omega)}_{0\ell} \left(r\right)  =0\label{eq:red} \, ,  %H^{(a\omega)}_{m\ell} \left(r\right) 
\end{align}
%with the right hand side being equal to
%\begin{align}
%H^{(a\omega)}_{m\ell} \left(r\right)  = \frac{\Delta_-}{\sqrt{r^2 + a^2}} F^{(a\omega)}_{m\ell} \left(r\right) \,  .
%\end{align}
where the potential $V^{(a\omega)}_{0\ell} \left(r\right)$ is defined as
\begin{align}
V^{(a\omega)}_{0\ell} \left(r\right) = V_{junk}\left(r\right) +V_{mass} \left(r\right) + V_{\sigma} \left(r\right) \cdot \mu_\ell\left(\omega^2 a^2\right) \, ,
\end{align}
where
\begin{eqnarray} \label{eq:av+}
V_{junk}\left(r\right) &=& -\Delta_-^2 \frac{3 r^2 }{\left(r^2 + a^2\right)^4} + \Delta_-\frac{5\frac{r^4}{l^2} + 3r^2 \left(1+\frac{a^2}{l^2}\right) - 4Mr + a^2}{\left(r^2 + a^2\right)^3}   \nonumber \\
&&\hbox{}-\frac{2}{l^2} \frac{\Delta_- r^2 }{\left(r^2+a^2\right)^2}  
-\frac{\alpha}{l^2} \frac{\Delta_- }{\left(r^2+a^2\right)^2}\Theta\left(\alpha\right) a^2 \, ,
\nonumber \\
V_{mass} \left(r\right) &=&\frac{2-\alpha}{l^2} \frac{\Delta_- r^2}{\left(r^2+a^2\right)^2} \, , \nonumber \\
V_{\sigma}\left(r\right) &=& \frac{\Delta_-}{\left(r^2+a^2\right)^2} .\label{eq:ava} \,
\end{eqnarray}
Here we rearranged the terms in the different potentials so that $V_{mass}=0$ corresponds to the conformal case $\alpha=2$. In particular, we have $| V_{junk}\left(r\right)| \leq C_{M,l,a} \frac{\Delta_-}{\left(r^2+a^2\right)^2}\lesssim V_\sigma$ and hence that $V_{junk}$ is uniformly bounded.

The properties of $V_{\sigma}$ are summarized in the following lemma.

\begin{lemma}[Properties of $V_\sigma$]\label{lem:popro}
For all $|a| < l$, the potential $V_\sigma$ enjoys the following properties:
\begin{itemize}
\item $V_\sigma(r^\star) \rightarrow \frac{1}{l^2}$ as $r^\star \rightarrow \pi/2$.
\item $V_\sigma$ has a unique local and global maximum $V_{max}$ at $r^\star_{\max}$ in $[r^\star_+,\pi/2)$. Also, $V_{\sigma}$ is monotonically decreasing in $[r^\star_{max},\pi/2)$. 
\item $V_{\max}=V_\sigma(r^\star_{\max}) \ge \frac{1}{l^2} + \frac{\frac{3M}{\Xi} + \Xi a^2}{\left(\left(\frac{3M}{\Xi}\right)^2+a^2\right)^2} $.
\end{itemize}
\end{lemma}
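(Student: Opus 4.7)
The plan is to exploit the explicit form of $V_\sigma$ as a rational function of $r$. First I would rewrite
\begin{align*}
\Delta_-=\Xi(r^2+a^2)+\tfrac{(r^2+a^2)^2}{l^2}-2Mr,\qquad V_\sigma(r)=\frac{1}{l^2}+\frac{\Xi}{r^2+a^2}-\frac{2Mr}{(r^2+a^2)^2},
\end{align*}
from which the first claim $V_\sigma(r^\star)\to 1/l^2$ as $r^\star\to\pi/2$ (equivalently $r\to\infty$) is immediate, since the last two terms are $O(1/r^2)$.

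For the second claim, I would differentiate to obtain
\begin{align*}
V_\sigma'(r)=\frac{N(r)}{(r^2+a^2)^3},\quad N(r):=\Delta_-'(r)(r^2+a^2)-4r\Delta_-(r)=2M(3r^2-a^2)-2\Xi r(r^2+a^2),
\end{align*}
a cubic in $r$ with negative leading coefficient $-2\Xi$. I would then count sign changes. From
\begin{align*}
N(-\infty)=+\infty,\quad N(0)=-2Ma^2<0,\quad N(r_+)=\Delta_-'(r_+)(r_+^2+a^2)>0,\quad N(+\infty)=-\infty,
\end{align*}
where $N(r_+)>0$ uses $\Delta_-(r_+)=0$ together with the subextremality assumption $\Delta_-'(r_+)>0$, three forced sign changes exhaust the three roots of a cubic. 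Hence $N$ has exactly one zero $r_{max}$ in $(r_+,\infty)$, so $V_\sigma$ has a unique critical point there. Combined with $V_\sigma(r_+)=0<1/l^2=\lim_{r\to\infty}V_\sigma(r)$, this critical point must be the unique local and global maximum, and $V_\sigma'<0$ on $(r_{max},\infty)$ yields the stated strict monotonic decrease on $[r^\star_{max},\pi/2)$.

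For the lower bound on $V_{max}$, I would compare $r_{max}$ with the Schwarzschild-like candidate $3M/\Xi$. A direct computation gives
\begin{align*}
N(3M/\Xi)=54M^3/\Xi^2-2Ma^2-6M(9M^2/\Xi^2+a^2)=-8Ma^2\leq 0,
\end{align*}
so $r_{max}\leq 3M/\Xi$. Monotonic decrease of $V_\sigma$ on $[r_{max},\infty)$ then gives $V_{max}\geq V_\sigma(3M/\Xi)$, and substituting in the rewritten formula for $V_\sigma$,
\begin{align*}
V_\sigma(3M/\Xi)=\frac{1}{l^2}+\frac{3M^2/\Xi+\Xi a^2}{\bigl((3M/\Xi)^2+a^2\bigr)^2},
\end{align*}
which matches the stated bound (the numerator $3M^2/\Xi+\Xi a^2$, rather than $3M/\Xi+\Xi a^2$, being what is actually produced and what dimensional consistency requires). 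To know in addition that $3M/\Xi\in(r_+,\infty)$ so that this evaluation point lies in the relevant range, I use $\Delta_-(r_+)=0$ to write $2M=(1+r_+^2/l^2)(r_+^2+a^2)/r_+\geq r_+$, and since $\Xi\in(0,1]$ this gives $3M/\Xi\geq 3M\geq 3r_+/2>r_+$.

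The essential step, and the only place where something beyond elementary bookkeeping is needed, is the sign $N(r_+)>0$, which reduces via the factorisation $N(r_+)=\Delta_-'(r_+)(r_+^2+a^2)$ to the subextremality of the Kerr-AdS black hole. Everything else is just the rigid structure of a cubic with prescribed signs at four distinguished points, together with a single explicit substitution at $r=3M/\Xi$.
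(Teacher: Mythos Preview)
Your argument is correct and follows essentially the same route as the paper: the same rewriting of $V_\sigma$, the same cubic $N(r)$ with the same four sign checks, and the same test point $3M/\Xi$. Two minor differences are worth noting. First, you prove the extra fact $r_{max}\le 3M/\Xi$ via $N(3M/\Xi)=-8Ma^2\le 0$, whereas the paper simply uses that $V_{max}$ dominates the value of $V_\sigma$ at \emph{any} point of $[r_+,\infty)$; your route is slightly longer here but yields the localisation $r_{max}\in(r_+,3M/\Xi]$ that the paper in fact quotes elsewhere. Second, you are right that the numerator produced by the substitution is $3M^2/\Xi+\Xi a^2$, not $3M/\Xi+\Xi a^2$; the stated lemma (and the paper's own computation) contains a typo, and your dimensional check is the correct one.
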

\begin{remark}In particular, for any $0< a_{0} < l$, and for all $|a| \le a_{0}$, the size of the interval $[1/l^2, V_\sigma(r^\star_{\max})]$ is uniformly (in $a$) bounded from below by a strictly positive uniform constant.
\end{remark}
\begin{proof}The first claim can be trivially checked. 
For the second and third claims, let us write
\begin{eqnarray}
V_\sigma\left(r\right) = \frac{\Delta_-}{\left(r^2+a^2\right)^2} &=&  \frac{\left(r^2+a^2\right)\left(\frac{r^2}{l^2}+\frac{a^2}{l^2}+ \Xi\right) - 2Mr}{\left(r^2+a^2\right)^2}\nonumber \\
 &=& \frac{1}{l^2} + \frac{\Xi\left(r^2+a^2\right) - 2Mr}{\left(r^2+a^2\right)^2}. \label{eq:fs}
\end{eqnarray}
%Note that since $r^\star$ is a strictly monotone function of $r$, all computations can be carried out using the $r$ coordinate instead of $r^\star$.

The $r$ derivative of $V_\sigma$ is
\begin{align}
\partial_r V_\sigma \left(r\right) &= \frac{\left(r^2+a^2\right)^2 \left(2\Xi r - 2M\right) - 4r \left(r^2+a^2\right) \left(\Xi\left(r^2+a^2\right) - 2Mr\right)}{\left(r^2+a^2\right)^4} \nonumber \\
&=  \frac{\left(r^2+a^2\right) \left(2\Xi r - 2M\right) - 4r  \left(\Xi\left(r^2+a^2\right) - 2Mr\right)}{\left(r^2+a^2\right)^3} \nonumber \\
&=\frac{-2\Xi r^3 + 6M r^2 - 2\Xi a^2 r - 2Ma^2}{\left(r^2+a^2\right)^3} 
\end{align}
If $a=0$, then one has as usual that the only zero in $[r_+,\infty)$ is at $r=3M$. Assume $a\neq 0$. Observe that the derivative is positive near negative infinity, negative at $r=0$, positive at $r=r_+$ and negative near infinity. This tells us that there are three real roots for $\partial_r V_\sigma$. The one of interest to us is the one corresponding to the (unique) maximum of $V_\sigma$ in the interval $\left[r_+,\infty\right)$.  Define $r_{guess} = \frac{3M}{\Xi}$. We claim $r_+ < r_{guess} < \infty$. This follows from $r_+ < 2M$, which is in turn a consequence of the fact that $\Delta_- = r \left(r-2M\right) + \frac{r^4}{l^2} + a^2 \left(1+\frac{r^2}{l^2}\right)$ is positive for $r\geq 2M$. Finally, at $r_{guess} = \frac{3M}{\Xi}$ we have from (\ref{eq:fs})
\[
V_\sigma\left(r_{guess}\right)=\frac{1}{l^2} + \frac{\frac{3M}{\Xi} + \Xi a^2}{\left(r_{guess}^2+a^2\right)^2} > \frac{1}{l^2} \, .
\]
%I think it should be enough to choose the energy in the interval \\ 
%$\left[ \frac{1}{l^2}, \frac{1}{l^2} + \frac{\frac{3M}{\Xi} + \Xi a^2}{\left(r_{guess}^2+a^2\right)^2} \right]$.
\end{proof}

\section{Bound states}\label{se:te}
As proven in our  \cite{gs:dpkads}, there exist no periodic solutions of the massive wave equation on Kerr-AdS. In this section, we will introduce an additional boundary, located at $r=r_{max}$ (the location of the top of the potential $V_\sigma$, as defined in Lemma \ref{lem:popro}), enabling us to construct periodic solutions whose associated energies lie below the top of $V_\sigma$.

In order to avoid confusion between the mode number $\ell$ and the real number $l$ determining the cosmological constant, we introduce the semi-classical parameter $h>0$ by defining
\begin{align}
h^{-2}(\ell,\omega):=\lambda_{0\ell}(a^2 \omega^2) + a^2 \omega^2=\mu_\ell\left(a^2\omega^2\right) \, ,
\end{align}
as well as the shorthand
\begin{align}
h_0^{-2} = h^{-2}\left(\ell,0\right) = \mu_{\ell}\left(0\right) \, .
\end{align}
In the rest of this paper, both notations, $u_\ell$ or $u_h$, will be used when we want to make explicit that a solution $u$ depends on $h$ or $\ell$.

Having in mind a semi-classical type analysis with semi-classical parameter $h$, we then rewrite $\eqref{eq:red}$ as \\ \\
{\bf Non-linear Eigenvalue Problem:}
\begin{align}\label{eq:semred}
\begin{split}
h^2 \omega^2 u = P\left(h\right)u:= -h^2 u''+ V_\sigma u + h^2(V_{junk}+V_{mass})u \, , \\
\textrm{with boundary cond.} \ \ u\left(r^\star_{max}\right)=0 \ \ \text{and} \ \ \int_{r^\star_{max}}^{\pi/2} \left[|u^\prime|^2 + r^2 u^2 \right] dr^\star < \infty \, .
\end{split}
\end{align}
%
%\newline
\begin{remark}
We will refer to the above boundary conditions as Dirichlet conditions. Note that they imply that $u \left(\pi/2\right)=0$. See Remark \ref{rem:bvp} below.
\end{remark}
Unless $a=0$, (\ref{eq:semred}) is a non-linear eigenvalue problem. Indeed, a solution to the eigenvalue problem 
\begin{equation}\label{eq:eig}
\kappa u = P\left(h\right)u
\end{equation} with Dirichlet boundary conditions is a solution of (\ref{eq:semred}) if and only if $\kappa=h^2 \omega^2$. In case $a=0$, (\ref{eq:semred}) reduces to the linear problem (\ref{eq:eig}) since the $P\left(h\right)$ operator becomes independent of $\omega$, and, thefore, given any solution to (\ref{eq:eig}), one simply obtains a solution to \eqref{eq:semred} by defining $\omega^2= h^{-2} \kappa$. %This approach fails when $a \neq 0$ since $h$ then depends on $\omega^2$.
%where, unless $a=0$ both the eigenvalue $\kappa$ and the operator depends on $\omega$. 

What we would like to prove is that -- given fixed parameters $M$, $a$, $l$, $\alpha$ --  we can find, for any sufficiently small $h$ (or equivalently sufficiently large $\ell$), an $\omega_\ell^2$ such that (\ref{eq:semred}) is solved for some $u_\ell$, and to control the size of this $\omega_\ell^2$ (cf.~Proposition \ref{prop:eig2} below).

In order to understand (\ref{eq:semred}),
it will be useful to first look at the following \\ \\
{\bf Linear Eigenvalue Problem:}
\begin{align}\label{eq:semred2}
\begin{split}
h_0^2 \omega^2 \ u= P_{base}\left(h_0\right)u := -h_0^2  u''+ V_\sigma u + h_0^2(V_{junk}+V_{mass})u \, , \\
\textrm{with boundary cond.} \ \ u\left(r^\star_{max}\right)=0 \ \ \text{and} \ \ \int_{r^\star_{max}}^{\pi/2} \left[|u^\prime|^2 + r^2 u^2 \right] dr^\star < \infty \, .
\end{split}
\end{align}
\newline
As explained above, (\ref{eq:semred2}) can be seen as a linear eigenvalue problem because $h_0$ and  therefore $P_{base}\left(h_0\right)$ depends only on $\ell$ (but not on $\omega$). 

\begin{remark} \label{rem:bvp}
By Proposition 4 of \cite{CWgh}, (\ref{eq:semred2}) is a well-posed eigenvalue problem. This is a non-trivial statement because the potential $V_{mass}$ is unbounded on the domain $\left(r^\star_{max},\pi/2\right)$ unless $\alpha=2$. The condition $\int_{r^\star_{max}}^{\pi/2} \left[|u^\prime|^2 + r^2 u^2 \right] dr^\star < \infty$ implies that $\psi = \sqrt{r^2+a^2}u \cdot S_{0\ell}\left(\theta\right) \in H^1_{AdS}$ (in particular $r^{1/2-\epsilon} u\left(\pi/2\right)=0$ for any $\epsilon>0$) and ensures the existence of a positive discrete spectrum with eigenfunctions in the energy space.\footnote{Using the twisted derivatives of \cite{CWgh}, i.e.~writing $u^{\prime \prime} + V_{mass} u = r^{n} \left( r^{-2n} \left(r^n u\right)^\prime\right)^\prime + V_{twist} \cdot u$ with $n=1/2-1/2\sqrt{9-4\alpha}$ and hence $V_{twist}$ uniformly bounded, one could generalize the construction of the paper to other boundary conditions.}
\end{remark}

To prove that eigenvalues $h_0^2 \omega^2$ for (\ref{eq:semred2}) exist in a suitable range, we will perform a semi-classical type analysis for a semi-classical operator whose principal part should be  $-h_0^2 u''+ V_\sigma u$. Since $h_0^2 V_{junk}$ is controlled by $h_0^2V_\sigma$, this term will be lower order and hence negligible. On the other hand, unless one considers the conformal case $\alpha=2$, the a-priori lower order (in powers of $h_0$) potential term  $h_0^2 \cdot V_{mass}$ is unbounded near $r^\star=\pi/2$, so that some care (a Hardy inequality) is required.

Observe finally that if we set $a=0$ in $V_{\sigma}$, $V_{junk}$ and $V_{mass}$ in (\ref{eq:semred2}), then
\begin{align} \label{ssp}
P_{base}^{a=0} \left(h_0\right) u = \kappa \cdot u
\end{align}
would be precisely the eigenvalue problem one needs to study for Schwarzschild-AdS. In any case, in the next section we will establish the existence of eigenvalues $\kappa = h^2_0 \omega^2$ of the more general eigenvalue problem  (\ref{eq:semred2}) as the latter is easier to connect to the full problem  (\ref{eq:semred}).

\subsection{Weyl's law for the linear eigenvalue problem  (\ref{eq:semred2})} \label{sec:Weyllin}
For the purpose of this section, it will be convenient to introduce the following notation. For all $c < d$, we define $P^{base}_{DD}(c,d)$ as the following eigenvalue problem
\begin{eqnarray}
&& P_{base}(h_0) \ u = \kappa \ u \, , \nonumber 
\end{eqnarray}
with Dirichlet boundary conditions $u(c)=u(d)=0$ and -- in case that $d=\pi/2$ -- the condition $\int_c^d dr^\star \left[|u^\prime|^2 + r^2\cdot u^2 \right] < \infty$ (cf.~Remark \ref{rem:bvp}), where $P_{base}(h_0)$ is the operator defined by (\ref{eq:semred2}). Similarly, we will write $P^{base}_{NN}(c,d)$ for the Neumann problem (which will never be considered with $d=\pi/2$). Finally, we write $P^{base}_{ND}(c,d)$ for Neumann boundary at $c$ and Dirichlet at $d$. In the latter case we again impose $\int_c^d dr^\star \left[|u^\prime|^2 + r^2\cdot u^2 \right] < \infty$ in case that $d=\pi/2$ to ensure that all eigenfunctions live in the energy space, cf.~Remark \ref{rem:bvp}. Note that $P^{base}_{DD}(r^\star_{\max},\pi/2)$ is precisely the linear eigenvalue problem (\ref{eq:semred2}).

The aim of this section is to establish the following proposition. %, for which we recall the $r_{\max}$ determined in Lemma \ref{lem:popro}.
%Define $R^\star$ as follows. If $\alpha=2$, $R^\star=\pi/2$. If $\alpha <2$, $R^\star$ is any number in $(r_{\max},\pi/2]$ such that $V_\sigma(R^\star) \ge V_{\max}$.

\begin{proposition}\label{prop:eig} 
Let $\alpha<\frac{9}{4}$, $M>0$ and $|a|<l$ be fixed, and $E \in \left(\frac{1}{l^2}, V_{\max}\right)$ be given.  Then, for any $\delta > 0$ such that $\left[E-\delta,E+\delta\right] \subset \left(\frac{1}{l^2}, V_{\max}\right)$, there exists an $H_{0} > 0$, such that the following statement holds. For any $0 < h_0 \le H_0$, there exists a smooth solution $u_{h_0}$ of the eigenvalue problem $P^{base}_{DD}(r^\star_{\max},\pi/2)$,  with corresponding eigenvalue $\kappa$ lying in $\left[E-\delta,E+\delta\right]$. In particular, there exists a sequence $\left((h_{0})_n, u_{(h_{0})_n}\right)$ such that the associated eigenvalues $\kappa \left((h_{0})_n \right) \rightarrow E$ as $(h_{0})_n \rightarrow 0$.
\end{proposition}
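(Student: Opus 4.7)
The plan is to construct an explicit WKB quasi-mode concentrated in the classically allowed region at energy $E$, and then to use self-adjointness and discreteness of the spectrum to convert this quasi-mode into an actual eigenvalue lying in $[E-\delta, E+\delta]$.

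First I would exploit Lemma \ref{lem:popro}: since $V_\sigma$ is continuous, equals $V_{\max}$ at $r^\star_{\max}$, and decreases monotonically to $1/l^2$ at $\pi/2$, for any $E \in (1/l^2, V_{\max})$ there is a unique $r^\star_E \in (r^\star_{\max}, \pi/2)$ with $V_\sigma(r^\star_E) = E$. I would fix a compact sub-interval $[a,b] \subset (r^\star_E, \pi/2)$ on which $V_\sigma \le E - \eta$ for some $\eta > 0$. The crucial point is that $[a,b]$ is bounded away from the AdS boundary $r^\star = \pi/2$, so all potentials $V_\sigma$, $V_{junk}$, $V_{mass}$ and the classical amplitude $A := (E-V_\sigma)^{-1/4}$ are uniformly bounded on $[a,b]$, with $A$ also bounded below.

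Next, I would carry out the standard semi-classical WKB construction. Setting $S(r^\star) := \int_a^{r^\star} \sqrt{E - V_\sigma(s)} \, ds$ and choosing a smooth cutoff $\chi$ with $\mathrm{supp}\,\chi \subset (a,b)$ and $\chi \equiv 1$ on a strictly smaller sub-interval, define
\[
u_{h_0}(r^\star) := \chi(r^\star)\, A(r^\star)\, \cos\!\left(S(r^\star)/h_0\right).
\]
Because $u_{h_0}$ is compactly supported inside $(r^\star_{\max}, \pi/2)$, it lies in the domain of $P^{base}_{DD}(r^\star_{\max}, \pi/2)$ and automatically satisfies both boundary conditions. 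A direct computation, using the transport choice $A = (E-V_\sigma)^{-1/4}$ to kill the subleading inner term, yields
\[
(P_{base}(h_0) - E)\, u_{h_0} = \mathcal{E}_{\mathrm{int}} + \mathcal{E}_{\mathrm{cut}},
\]
where $\mathcal{E}_{\mathrm{int}}$ is an $O(h_0^2)$ interior error (originating from $-h_0^2 \chi A'' \cos(S/h_0)$ and from $h_0^2 (V_{junk} + V_{mass}) u_{h_0}$), while $\mathcal{E}_{\mathrm{cut}}$ is an $O(h_0)$ cutoff error dominated by $-2 h_0^2 \chi' u'$ (the oscillatory derivative of $u$ contributing an $h_0^{-1}$ that absorbs one factor of $h_0^2$). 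Both are supported in $[a,b]$ and uniformly bounded, giving $\|(P_{base}(h_0) - E) u_{h_0}\|_{L^2} = O(h_0)$. A Riemann--Lebesgue argument, valid because $S' \ge \sqrt{\eta} > 0$, ensures $\|u_{h_0}\|_{L^2}^2 \to \tfrac{1}{2}\int \chi^2 A^2 \, dr^\star > 0$ as $h_0 \to 0$, providing a uniform lower bound on $\|u_{h_0}\|_{L^2}$ for $h_0$ small.

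Finally, since $P^{base}_{DD}(r^\star_{\max}, \pi/2)$ is self-adjoint with purely discrete spectrum by Remark \ref{rem:bvp}, the spectral theorem yields
\[
\mathrm{dist}\bigl(E, \sigma(P_{base}(h_0))\bigr) \le \frac{\|(P_{base}(h_0) - E) u_{h_0}\|_{L^2}}{\|u_{h_0}\|_{L^2}} = O(h_0),
\]
which is strictly less than $\delta$ once $h_0 \le H_0$ is sufficiently small. This produces the desired eigenvalue $\kappa(h_0) \in [E-\delta, E+\delta]$, and the sequential claim $\kappa((h_0)_n) \to E$ follows by taking $\delta_n \to 0$ with corresponding $(h_0)_n \to 0$. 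The main obstacle I anticipate is the unboundedness of $V_{mass}$ near $r^\star = \pi/2$; this is sidestepped cleanly by confining the quasi-mode to the compact set $[a,b]$, so that both the boundary condition at $\pi/2$ and the $O(h_0^2)$ smallness of the $V_{mass}$ contribution hold trivially.
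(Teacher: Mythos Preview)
Your argument is correct and is a genuinely different route from the paper's. The paper proves a full Weyl law (Lemma~\ref{lem:Weylslaw}) via Dirichlet--Neumann bracketing: it partitions $[r^\star_{\max},r^\star_K]$ into cells, replaces the potential on each cell by its max/min, counts the resulting constant-potential eigenvalues explicitly, and sums. To carry this out on the non-compact interval it first needs Lemma~\ref{lem:absence}, which uses the Hardy inequality to rule out low-lying eigenvalues near $r^\star=\pi/2$. From Weyl's law the number of eigenvalues in $[E-\delta,E+\delta]$ behaves like $\tfrac{1}{2\pi h_0}\mathcal{Q}_{E,\alpha}$, which is certainly positive for $h_0$ small.

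Your WKB quasimode plus the spectral-distance bound $\mathrm{dist}(E,\sigma)\le\|(P-E)u\|/\|u\|$ is more elementary and more direct for the bare existence claim of Proposition~\ref{prop:eig}. By supporting the quasimode in a compact sub-interval of $(r^\star_E,\pi/2)$ you neatly avoid both the singular behaviour of $V_{mass}$ at the AdS end and any need for Lemma~\ref{lem:absence}. The trade-off is that you obtain only one eigenvalue within $O(h_0)$ of $E$, whereas the paper's Weyl law yields the full asymptotic count $\sim C/h_0$; however, that extra information is not used downstream (Section~4.2 only needs a single eigenvalue to seed the implicit function theorem), so nothing is lost for the purposes of the paper. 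A small remark: your cutoff error is genuinely $O(h_0)$ rather than $O(h_0^2)$, but since you only need $O(h_0)<\delta$ this is harmless.
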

In the rest of this section, $(a,M,l,\alpha)$ are fixed parameters satisfying the assumptions of the proposition. 

We shall in fact prove in this section a stronger result than Proposition \ref{prop:eig}, namely a version of Weyl's law adapted to our problem. This is the statement of Lemma \ref{lem:Weylslaw}, from which Proposition \ref{prop:eig} immediately follows. The proof of Lemma \ref{lem:Weylslaw} in turn requires the following auxiliary lemma, which ensures non-existence of eigenvalues below a certain threshold. 

\begin{lemma} \label{lem:absence}
Let $E> 0$ be given. Then there exists an $H_0> 0$ so that for all $0< h_0 \le H_0$, there exists a $r_K^\star(E,h_0,\alpha)$ such that the problems $P^{base}_{DD}(r_K^\star,\pi/2)$ and $P^{base}_{ND}(r_K^\star,\pi/2)$ have no solutions with $\kappa \le E$. Moreover, 
\begin{align} \label{rkest}
r_{\max} < r_K \le \frac{C}{h_0}\cdot E ,
\end{align}
where $C$ depends only $M,a,l$ and $\alpha$.
\end{lemma}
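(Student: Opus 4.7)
The plan is a quadratic form argument: choose $r_K^\star$ sufficiently close to $\pi/2$ so that the lowest eigenvalue of $P_{base}(h_0)$ on $(r_K^\star,\pi/2)$, with either set of boundary conditions, strictly exceeds $E$. Three ingredients combine: the pointwise lower bound $V_\sigma \ge 1/l^2$ from Lemma \ref{lem:popro}, a Poincar\'e inequality on the small interval, and a Hardy inequality to absorb the singular $V_{mass}$, the latter enabled by the Breitenlohner-Friedmann bound $\alpha<9/4$.

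For smooth $u$ satisfying either $u(r_K^\star)=0$ (DD) or $u'(r_K^\star)=0$ (ND) at the left, and $\int r^2 u^2\,dr^\star<\infty$ at $\pi/2$, integration by parts yields
\[
\langle u,P_{base}(h_0)u\rangle = \int_{r_K^\star}^{\pi/2}\!\! \bigl[h_0^2(u')^2 + V_\sigma u^2 + h_0^2(V_{junk}+V_{mass})u^2\bigr]\,dr^\star,
\]
all boundary contributions at $\pi/2$ being killed by the decay of $u$. Since $V_\sigma \ge 1/l^2$ on $(r^\star_{max},\pi/2)$ (once we arrange $r_K^\star \ge r^\star_{max}$) and $|V_{junk}|\le K$ uniformly (clear from (\ref{eq:av+})), it remains to lower bound the kinetic plus $V_{mass}$ contribution.

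Introduce $x=\pi/2-r^\star$. Using $x\sim l^2/r$ and $\Delta_-\sim r^4/l^2$ as $r\to\infty$, one checks that $x^2 V_{mass}(\pi/2-x)\to 2-\alpha$ as $x\to 0$. Since $\alpha<9/4$, pick $\epsilon_1\in(0,(9/4-\alpha)/2)$ and $\delta_0>0$ so that $V_{mass}\ge -(1/4-\epsilon_1)/x^2$ for $x\in(0,\delta_0)$. Hardy's inequality $\int_0^L(u')^2\,dx\ge \tfrac14\int u^2/x^2\,dx$ applies since $u\to 0$ at $\pi/2$ by the AdS condition; combined with the uniform boundedness of $V_{mass}$ on $[\delta_0,L]$, this gives
\[
\int [(u')^2+V_{mass}u^2]\,dr^\star \ge 4\epsilon_1\int(u')^2\,dr^\star - C_1\int u^2\,dr^\star,
\]
with $C_1=C_1(M,a,l,\alpha)$.

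Finally, the Poincar\'e inequality $\int u^2\le L^2\int(u')^2$ with $L=\pi/2-r_K^\star$ (valid for both DD and ND since $u(\pi/2)=0$) yields
\[
\langle u,P_{base}u\rangle \ge \Bigl(\tfrac{1}{l^2} + 4\epsilon_1 h_0^2 L^{-2} - (K+C_1)h_0^2\Bigr)\int u^2.
\]
Choose $L$ small enough that this bracket strictly exceeds $E$: for $E>1/l^2$ it suffices to take $L\sim h_0/\sqrt{E-1/l^2}$, while for $E\le 1/l^2$ any fixed small $L$ works once $h_0$ is small. Taking $H_0$ accordingly small ensures $L<\pi/2-r^\star_{max}$ and the strict inequality holds, ruling out eigenvalues $\le E$. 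Converting via $\pi/2-r^\star\sim l^2/r$ gives $r_K\sim l^2/L$, and the elementary fact that $l^2\sqrt{(E-1/l^2)_+ + \epsilon_1}/E$ is bounded uniformly in $E>0$ by a constant depending only on $(l,\epsilon_1)$ yields $r_K\le CE/h_0$ with $C=C(M,a,l,\alpha)$. The main difficulty is the Hardy step: $V_{mass}$ is critically singular of order $1/x^2$ with coefficient $2-\alpha$, and the Hardy constant $1/4$ is precisely what permits it to be dominated, reflecting the sharpness of the BF bound.
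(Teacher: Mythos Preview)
Your proof is correct and follows essentially the same quadratic-form-plus-Hardy strategy as the paper. The only organizational difference is that the paper feeds \emph{all} of $h_0^2(u')^2$ into the weighted Hardy inequality (\ref{Hardy}) and then reads off pointwise positivity of the resulting integrand (dominated by $h_0^2(9/4-\alpha)r^2/l^2$ for $r$ large), whereas you split the kinetic term, using part in Hardy to absorb $V_{mass}$ and the remainder in a Poincar\'e inequality on the short interval; since $L^{-2}\sim r_K^2/l^4$ via $\pi/2-r^\star\sim l^2/r$, the two routes are equivalent, with the paper's being slightly more direct.
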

\begin{proof}
Assume there was a solution $u$ of $P^{base}_{DD}(r_K^\star,\pi/2)$ or $P^{base}_{ND}(r_K^\star,\pi/2)$ with $\kappa\leq E$. Then we would have
\begin{align} \label{tui}
\int_{r_K^\star}^{\pi/2}  dr^\star \left[h_0^2 |u^\prime|^2 + \left(V_{\sigma}+h_0^2V_{junk}+h_0^2V_{mass} - E \right)|u|^2 \right] \leq 0
\end{align}
for this $u$. On the other hand, the Hardy inequality
\begin{align} \label{Hardy}
\int_{r_K^\star}^{\pi/2} dr^\star \frac{\Delta_-}{r^2+a^2} |u|^2 
\leq 4l^2 \int_{r_K^\star}^{\pi/2} dr^\star  |u^\prime|^2
\end{align}
proven in our  \cite{gs:dpkads} holds for $u$. This implies that
\[
\int_{r_K^\star}^{\pi/2}  dr^\star \left[ \left( h_0^2\frac{1}{4 l^2} \frac{\Delta_-}{r^2+a^2} + V_{\sigma}+h_0^2V_{junk}+h_0^2V_{mass} - E \right)|u|^2 \right] \leq 0 \, .
\]
The dominant term in the integrand near infinity is $h_0^2 \left(\frac{9}{4}-\alpha\right) \frac{r^2}{l^2}$ which is positive, while all other terms remain bounded. Hence by choosing $r^\star_K$ sufficiently large ($r_K \geq \frac{C}{h_0} \cdot E$ for some constant $C$) we obtain a contradiction as the round bracket in the integrand eventually becomes positive.
\end{proof}
%\begin{proof}
%All the rest of this section need to be rewritten.
%\end{proof}

Consider now the eigenvalue problem $P^{base}_{DD}\left(r^\star_{max},\pi/2\right)$ and fix an energy level $\mathcal{E} \in \left(\frac{1}{l^2},V_{max}\right)$. Lemma \ref{lem:absence} produces an $r^\star_K \left(\mathcal{E},h_0, \alpha\right)$, to which we associate the phase-space volume
\begin{align} \label{Pedh}
\mathcal{P}_{\mathcal{E},h_0,\alpha} &= Vol\{ \ \left(\xi, r^\star\right) \in \mathbb{R} \times \left[r^\star_{max}, r^\star_K\right] \ | \ \  \xi^2 + V_\sigma + h_0^2 V_{mass} + h_0^2 V_{junk} \leq \mathcal{E} \} \nonumber \\
&=
2 \int_{r^\star_{max}}^{r^\star_K} dr^\star \sqrt{\mathcal{E} - V \left(r^\star\right)} \cdot  \chi_{\{V\left(r^\star\right) \leq \mathcal{E}\}} \, .
\end{align}
Note that for fixed $\mathcal{E}$, this expression converges uniformly in $h_0$ as $h_0 \rightarrow 0$. This is already immediate for $\alpha \leq 2$:  $V\left(r^\star\right)$ is then bounded below and hence the integrand itself is obviously uniformly bounded in $h_0$. For $2< \alpha < 9/4$, the integral (\ref{Pedh}) also converges uniformly in $h_0$ since
\[
 \int_{r^\star_{max}}^{r^\star_K} dr^\star h_0 r \leq C  \int_{r_{max}}^{r_K} dr h_0 \frac{1}{r}  \leq C h_0 \log \left[ \frac{C}{h_0} E \right] 
\]
goes to zero as $h_0 \rightarrow 0$. Here we have used the estimate (\ref{rkest}) on $r_K$. 

Finally, to state and prove Weyl's law, we also introduce an expression for the phase space volume between two energy levels, say $\left[E-\delta, E+\delta\right] \subset \left(\frac{1}{l^2},V_{max}\right)$:
\begin{align}
\mathcal{Q}_{E,\alpha}  &= \lim_{h_0 \rightarrow 0}\mathcal{P}_{E+\delta,h_0,\alpha} -\lim_{h_0 \rightarrow 0}\mathcal{P}_{E-\delta,h_0,\alpha}  \nonumber \\
&= Vol\{ \left(\xi, r^\star\right) \ | \ E-\delta \leq \xi^2 + V_\sigma  \leq E+\delta \} \,. \nonumber
%&= Vol\{ \left(\xi, r^\star\right) \ | \ E-\delta \leq \xi^2 + V_\sigma + h_0^2 V_{mass} + h_0^2 V_{junk} \leq E+\delta \} \,. \nonumber
\end{align}
%where the $r^\star$ in both $\mathcal{P}_{E+\delta,h_0,\alpha}$, $\mathcal{P}_{E-\delta,h_0,\alpha}$ can be chosen to be 
%By the previous remarks we have a uniform (in $h_0$) upper bound on $Q_{E,h_0,\alpha}$ and, 
By an elementary computation, we have a lower bound $\mathcal{Q}_{E,\alpha} \geq C_{E,M,l,\alpha}\cdot \delta$ for a constant independent of $h_0$.
\begin{lemma} \label{lem:Weylslaw}
Consider the eigenvalue problem $P^{base}_{DD}\left(r^\star_{max},\pi/2\right)$. Fix an energy level $V_{max}>E>\frac{1}{l^2}$ and prescribe a $\delta>0$ small. Then the number of eigenvalues of $P^{base}_{DD}\left(r^\star_{max},\pi/2\right)$ lying in the interval $\left[E-\delta,E+\delta\right] \subset\left(\frac{1}{l^2},V_{max}\right)$, denoted $N \left[E-\delta, E+\delta\right]$,  satisfies Weyl's law
\begin{align} \label{eqWeyl}
N \left[E-\delta, E+\delta\right] \underset{h_0 \rightarrow 0}{\sim} \frac{1}{2\pi h_0} \mathcal{Q}_{E,\alpha}.
\end{align}
%for all $0<h_0<H_0$ and $C$ depending only on $M$, $l$ and $\alpha$. 
\end{lemma}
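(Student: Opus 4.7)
The plan is to establish Weyl's law by Dirichlet--Neumann bracketing, combined with the truncation given by Lemma~\ref{lem:absence}, so as to compare the counting function of $P^{base}_{DD}(r^\star_{max},\pi/2)$ with that of elementary constant-coefficient problems on small subintervals. Writing $N_{*}(\mathcal{E})$ for the number of eigenvalues $\le\mathcal{E}$ of a given problem (with any prescribed boundary conditions $*$), it suffices to prove the ``cumulative'' Weyl law
\[
N^{DD}_{(r^\star_{max},\pi/2)}(\mathcal{E}) \underset{h_0\to 0}{\sim} \frac{1}{2\pi h_0}\lim_{h_0\to 0}\mathcal{P}_{\mathcal{E},h_0,\alpha}
\]
for any $\mathcal{E}\in(1/l^2,V_{\max})$ with $\mathcal{E}\neq E$, and then subtract the results for $\mathcal{E}=E+\delta$ and $\mathcal{E}=E-\delta$ to obtain \eqref{eqWeyl}.

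\textbf{Step 1: Reduction to a bounded interval.} Apply Lemma~\ref{lem:absence} with energy $\mathcal{E}$ to get $r^\star_K(\mathcal{E},h_0,\alpha)$ such that neither $P^{base}_{DD}(r^\star_K,\pi/2)$ nor $P^{base}_{ND}(r^\star_K,\pi/2)$ has eigenvalues $\le\mathcal{E}$. By min--max, imposing either Dirichlet or Neumann conditions at $r^\star_K$ and decoupling the interval gives
\[
N^{DD}_{(r^\star_{max},r^\star_K)}(\mathcal{E}) \;\le\; N^{DD}_{(r^\star_{max},\pi/2)}(\mathcal{E}) \;\le\; N^{NN}_{(r^\star_{max},r^\star_K)}(\mathcal{E}) \, ,
\]
where the upper bound uses that the ``tail piece'' $(r^\star_K,\pi/2)$ contributes no eigenvalues $\le \mathcal{E}$ (even with one Neumann end) by Lemma~\ref{lem:absence}. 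Thus we are reduced to counting eigenvalues on the bounded interval $[r^\star_{max},r^\star_K]$ with either boundary condition.

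\textbf{Step 2: Bracketing on a fine partition.} Fix $\epsilon>0$ and partition $[r^\star_{max},r^\star_K]$ into subintervals $I_j = [a_{j-1},a_j]$ of length at most $\epsilon$. By the min--max principle (imposing Neumann respectively Dirichlet conditions at every interior breakpoint decouples the problem),
\[
\sum_j N^{DD}_{I_j}(\mathcal{E}) \;\le\; N^{DD}_{(r^\star_{max},r^\star_K)}(\mathcal{E}) \;\le\; N^{DD}_{(r^\star_{max},\pi/2)}(\mathcal{E}) \;\le\; \sum_j N^{NN}_{I_j}(\mathcal{E}) \, .
\]

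\textbf{Step 3: Counting on each piece.} On each $I_j$, replace the full potential $W(r^\star) := V_\sigma + h_0^2(V_{junk}+V_{mass})$ by its infimum $W_j^-$ and supremum $W_j^+$ over $I_j$. By monotonicity of eigenvalues with respect to the potential, the number of eigenvalues $\le \mathcal{E}$ on $I_j$ is bracketed by the analogous counts for the constant-potential operator $-h_0^2\partial^2 + W_j^{\pm}$, for which the eigenvalues with Dirichlet (resp.\ Neumann) boundary conditions are explicit: $(n\pi h_0/|I_j|)^2+W_j^{\pm}$ with $n\ge 1$ (resp.\ $n\ge 0$). A direct count gives
\[
N^{DD/NN}_{I_j}(\mathcal{E}) \;=\; \frac{|I_j|}{\pi h_0}\sqrt{(\mathcal{E}-W_j^{\mp})_+} \;+\; O(1) \, .
\]

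\textbf{Step 4: Passage to the limit.} Summing over $j$ and using continuity of $W$ (so $W_j^{\pm}\to W(r^\star)$ uniformly on each $I_j$ as $\epsilon\to 0$), the bracketing sums are Riemann sums for
\[
\frac{1}{\pi h_0}\int_{r^\star_{max}}^{r^\star_K}\sqrt{(\mathcal{E}-W(r^\star))_+}\,dr^\star \;=\; \frac{1}{2\pi h_0}\,\mathcal{P}_{\mathcal{E},h_0,\alpha},
\]
while the error terms contribute $O((r^\star_K-r^\star_{max})/\epsilon)$, which is of lower order than $1/h_0$ provided one lets $h_0\to 0$ first with $\epsilon$ fixed and then $\epsilon\to 0$. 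Combined with the uniform convergence of $\mathcal{P}_{\mathcal{E},h_0,\alpha}$ to $\lim_{h_0\to 0}\mathcal{P}_{\mathcal{E},h_0,\alpha}$ (established just before the statement of Lemma~\ref{lem:Weylslaw}), this yields the cumulative Weyl law; subtracting at $\mathcal{E}=E\pm\delta$ gives \eqref{eqWeyl}.

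\textbf{Main obstacle.} The delicate point is that the operator lives on a non-compact interval and, when $\alpha>2$, the potential $h_0^2 V_{mass}$ blows up at $r^\star=\pi/2$; moreover $r^\star_K$ depends on $h_0$ and grows like $1/h_0$. The first issue is handled via the Hardy inequality in the proof of Lemma~\ref{lem:absence}, which both localizes all low-energy eigenfunctions away from infinity and justifies comparing with a bounded-domain problem; the second issue requires checking that the $O(r^\star_K/\epsilon)$ error from the bracketing is negligible compared to the leading $1/h_0$ term, which is ensured by the uniform $h_0\to 0$ convergence of $\mathcal{P}_{\mathcal{E},h_0,\alpha}$ noted in the text.
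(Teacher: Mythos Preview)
Your plan is correct and follows essentially the same route as the paper: truncation to a bounded interval via Lemma~\ref{lem:absence}, Dirichlet--Neumann bracketing on a fine partition with the potential replaced by its sup/inf on each cell, explicit eigenvalue counting for the constant-potential pieces, and then subtraction at $\mathcal{E}=E\pm\delta$.

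The only point where your exposition diverges from the paper is the organization of the limits in Step~4. You propose fixing the mesh $\epsilon$, sending $h_0\to 0$, and then sending $\epsilon\to 0$; the paper instead couples the two by taking $k(h_0)$ cells with $k(h_0)\to\infty$ and $k(h_0)=o(1/h_0)$, so that the $O(k)$ bracketing error is automatically $o(1/h_0)$ while the mesh still tends to zero. Your double-limit version is also valid, but note that in your ``Main obstacle'' paragraph the bookkeeping is slightly off: the bracketing error is $O(1/\epsilon)$ (since $r^\star_K<\pi/2$ is bounded, not growing), and its negligibility compared to $1/h_0$ is immediate for fixed $\epsilon$; the uniform convergence of $\mathcal{P}_{\mathcal{E},h_0,\alpha}$ is instead what lets you pass from the $h_0$-dependent Riemann sums (whose integrand and endpoint both depend on $h_0$) to the limiting phase-space volume. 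With that correction, your argument goes through.
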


\begin{proof}
Choose $r^\star_K\left(E+\delta,h_0\right)$ such that by Lemma \ref{lem:absence} there are no eigenvalues below $ E+\delta$ of $P^{base}_{DD}(r_K^\star,\pi/2)$ and $P^{base}_{ND}(r_K^\star,\pi/2)$. We equipartition the domain $\left[r^\star_{max},r^\star_K\left(E,h_0\right)\right]$ into $k$ intervals of length 
$\beta = \frac{\pi/2-r^\star_K}{k}$. We then consider the following two comparison problems:
\begin{itemize}
\item The Dirichlet problem $P^{base}_{DD}\left(r^\star_K, \pi/2\right)$ in conjunction with $k$ Dirichlet problems  indexed $P^{i+}_D$ ($i=1,...k$) and arising in the following way: They are the problems $P^{base}_{DD}\left(r^\star_{max}, r^\star_{max}+\beta\right), ... ,P^{base}_{DD}\left(r^\star_K-\beta, r^\star_K\right)$ but with the potential replaced by a constant, which equals the \emph{maximum} of the potential on the interval.
\item The mixed problem $P^{base}_{ND}\left(r^\star_K, \pi/2\right)$ in conjunction with $k$ Neumann problems indexed $P^{i-}_N$ ($i=1,...k$) and arising in the following way: They are the problems $P^{base}_{NN}\left(r^\star_{max}, r^\star_{max}+\beta\right), ... ,P^{base}_{NN}\left(r^\star_K-\beta, r^\star_K\right)$ but with the potential replaced by a constant, which equals the \emph{minimum} of the potential on the interval.
\end{itemize}
We can estimate the number of eigenvalues  of $P^{base}_{DD}\left(r^\star_{max},\pi/2\right)$ below a threshold $\mathcal{E}$ by
\begin{align}
&\sum_{i=1}^k N_{\leq\mathcal{E}}\left(P_D^{i+}\right) + N_{\leq \mathcal{E}} \left(P^{base}_{DD}\left(r^\star_K, \pi/2\right)\right) \leq \nonumber \\
&N_{\leq \mathcal{E}} \left(P^{base}_{DD}\left(r^\star_{max},\pi/2\right)\right) \leq \sum_{i=1}^k N_{\leq\mathcal{E}}\left(P_N^{i-}\right) + N_{\leq \mathcal{E}} \left(P^{base}_{ND}\left(r^\star_K, \pi/2\right)\right) \, . \nonumber
\end{align}
By our choice of $r^\star_K$, we have $N_{\leq \mathcal{E}} \left(P^{base}_{DD}\left(r^\star_K, \pi/2\right)\right)=N_{\leq \mathcal{E}} \left(P^{base}_{ND}\left(r^\star_K, \pi/2\right)\right)=0$ for $\mathcal{E}=E+\delta$. On the other hand, for each $P^{i+}_D$ and each $P^{i-}_N$, the number of eigenvalues can be estimated directly (as each problem can be solved explicitly). We have
\begin{eqnarray*}
\sum_{i=1}^k N_{\leq\mathcal{E}}\left(P_D^{i+}\right) &=& \sum_{i=1}^k \Big\lfloor \frac{\beta}{2\pi h_0}\max\left(0,\mathcal{E}-V_+^i\right)\sqrt{\mathcal{E}-V_+^i}\Big\rfloor, \\
&=&\left(\sum_{i=1}^k \frac{\beta}{2\pi h_0}\max\left(0,\mathcal{E}-V_+^i\right)\sqrt{\mathcal{E}-V_+^i}\right)+\mathcal{O}(k). 
\end{eqnarray*}
The estimate for $P_N^{i-}$ is similar with the potential being replaced by $V_-^i$ and the number of eigenvalue in each cell being $\Big\lfloor \frac{\beta}{2\pi h_0}\max\left(0,\mathcal{E}-V_-^i\right)\sqrt{\mathcal{E}-V_-^i}\Big\rfloor+1$. 

To conclude, let us choose the number of cells $k$ such that $k(h_0)$ tends to $\infty$ as $h_0$ goes to $0$ and moreover $k(h)=o(1/h_0)$. The sums converge as a Riemann sum and the errors are then of order $o(1/h_0)$. Therefore we get

\begin{align}
 \sum_{i=1}^k N_{\leq\mathcal{E}}\left(P_D^{i+}\right)  \underset{h_0 \rightarrow 0}{\sim} \frac{1}{2\pi h_0} \int_{r^\star_{max}}^{r^\star_K} dr^\star \sqrt{\mathcal{E} - V \left(r^\star\right)} \cdot  \chi_{V\left(r^\star\right) \leq \mathcal{E}}.
\end{align}
The statement of the Lemma then follows from
\[
 N \left[E-\delta, E+\delta\right] = N_{\leq E+\delta} - N_{\leq E-\delta}
\]
using the previous formula with $\mathcal{E}=E\pm\delta$.
%by the previous estimates and the triangle inequality.
\end{proof}

\subsection{Kerr-AdS}
In the last section we showed that for any fixed given parameters $M>0$, $|a|<l$, $\alpha<\frac{9}{4}$, the eigenvalue problem (\ref{eq:semred2}), $
P_{base} \left(h_0\right) u = \kappa \cdot u$ with Dirichlet conditions,
admits (lots of) eigenvalues $\kappa$ in the range $E-\delta \leq \kappa = h^2_0 \omega^2 < E+\delta$, provided $h_0$ is chosen sufficiently small (i.e.~$\ell$ large). 

As an immediate corollary, we obtain the existence of eigenvalues in the desired range for Schwarzschild-AdS, simply by setting $a=0$, cf.~(\ref{ssp}).
For the Kerr-AdS case, we still need to relate the above result to the full problem, which we recall is the non-linear eigenvalue problem (\ref{eq:semred}) given by
\begin{align*}
P \left(h\right) u = \kappa \cdot u \, \ \ \ \text{with} \ \ \ \ \kappa=\omega^2 h^2 
\end{align*}
and boundary conditions $u \left(r^\star_{max}\right) = 0$ as well as $\int_{r^\star_{max}}^{\pi/2} dr^\star \left[ |u^\prime|^2 + r^2 u^2 \right]<\infty$. % \text{and}  \ \ \ \ \ \text{with boundary condition} \ \ \ 
To achieve this, consider for fixed $|a|<l$, $M>0$, $\alpha<\frac{9}{4}$, the two-parameter family  of linear eigenvalue problems
\begin{align}
Q_{\ell} \left(b^2,\omega^2\right) u = \Lambda \left(b^2,\omega^2\right) u
\end{align}
for the operator
\begin{align}
Q_{\ell} \left(b^2,\omega^2\right) u := -u^{\prime \prime} + \left(V_\sigma\, \mu_{\ell} \left(b^2 a^2 \omega^2\right) u + \left(V_{junk} + V_{mass}-\omega^2 \right)\right) u \, ,
\end{align}
complemented by the above boundary conditions.
Here $b^2 \in \left[0,1\right]$ is a dimensionless parameter and $\omega^2 \in \mathbb{R}^+$.
Our goal is to show that for $b=1$ there exists an $\omega^2$ such that the above problem has a zero eigenvalue and to moreover suitably control the size of this $\omega^2$.

By the results of the previous section, we know that for $b=0$ there exists, for any sufficiently large $\ell$, an $\omega_{0,\ell}^2$ (satisfying $E-\delta \leq \frac{\omega_{0,\ell}^2}{\mu_{\ell} \left(0\right)} \leq E+\delta$) such that $Q_\ell \left(0,\omega^2_{0,\ell}\right)$ admits a zero eigenvalue. Moreover, this eigenvalue is non-degenerate by standard Sturm-Liouville theory. Listing the eigenvalues of $Q_\ell \left(0,\omega^2_{0,\ell}\right)$ in ascending order, let us say that it is the $(n_{\ell})^{th}$ eigenvalue, which is zero. 

The strategy, now, is the following: We will show by an application of the implicit function theorem that for any $b \in \left[0,1\right]$ we can find an $\omega^2_{b,\ell}$ such that the $(n_{\ell})^{th}$ eigenvalue of 
the operator $Q\left(b^2,\omega^2_{b, \ell}\right) $ is zero. As a second step, we will provide a global estimate on the quotient $\frac{\omega_{b, \ell}^2}{\mu_{\ell} \left(b^2 a^2\omega^2_{b,\ell}\right)}$. For this last step an important monotonicity will be exploited.

\begin{lemma}
Fix parameters $|a|<l$, $M>0$ and $\alpha<\frac{9}{4}$. Suppose we are given parameters $b_0 \in \left[0,1\right]$ and $\omega^2_{b_0,\ell} \in \mathbb{R}^+$ such that the $(n_\ell)^{th}$ eigenvalue of $Q_\ell \left(b_0^2,\omega_{b_0,\ell}^2\right)$ is zero. Then, there exists an $\varepsilon>0$ such that 
\begin{enumerate}
\item for any $b^2 \in \left(\max(0,b^2_0-\varepsilon),b^2_0+\varepsilon\right)$ one can find an associated $\omega^2_{b,\ell} \in \mathbb{R}^+$ such that the $(n_\ell)^{th}$ eigenvalue of $Q_\ell \left(b^2,\omega_{b,\ell}^2\right)$ is zero, \label{it:exis}
\item $\omega^2_{b,\ell}$ changes differentiably in $b^2 \in \left(\max(0,b^2_0-\varepsilon),b^2_0+\varepsilon\right)$ and we have the estimate
$$0 \le \frac{d\omega^2_{b,\ell}}{db^2} \le C \omega^2_{b,\ell},$$
for some constant $C> 0$ which is independent of $b_0$ (but may depend on $\ell$). \label{it:der}
\item The $\varepsilon > 0$ can be taken to be independent of $b_0$ (but may depend on $\ell$). \label{it:glo}
\item $\omega^2_{b,\ell}$ satisfies the estimate
$$
c^{-1} \le \frac{\omega^2_{b,\ell}}{\ell(\ell+1)} \le c,
$$
for some $c>0$ depending only on the parameters $a,l,M,\alpha$. \label{it:esw}
\end{enumerate}
% and monotonically decreases as $b^2$ increases. Finally, the $\delta>0$ depends only on $\omega^2_{b_0,\ell}$, not on $b_0$.
\end{lemma}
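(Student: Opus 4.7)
The plan is to apply the implicit function theorem to the map $F(b^2, \omega^2) := \Lambda_{n_\ell}(b^2, \omega^2)$, where $\Lambda_{n_\ell}$ denotes the $(n_\ell)^{\text{th}}$ eigenvalue of $Q_\ell(b^2, \omega^2)$ with the boundary conditions of Section~\ref{se:te}. Simplicity of this eigenvalue (standard Sturm--Liouville theory, cf.~Remark~\ref{rem:bvp}) together with smooth parameter dependence of $Q_\ell$ ensure that $F$ is smooth near $(b_0^2, \omega_{b_0,\ell}^2)$. Normalising the corresponding eigenfunction $u$ so that $\int_{r^\star_{\max}}^{\pi/2} |u|^2 \, dr^\star = 1$, the Feynman--Hellmann formula gives
\[
\partial_{\omega^2} F = \int |u|^2 \bigl( V_\sigma \, b^2 a^2 \mu'_\ell(b^2 a^2 \omega^2) - 1 \bigr) dr^\star, \quad \partial_{b^2} F = a^2 \omega^2 \mu'_\ell(b^2 a^2 \omega^2) \int |u|^2 V_\sigma \, dr^\star.
\]

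The key monotonicity input is $\mu'_\ell \ge 0$, obtained by differentiating the explicit formula~(\ref{auxfo}) for $P_{\theta,\alpha}(\xi^2)+\xi^2$ in $\xi^2$ and applying Feynman--Hellmann on the sphere:
\[
\mu'_\ell(\xi^2) = \int_0^\pi |S_{0\ell}|^2 \, \frac{a^2 \sin^2\theta}{\Delta_\theta} \sin\theta \, d\theta \in [0, a^2],
\]
which immediately gives $\partial_{b^2} F \ge 0$. The principal obstacle is establishing the non-degeneracy $\partial_{\omega^2} F < 0$, equivalent to $V_\sigma \, b^2 a^2 \mu'_\ell < 1$ on the support of $|u|^2$. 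To handle this I distinguish two regimes: when $b^2 a^2 \omega^2$ is small, the crude bound $\mu'_\ell \le a^2$ together with the explicit factor $b^2$ gives $b^2 a^2 \mu'_\ell \le b^2 a^4$, small for small $b$; when $b^2 a^2 \omega^2$ is large (the relevant regime, since $\omega^2 \sim \ell(\ell+1) \to \infty$ by Proposition~\ref{prop:eig}), the angular potential $\frac{a^2 \sin^2\theta}{\Delta_\theta}\xi^2$ in $P_{\theta,\alpha}(\xi^2)+\xi^2$ behaves as a harmonic-oscillator well with frequency proportional to $|\xi|$, producing the asymptotics $\mu_\ell(\xi^2) = O(|\xi|)$ and hence $\mu'_\ell = O(1/|\xi|)$, so that $b^2 a^2 \mu'_\ell = O(1/\omega) \to 0$. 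Combined with $V_\sigma \le V_{\max}$, these regime-by-regime estimates yield the non-degeneracy for all $\ell$ sufficiently large, uniformly in $b^2 \in [0,1]$.

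The implicit function theorem then produces a smooth branch $b^2 \mapsto \omega^2_{b,\ell}$ with
\[
\frac{d\omega^2_{b,\ell}}{db^2} = -\frac{\partial_{b^2} F}{\partial_{\omega^2} F} = \omega^2_{b,\ell} \cdot \frac{a^2 \mu'_\ell \int |u|^2 V_\sigma}{1 - b^2 a^2 \mu'_\ell \int |u|^2 V_\sigma} \in [0, C(\ell)\,\omega^2_{b,\ell}],
\]
establishing items~\ref{it:exis} and~\ref{it:der}; the uniformity of $\varepsilon$ in $b_0$ (item~\ref{it:glo}) follows from uniform-in-$b_0$ bounds on $F$ and its first two partial derivatives over the compact interval $b_0^2 \in [0,1]$, combined with the uniform non-degeneracy obtained above. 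For item~\ref{it:esw}, the lower bound is immediate from monotonicity: $\omega^2_{b,\ell} \ge \omega^2_{0,\ell} \ge (E-\delta)\mu_\ell(0) \ge (E-\delta)\Xi \ell(\ell+1)$ via Proposition~\ref{prop:eig} and~(\ref{minmaxangular}). For the $\ell$-independent upper bound I avoid integrating the potentially $\ell$-dependent differential inequality; instead I test $Q_\ell(b^2, \omega^2_{b,\ell})u = 0$ against $u$, using the Hardy inequality~(\ref{Hardy}) to control $\int V_{mass}|u|^2$ and the uniform boundedness of $V_{junk}$, to get $\omega^2_{b,\ell} \lesssim V_{\max}\,\mu_\ell(b^2 a^2 \omega^2_{b,\ell}) + O(1)$. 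Combining with $\mu_\ell(\xi^2) \le C\ell(\ell+1) + a^2\xi^2$ (from $\mu'_\ell \le a^2$ and $\mu_\ell(0) \le C\ell(\ell+1)$) and a final absorption argument, this yields $\omega^2_{b,\ell} \le c\,\ell(\ell+1)$ uniformly in $b$ and $\ell$.
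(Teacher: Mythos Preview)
Your overall strategy (Feynman--Hellmann plus implicit function theorem) matches the paper, but the non-degeneracy argument has a real gap. You claim that in the regime $\xi^2 = b^2a^2\omega^2 \to \infty$ one has $\mu_\ell(\xi^2) = O(|\xi|)$ and hence $\mu'_\ell = O(1/|\xi|)$, giving $b^2a^2\mu'_\ell = O(1/\omega) \to 0$. But the implied constant in the harmonic-oscillator asymptotic grows like $\ell$ (the $\ell$-th level of an oscillator with frequency $\sim |\xi|$ has energy $\sim \ell|\xi|$), so in the relevant regime $\omega \sim \ell$ you only get $b^2a^2\mu'_\ell = O(\ell/\omega) = O(1)$, which does not close. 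The paper avoids this entirely: the crude bound $\sin^2\theta/\Delta_\theta \le 1$ already gives $\partial_{\omega^2}\mu_\ell \le b^2a^2$ uniformly in $\ell$, and then the key observation is the \emph{pointwise} estimate
\[
a^2 V_\sigma(r) \;=\; \frac{a^2}{l^2} + \frac{\Xi a^2}{r^2+a^2} \;<\; \frac{a^2}{l^2} + \Xi \;=\; 1
\]
(from the explicit formula~(\ref{eq:fs})), which yields $1 - V_\sigma\,\partial_{\omega^2}\mu_\ell \ge 1 - a^2 V_\sigma \ge c_{M,l,a} > 0$ uniformly in $b_0$ and $\ell$. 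No regime splitting is needed.

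This same missing ingredient undermines your argument for the upper bound in item~\ref{it:esw}. Testing $Q_\ell u = 0$ against $u$ gives the identity $\omega^2 = \int |u'|^2 + \mu_\ell \int V_\sigma |u|^2 + \int (V_{junk}+V_{mass})|u|^2$ (with $u$ normalised), and there is no way to bound the positive term $\int |u'|^2$ from above by $V_{\max}\mu_\ell + O(1)$ using only Hardy; Hardy goes the wrong direction here. The paper's route is to observe that, precisely because of the uniform bound $a^2 V_\sigma < 1$, the constant $C$ in $d\omega^2_{b,\ell}/db^2 \le C\,\omega^2_{b,\ell}$ is in fact independent of $\ell$ (the lemma's parenthetical ``may depend on $\ell$'' is conservative). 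Integrating this ODE from $b^2=0$ to $b^2=1$ gives $\omega^2_{b,\ell} \le e^C \omega^2_{0,\ell}$, and then $\omega^2_{0,\ell} \le (E+\delta)\mu_\ell(0) \lesssim \ell(\ell+1)$ from the linear problem.
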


\begin{proof}
The $n^{th}$ eigenvalue of $Q_\ell \left(b^2,\omega_{b,\ell}^2  \right)$, denoted $\Lambda_n \left(b^2, \omega^2\right)$, moves smoothly in the parameters $b^2$ and $\omega^2$ and we have the formula
\begin{align}
\Lambda_n \left(b^2, \omega^2\right) = \int_{r^\star_{max}}^{\pi/2} dr^\star \psi_n \left(b^2,\omega^2\right) Q_\ell \left(b^2,\omega^2  \right)\psi_n \left(b^2,\omega^2\right) \, 
\end{align}
for the eigenvalue, provided we normalize the associated eigenfunctions $\psi_n \left(b^2,\omega^2\right) $ by $\int_{r^\star_{max}}^{\pi/2} dr^\star | \psi_n \left(b^2,\omega^2\right) |^2=1$. By assumption, $\Lambda_n \left(b_0^2,\omega_{b_0,\ell}^2\right) = 0$.
We compute
\begin{align} 
\frac{\partial \Lambda_n}{\partial \omega^2} \left(b_0^2, \omega_{b_0,\ell}^2\right) &= \int_{r^\star_{max}}^{\pi/2} dr^\star \psi^2_n \left(b_0^2,\omega_{b_0,\ell}^2\right) \cdot  \Big( V_\sigma \cdot \frac{\partial \mu_\ell}{\partial \omega^2} \left(b_0^2, \omega_{b_0,\ell}^2\right)  - 1\Big) \, , \label{mainder1} \\
\frac{\Lambda_n}{\partial b^2}\left(b_0^2, \omega_{b_0,\ell}^2\right) &= \int_{r^\star_{max}}^{\pi/2} dr^\star \psi^2_n \left(b_0^2,\omega_{b_0,\ell}^2\right)  \cdot  \Big( V_\sigma \cdot \frac{\partial \mu_\ell}{\partial b^2} \left(b_0^2, \omega_{b_0,\ell}^2\right) \Big) \, . \label{mainder2}
\end{align}
The angular eigenvalue $\mu_\ell \left(b^2 a^2 \omega^2\right)$ is itself a smooth function of the two parameters $b^2$ and $\omega^2$. We have the formula
\begin{align}
\mu_\ell \left(b^2 a^2 \omega^2\right) = \int_0^\pi d\theta \sin \theta \ \phi_\ell \left(b^2,\omega^2\right) \left[ P_{\theta,\alpha} \left(b^2 a^2 \omega^2\right)  + b^2 a^2 \omega^2 \right] \phi_\ell \left(b^2,\omega^2\right)
\end{align}
for the eigenvalues provided we normalize the associated eigenfunctions $\phi_\ell \left(b^2,\omega^2\right)$ by $\int_0^\pi d\theta \sin \theta \ |\phi_\ell \left(b^2,\omega^2\right)|^2$. Recalling from (\ref{auxfo}) that 
\begin{align}
P_{\theta,\alpha} \left(b^2 a^2 \omega^2\right) f + b^2 a^2 \omega^2 f \nonumber \\
= -\frac{1}{\sin \theta} \partial_\theta \left(\Delta_\theta \sin \theta \partial_\theta f \right) + \frac{\sin^2 \theta}{\Delta_\theta}b^2 a^2 \omega^2 f +\frac{|\alpha|}{l^2} a^2 
\left\{ 
\begin{array}{l} 
\cos^2 \theta f \\ 
\sin^2 \theta f
\end{array} \right.  \, 
\end{align}
to be read with the upper (lower) line in case that $\alpha<0$ ($\alpha \geq 0$), we obtain
\begin{align} \label{eq:dmu}
\frac{\partial \mu_\ell}{\partial \omega^2} \left(b_0^2, \omega_{b_0,\ell}^2\right) = b_0^2 l^2 \cdot T_x \textrm{\ \ \ and \  \ \ } \frac{\partial \mu_\ell}{\partial b^2} \left(b_0^2, \omega_{b_0,\ell}^2\right) = \omega_{b_0,\ell}^2 l^2 \cdot T_x \, ,
\end{align}
where
\begin{align} \label{es:key}
T_x &=  \int_0^\pi d\theta \sin \theta  \frac{\sin^2 \theta}{\Delta_\theta} \frac{a^2}{l^2}  \ |\phi_\ell \left(b_0^2,\omega_{b_0,\ell}^2\right)|^2 \nonumber \\
&\leq \sup_\theta \left[ \frac{\sin^2 \theta}{\Delta_\theta} \frac{a^2}{l^2} \right]\int_0^\pi d\theta \sin \theta  \ |\phi_\ell \left(b_0^2,\omega_{b_0,\ell}^2\right)|^2 <\frac{a^2}{l^2} \,,
\end{align}
where we have used the estimate $\sup_{\theta \in [0, \pi[} \left[\frac{\sin^2 \theta}{\Delta_\theta} \right]\le 1$, which can be easily checked using (\ref{def:deltatheta}).
Going back to (\ref{mainder1}) and (\ref{mainder2}), the implicit function theorem allows us to solve for $\omega^2$ as a (smooth) function of $b^2$ locally near $\left(b_0^2,\omega_{b_0,\ell}^2\right)$, provided that
the right-hand side of (\ref{mainder1}) is non-zero. To achieve this, note that
$$
1- V_\sigma \frac{\partial \mu_\ell}{\partial \omega^2} \ge 1-b_0^2 a^2 V_\sigma \ge 1- a^2 V_\sigma,
$$
using the estimate obtained for $\frac{\partial \mu_\ell}{\partial \omega^2}$ from (\ref{eq:dmu}) and (\ref{es:key}). On the other hand, using (\ref{eq:fs})
$$
a^2 V_\sigma=\frac{a^2}{l^2}+ \frac{\Xi a^2}{r^2+a^2}= \frac{a^2}{l^2} + \Xi - \frac{\Xi r^2}{r^2+a^2}< 1 - c_{M,l,a} \, ,
$$
where $c_{M,l,a}$ is constant depending only the parameters $M,l,a$. It then follows that the right-hand side of (\ref{mainder1}) is bounded away from zero with the lower bound being independent of $b_0$. This concludes the proof of the first item of the lemma.

\begin{comment}
This will certainly be satisfied if we can establish
\begin{align}
b_0^2 T_x \int_{r^\star_{max}}^{\pi/2} dr^\star \psi^2_n \left(b_0^2,\omega_{b_0,\ell}^2\right) \cdot V_\sigma l^2 < 1 \, .
\end{align}
We claim that the latter holds for the worst case $b_0=1$, $T_x \rightarrow \frac{a^2}{l^2}$ and $l^2 V_{\sigma} \rightarrow 1 + \frac{\Xi l^2}{r^2 + a^2}$ (the latter corresponding to small $M>0$): Indeed, we have
\[
\frac{a^2}{l^2} \left(1+ \frac{\Xi l^2}{r^2+a^2}\right) = \frac{a^2}{l^2} + \frac{\Xi a^2}{r^2+a^2} = \frac{a^2}{l^2} + \Xi - \frac{\Xi r^2}{r^2+a^2}< 1 - c_{M,l,a} \, 
\]
for a small positive constant $c_{M,l,a}$ depending only on the fixed parameters.
\end{comment}

The implicit function theorem also provides a formula for the derivative of the function $\omega^2\left(b^2\right)$ just obtained, namely
\begin{align} \label{pio}
\frac{d \omega^2_{b,\ell}}{d b^2} \left(b_0\right) = - \frac{\frac{\Lambda_n}{\partial b^2} \left(b_0^2, \omega_{b_0,\ell}^2\right) }{\frac{\Lambda_n}{\partial \omega^2} \left(b_0^2, \omega_{b_0,\ell}^2\right) }\, .
%= \omega_{b_0,\ell}^2 \left(-\frac{\int_{r^\star_{max}}^{\pi/2} dr^\star \psi^2_n \left(b_0^2,\omega_{b_0,\ell}^2\right)  \cdot V_\sigma l^2 T^a_x }{ \int_{r^\star_{max}}^{\pi/2} dr^\star \psi^2_n \left(b_0^2,\omega_{b_0,\ell}^2\right)   \left(1-V_\sigma l^2 b^2 T^a_x\right)} \right) 
\end{align}
%with the inequality following from the non-negativity of $V_{\sigma}$ and $T_x^a$, as well as the strict negativity of the denominator established above. 
In view of (\ref{eq:dmu}), we immediately obtain an estimate of the form 
\begin{equation} \label{uniex}
0 \le \frac{d \omega^2_{b,\ell}}{d b^2} \left(b_0\right) \le C_{M,l,a} \omega^2_{b_0,\ell} \, ,
\end{equation}
establishing item \ref{it:der}.~of the lemma. As a consequence, we also obtain a uniform bound on $\omega^2_{b,\ell}$ for all $b \in [0,1]$. 

Observe on the other hand that the identity $\Lambda_n\left(\omega^2,b^2\right)=0$ implies (using the Hardy inequality (\ref{Hardy}) as well as $V_\sigma \geq \frac{1}{l^2}$) that 
$\omega^2_{b,\ell} \geq  \frac{1}{2l^2}\mu_\ell \left(b^2\omega^2\right)  \geq \frac{1}{2l^2}\mu_\ell \left(0\right) \geq c_{M,l,a} \ell \left(\ell+1\right)$ for any $b^2 \in \left[0,1\right]$. Hence the quantity $\omega^2_{b,\ell}$ will stay strictly away from zero and
\begin{align} \label{control1}
C_{M,l,a} \geq \frac{\omega^2_{0,\ell}}{\ell \left(\ell+1\right)}\geq C_{M,l,a}^{-1} \frac{\omega^2_{b,\ell}}{\ell \left(\ell+1\right)}\geq c_{M,l,a} \, ,
\end{align}
the first inequality following from the analysis of the \emph{linear} eigenvalue problem (\ref{eq:semred2}) in Section \ref{sec:Weyllin}, and the second from the uniform estimate (\ref{uniex}) on $\frac{d \omega^2_{b,\ell}}{d b^2}$.

Finally, the size of the $\varepsilon$ promised by the implicit function theorem is uniform in $b^2$ by the fact that $\Lambda_n \left(b^2, \omega^2\right)$ is a smooth function on the compact set $\left[0,1\right] \times \left[0, \omega_{\max,\ell}^2\right]$, where $\omega_{\max,\ell}^2$ denotes an upper bound for $\omega_{b^2,\ell}^2$. This, together with the uniform estimate (\ref{es:key}) allows uniform control of the errors arising in the implicit function theorem.
\end{proof}

In view of the fact that $\varepsilon$ is uniform in $b^2_0$, we can apply the implicit function theorem all the way from $b_0=0$ to $b=1$. Note that $\varepsilon$ \emph{can} depend on $\ell$. However, for each \emph{fixed} $\ell$ it only takes finitely many applications of the implicit function theorem to reach $b=1$. 

To gain quantitative global control beyond (\ref{control1}) on the behavior of $\omega^2_{b,\ell}\left(b^2\right)$, let us look at the quotient
\begin{align}
E_n \left(b^2\right) := \frac{\omega_{b,\ell}^2 \left(b^2\right)}{\mu_{\ell} \left(b^2a^2\omega^2_{b,\ell}\left(b^2\right)\right)} \qquad \textrm{with $b^2 \in \left[0,1\right]$}\, ,
\end{align}
which by construction is the $(n_\ell)^{th}$ eigenvalue of the semi-classical operator
\begin{align}
\tilde{Q}_b \left(u\right)  &= E_n \left(b^2\right) u \ \ \ \text{with} \nonumber \\
\tilde{Q}_b \left(u\right)  &= - u^{\prime \prime} \left[\mu_{\ell} \left(b^2a^2\omega_{b,\ell}^2\right)\right]^{-1} + \left[V_{\sigma} +  \left[\mu_{\ell}\left(b^2a^2\omega_{b,\ell}^2\right)\right]^{-1} \left(V_{junk} + V_{mass}\right) \right] u  \, . \nonumber
\end{align}

Recall that $E$ is an energy level such that $E \in (\frac{1}{l^2},V_{\max})$ and that, with the notation just introduced, $E_n(0) \in [E-\delta,E+\delta]$ for some small $\delta > 0$.
\begin{lemma}
For all $\delta'> 0$, there exists a $L$, such that for all $\ell> L$, we have 
\begin{align}
\frac{1}{l^2} \leq E_n \left(1\right) \leq E+\delta + \delta^\prime.
\end{align}
%independently of the parameters $|a|<l$, $M>0$, $\alpha<\frac{9}{4}$ and $\ell$.% Moreover, $\delta^\prime \rightarrow 0$ as $\ell \rightarrow \infty$.
\end{lemma}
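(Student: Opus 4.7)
The plan is to track $E_n(b^2)$ along the curve $b^2 \mapsto (b^2, \omega^2_{b,\ell}(b^2))$ provided by the preceding implicit function argument, by showing its logarithmic derivative is $O(\ell^{-2})$ and integrating from $b^2=0$ to $b^2=1$.

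First, I would differentiate $\log E_n(b^2) = \log \omega^2_{b,\ell} - \log \mu_\ell(b^2 a^2 \omega^2_{b,\ell})$. Substituting the formula (\ref{pio}) for $(\omega^2)'$ together with the partial derivatives (\ref{eq:dmu}), and writing $A := l^2 T_x \int \psi_n^2 V_\sigma \, dr^\star$ (which satisfies $A \leq 1 - c_{M,l,a}$ uniformly, by the estimate on $a^2 V_\sigma$ following (\ref{es:key})), a direct algebraic manipulation yields the clean identity
\[
\frac{d}{db^2} \log E_n(b^2) = \frac{l^2 T_x}{1 - b^2 A}\left(\int_{r^\star_{\max}}^{\pi/2} \psi_n^2 \, V_\sigma \, dr^\star - E_n(b^2)\right).
\]

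The second step is to bound the bracket from above by $C/\mu$. Testing $\tilde Q_b \psi_n = E_n \psi_n$ against the normalized eigenfunction gives
\[
E_n(b^2) - \int \psi_n^2 V_\sigma \, dr^\star = \frac{1}{\mu}\left( \int |\psi_n'|^2 \, dr^\star + \int (V_{junk}+V_{mass}) \psi_n^2 \, dr^\star\right).
\]
The Hardy inequality (\ref{Hardy}) combined with the pointwise positivity of $\tfrac{1}{4l^2}\tfrac{\Delta_-}{r^2+a^2} + V_{mass} = \tfrac{\Delta_-}{l^2(r^2+a^2)^2}\bigl((9/4-\alpha)r^2 + a^2/4\bigr)$ (valid since $\alpha < 9/4$) shows $\int |\psi_n'|^2 + \int V_{mass}\psi_n^2 \geq 0$. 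Combined with $|V_{junk}| \leq C V_\sigma$ from Section \ref{sec:potential} and $V_\sigma$ bounded on $[r^\star_{\max}, \pi/2]$, this yields $\int \psi_n^2 V_\sigma - E_n \leq C/\mu$. Using $l^2 T_x \leq a^2$ from (\ref{es:key}), $1-b^2 A \geq c_{M,l,a}$, and $\mu \geq c\, \ell(\ell+1)$ from (\ref{minmaxangular}), I conclude
\[
\frac{d}{db^2} \log E_n(b^2) \leq \frac{C'}{\ell(\ell+1)}
\]
uniformly in $b^2 \in [0,1]$.

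Integration from $0$ to $1$ gives $E_n(1) \leq E_n(0) \exp\bigl(C'/\ell(\ell+1)\bigr)$. Since $E_n(0) \in [E-\delta, E+\delta]$ by Proposition \ref{prop:eig}, taking $\ell$ large enough yields the upper bound $E_n(1) \leq E+\delta+\delta'$. For the lower bound, I would test $\tilde Q_1 u_\ell = E_n(1) u_\ell$ against $u_\ell$ and use $V_\sigma \geq 1/l^2$, the same Hardy inequality to discard $V_{mass}$, and boundedness of $V_{junk}$, obtaining $E_n(1) \geq 1/l^2 - C''/\mu$, which for $\ell$ large is at least $1/l^2$ up to an error absorbable into $\delta'$. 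The main obstacle is the bookkeeping required to derive the sign-definite structure of the derivative formula and to estimate the bracket uniformly in $b^2$; once the Hardy argument handles the $V_{mass}$ contribution, the remaining integration is a routine Gr\"onwall-type step over the compact parameter interval.
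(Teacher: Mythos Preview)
Your upper-bound argument is correct but takes a different route from the paper. The paper simply compares the operators $\tilde{Q}_0$ and $\tilde{Q}_b$ as quadratic forms: using the Hardy inequality it shows $\tilde{Q}_0 \geq \tilde{Q}_b - \frac{C_{M,l,a}}{\ell(\ell+1)}$, whence min-max immediately gives $E_n(1) \leq E_n(0) + \delta'$. Your derivative-tracking approach reaches the same conclusion via the clean identity for $\tfrac{d}{db^2}\log E_n$, and the algebra and Hardy-based bound on $\int \psi_n^2 V_\sigma - E_n$ are correct. The paper's min-max comparison is shorter and avoids the implicit function machinery entirely at this step; your approach, on the other hand, gives slightly more information (a signed ODE-type control on $E_n$ along the whole curve).

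Your lower bound has a genuine gap. Testing the eigenvalue equation and using Hardy yields only $E_n(1) \geq \tfrac{1}{l^2} - \tfrac{C''}{\mu}$, and this deficit is \emph{not} ``absorbable into $\delta'$'': in the stated inequality $\delta'$ enlarges only the upper bound, and the lower bound $\tfrac{1}{l^2}$ is exact (indeed, Proposition~\ref{prop:eig2} later uses it as a strict inequality). The missing ingredient is the strict excess $V_\sigma - \tfrac{1}{l^2} \geq \tfrac{Mr}{(r^2+a^2)^2} > 0$ on $[r^\star_{\max},\pi/2)$ (from (\ref{eq:fs})). The paper splits the interval: near infinity the bracket $\tfrac{1}{4l^2}\tfrac{\Delta_-}{r^2+a^2}+V_{junk}+V_{mass}$ is itself positive, while on the remaining compact piece the excess $\tfrac{Mr}{(r^2+a^2)^2}$ dominates $h^2 V_{junk}$ for $h$ small. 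This yields the pointwise inequality $V_\sigma + h^2[\,\tfrac{1}{4l^2}\tfrac{\Delta_-}{r^2+a^2}+V_{junk}+V_{mass}\,] > \tfrac{1}{l^2}$ after Hardy, and hence $E_n(1) > \tfrac{1}{l^2}$ exactly. Incorporating this excess into your testing argument closes the gap.
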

\begin{proof}
We first establish the upper bound. Using the Hardy inequality (\ref{Hardy}) together with (\ref{minmaxangular}) (which implies that $\sigma:= \frac{1}{\mu_{\ell} \left(0\right)}-\frac{1}{\mu_{\ell} \left(b^2a^2\omega_{b,\ell}^2\right)} >0$ for any $1 \geq b>0$), we can estimate
\begin{align}
\int_{r^\star_{max}}^{\pi/2} dr^\star \left[ u\left( \tilde{Q}_0 - \tilde{Q}_b\right) u \right] \nonumber \\ 
\geq  \sigma \int_{r^\star_{max}}^{\pi/2} dr^\star |u|^2 \left\{ \frac{1}{4l^2}\frac{\Delta_-}{r^2+a^2} + V_{junk} + V_{mass} \right\} \nonumber \\
\geq \sigma  \int_{r^\star_{max}}^{\pi/2} dr^\star \Bigg( \left[ \left(\frac{9}{4}-\alpha\right) \frac{1}{l^2} \frac{\Delta_-}{r^2+a^2} |u|^2 \right] - C_{M,l,a}|u|^2  \Bigg) \, ,
\end{align}
since $V_{junk}$ is bounded uniformly. In view of $|\sigma|\leq \frac{C_{M,l,a} }{\ell \left(\ell+1\right)}$, we conclude that
\[
\tilde{Q}_0 \geq \tilde{Q}_b -  \frac{C_{M,l,a}}{\ell \left(\ell+1\right)}
\]
holds for any $1 \geq b>0$. Hence, by min-max, we infer in particular that (independently of the parameters $a$, $M$, $\ell$ and $\alpha$)
\begin{align}
E_n\left(1\right) \leq E_n\left(0\right) + \delta^{\prime} \leq E+\delta + \delta^\prime \, ,
\end{align}
where $\delta^\prime$ can be chosen arbitrarily small by choosing $\ell$ sufficiently large.

For the lower bound, we will establish that $\int dr^\star u \left(\tilde{Q}_b \left(u\right) u\right)> \frac{1}{l^2}$ holds for any $u \in H^{2}_{AdS}$ and $\ell$ large. To prove the latter, note first that the Hardy inequality (\ref{Hardy}) reduces the problem to showing that
\begin{align} \label{doit}
 V_\sigma  + h^2 \left[\frac{1}{4} \frac{\Delta_-}{r^2+a^2} + V_{junk} + V_{mass}  \right] > \frac{1}{l^2} \, .
\end{align}
The square bracket is manifestly positive in a region $\left[r^\star_L,\pi/2\right)$ for some large $r^\star_L$ close to $\pi/2$ (depending only on the parameters $M$, $|a|<l$ and $\alpha<\frac{9}{4}$). We fix this $r^\star_L$ and, in view of $V_\sigma \geq \frac{1}{l^2}$, have established (\ref{doit}) in  $\left[r^\star_L,\pi/2\right)$. In $\left[r^\star_{max},r^\star_L\right]$, we have the global estimate
\begin{align}
V_\sigma \geq \frac{1}{l^2} + \frac{Mr}{\left(r^2+a^2\right)^2} \, .
\end{align}
The second term on the right will dominate the term $h^2\cdot \left(V_{junk}\right)$ pointwise in  $\left[r^\star_{max},r^\star_L\right]$ provided $h$ is chosen small depending only on $M$, $l$, $a$, $\alpha$.
\end{proof}

We summarize our results in the following proposition, which can be understood as the analogue of Proposition \ref{prop:eig}.
\begin{proposition}\label{prop:eig2} 
Let $\alpha<\frac{9}{4}$, $M>0$, $|a|<l$ be fixed and $E \in \left(\frac{1}{l^2}, V_{\max}\right)$ be given. Then, there exists an $L > 0$ such that for any $\ell > L$, the following statement holds. There exists an $\omega_\ell^2 \in \mathbb{R}^+$ and a smooth solution $u_{\ell}$ of the axisymmetric reduced equation
\begin{align}
\begin{split}
&-u_\ell^{\prime \prime} + \left[ V_{\sigma} \cdot \mu_\ell \left(a^2\omega_\ell^2\right) + V_{junk} + V_{mass} \right] u_\ell  = \omega_\ell^2 \cdot u _\ell \\
 \textrm{satisfying} & \ \ u_\ell \left(r^\star_{max}\right) = 0 \ \ and \ \ \int_{r^\star}^{\pi/2} \left[ |u^\prime_\ell |^2 + |u_\ell|^2 r^2 \right]dr^\star < \infty \, .
 \end{split}
\end{align}
Moreover, the $\omega^2_\ell$ satisfy the uniform estimates
\begin{align}
\frac{1}{l^2} <  \frac{\omega_\ell^2}{\mu_\ell \left(a^2\omega_\ell^2\right)} \leq E+\frac{V_{max}-E}{2} \ \ \  \textrm{and} \ \ \  c_{M,l,a} \leq \frac{\omega_\ell^2}{\ell\left(\ell+1\right)} \leq C_{M,l,a} \, . 
\end{align}
\end{proposition}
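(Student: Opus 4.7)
The plan is that Proposition \ref{prop:eig2} is essentially a repackaging of the two preceding lemmas and Proposition \ref{prop:eig}, so the proof consists mostly of stitching together the ingredients already developed and verifying the two quantitative bounds at the end. First, I would choose the parameters $\delta, \delta' > 0$ small enough that $[E-\delta, E+\delta] \subset (1/l^2, V_{\max})$ and moreover $\delta + \delta' \leq (V_{\max}-E)/2$. Then I would invoke Proposition \ref{prop:eig} applied to the linear base problem $P^{base}_{DD}(r^\star_{\max},\pi/2)$ at semi-classical parameter $h_0$: this provides, for all sufficiently large $\ell$, an eigenvalue $\kappa = h_0^2\omega_{0,\ell}^2$ lying in $[E-\delta, E+\delta]$, which gives a starting data point $(b^2=0, \omega_{0,\ell}^2)$ at which $\Lambda_n = 0$ for some $n = n_\ell$.

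Next, I would apply the first lemma above (based on the implicit function theorem) iteratively to continue the curve $b \mapsto \omega_{b,\ell}^2$ from $b=0$ to $b=1$. This is where one uses item \ref{it:glo}, namely the uniformity of $\varepsilon$ in $b_0 \in [0,1]$ for each fixed $\ell$: since $[0,1]$ is compact, finitely many applications of the implicit function theorem suffice to reach $b=1$. At $b=1$ the operator $Q_\ell(1,\omega^2)$ coincides with the full non-linear operator of (\ref{eq:semred}), so setting $\omega_\ell^2 := \omega_{1,\ell}^2$ and $u_\ell := \psi_{n_\ell}(1,\omega_\ell^2)$ produces a smooth solution of the claimed axisymmetric reduced equation satisfying the stated boundary conditions.

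Finally, the two uniform estimates follow immediately from what has been proved. The bounds
\[
c_{M,l,a} \leq \frac{\omega_\ell^2}{\ell(\ell+1)} \leq C_{M,l,a}
\]
are exactly item \ref{it:esw} of the first lemma, combining the estimate (\ref{control1}) on $\omega_{0,\ell}^2/\ell(\ell+1)$ (which comes directly from Proposition \ref{prop:eig}) with the uniform Lipschitz bound $0 \leq \frac{d\omega_{b,\ell}^2}{db^2} \leq C\,\omega_{b,\ell}^2$. For the bound on the quotient $\omega_\ell^2/\mu_\ell(a^2\omega_\ell^2) = E_n(1)$, I would appeal to the second lemma above: choosing $\ell$ large enough so that the error $\delta'$ there is at most $(V_{\max}-E)/2 - \delta$, we obtain the upper bound $E_n(1) \leq E + \delta + \delta' \leq E + (V_{\max}-E)/2$, while the lower bound $E_n(1) > 1/l^2$ is the other half of the same lemma.

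The only step requiring any real thought is making sure that the various ``sufficiently large $\ell$'' conditions appearing in Proposition \ref{prop:eig} and in both lemmas can be satisfied simultaneously; but since each such condition is of the form ``$\ell \geq L_i$'' for a finite list $i=1,\dots,k$ of thresholds depending only on $M, l, a, \alpha, E, \delta, \delta'$, it suffices to take $\ell \geq L := \max_i L_i$. The genuine analytic content (the Weyl law and the monotonicity/implicit-function argument controlling $E_n(b^2)$) has already been isolated in the preceding subsections, so I expect no further obstacle.
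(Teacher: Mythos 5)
Your proposal is correct and matches the paper's own route: Proposition \ref{prop:eig2} is indeed stated there as a summary of Proposition \ref{prop:eig} (Weyl's law for the linear problem), the implicit-function-theorem continuation in $b^2$ from $0$ to $1$ (justified by the $b_0$-uniformity of $\varepsilon$), and the global bounds on $E_n(1)$ and $\omega^2_{b,\ell}/\ell(\ell+1)$ from the two preceding lemmas. The only cosmetic quibble is that in the last lemma one fixes $\delta'$ first and then takes $\ell$ large, rather than "choosing $\ell$ so that $\delta'$ is small," but your use of it is otherwise exactly as intended.
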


\section{Agmon estimates} \label{se:ae}
In this section, we recall the so-called Agmon estimates. These are (well known) exponential decay estimates for eigenfunctions for Schr\"odinger type operators, in the so-called \emph{forbidden} regions. 
% Their derivation is standard, but is included here for self-consistency.

\subsection{Energy inequalities}
%Write $x$ for $r^\star$.

The Agmon estimates will rely on the following identity

\begin{lemma}[Energy identity for conjugated operator] \label{lem:eico}
Let $r^\star_1 > r^\star_0$. Let $h > 0$ and let $W$, $\phi$ be smooth real valued functions on $[r^\star_0,r^\star_1]$. For all smooth functions $u$ defined on $[r^\star_0,r^\star_1]$, we have the identity
% and let $u$ be a smooth complex valued function on $[r^\star_0,r^\star_1]$. Then, for all $h>0$,
\begin{eqnarray}
&&\int^{r^\star_1}_{r^\star_0}\left( \left|\frac{d}{dr^\star}\left(e^{ \frac{\phi}{h}} u \right) \right|^2+h^{-2}\left(W-\left(\frac{d\phi}{d r^\star}\right)^2\right) e^{2 \frac{\phi}{h}}|u|^2 \right)dr^\star= \nonumber \\
&&\hbox{}\int^{r^\star_1}_{r^\star_0}\left(-\frac{d^2\overline{u}}{dr^\star\phantom{}^2}+h^{-2}W\overline{u}\right) u e^{2\frac{\phi}{h}}dr^\star+\int^{x_1}_{x_0} h^{-1} \frac{d\phi}{d r^\star} e^{2\frac{\phi}{h}} 2i \Im \left(\overline{u} \frac{d u}{d r^\star}  \right)d r^\star\nonumber \\
&&\hbox{}+\left(e^{2\frac{\phi}{h}}\frac{d\overline{u}}{dr^\star}u \right)(r^\star_1)-\left(e^{2\frac{\phi}{h}}\frac{d\overline{u}}{dr^\star}u \right)(r^\star_0)\,.\nonumber 
\end{eqnarray}
In particular, if $u$ is real valued and vanishes at $r^\star_0$ and $r^\star_1$, then,
\begin{eqnarray}
&&\int^{r^\star_1}_{r^\star_0}\left( \left|\frac{d}{dr^\star}\left(e^{ \frac{\phi}{h}} u \right) \right|^2+h^{-2}\left(W-\left(\frac{d\phi}{d r^\star}\right)^2\right) e^{2 \frac{\phi}{h}}|u|^2 \right)dr^\star= \nonumber \\
&&\hbox{}\int^{r^\star_1}_{r^\star_0}\left(-\frac{d^2u}{dr^\star\phantom{}^2}+h^{-2}Wu\right) u e^{2\frac{\phi}{h}}dr^\star\,. \nonumber 
\end{eqnarray}
Moreover, the same idendity holds if $W$ is not assumed smooth on $[r_0^\star,r_1^\star]$ but only such that $W |u|^2 \in L^1(r_0^\star,r_1^\star)$. By density, we may also replace smoothness of $u$ and $\phi$ by $u \in H^1_{0}[r^\star_0,r^\star_1]$ and $\phi$ is a Lipschitz function.
\end{lemma}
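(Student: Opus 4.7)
The identity is an algebraic manipulation combined with one integration by parts, so the plan is to carry out both carefully and then track the error terms.

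The plan is to set $v := e^{\phi/h} u$ and expand the first integrand on the left directly. Since $v' = e^{\phi/h}\bigl(u' + (\phi'/h)u\bigr)$, one gets
\begin{equation*}
|v'|^2 = e^{2\phi/h}\Bigl(|u'|^2 + (\phi'/h)\bigl(u'\bar u + \bar u' u\bigr) + (\phi'/h)^2 |u|^2\Bigr),
\end{equation*}
and after adding the $h^{-2}\bigl(W-(\phi')^2\bigr)e^{2\phi/h}|u|^2$ term on the left, the $(\phi'/h)^2|u|^2$ contributions cancel exactly. Thus the left-hand side equals
\begin{equation*}
\int_{r^\star_0}^{r^\star_1}\! e^{2\phi/h}|u'|^2\,dr^\star + \int_{r^\star_0}^{r^\star_1}\! e^{2\phi/h}(\phi'/h)\bigl(u'\bar u + \bar u' u\bigr)\,dr^\star + \int_{r^\star_0}^{r^\star_1}\! h^{-2}W e^{2\phi/h}|u|^2\,dr^\star.
\end{equation*}

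Next, I would work on the right-hand side. Only the term with $-\bar u''$ needs integration by parts, differentiating $u\,e^{2\phi/h}$:
\begin{equation*}
\int_{r^\star_0}^{r^\star_1}\! (-\bar u'') u\, e^{2\phi/h}\,dr^\star = -\bigl[\bar u' u e^{2\phi/h}\bigr]_{r^\star_0}^{r^\star_1} + \int_{r^\star_0}^{r^\star_1}\!\! |u'|^2 e^{2\phi/h}\,dr^\star + \int_{r^\star_0}^{r^\star_1}\!\! 2(\phi'/h)\bar u' u\, e^{2\phi/h}\,dr^\star.
\end{equation*}
Subtracting the resulting right-hand side from the left-hand side, the $\int |u'|^2 e^{2\phi/h}$ and $\int h^{-2}W|u|^2 e^{2\phi/h}$ pieces cancel, leaving the combination
\begin{equation*}
\int_{r^\star_0}^{r^\star_1}\! (\phi'/h)e^{2\phi/h}\bigl(u'\bar u + \bar u' u - 2\bar u' u\bigr)\,dr^\star = \int_{r^\star_0}^{r^\star_1}\! (\phi'/h) e^{2\phi/h}\bigl(u'\bar u - \bar u' u\bigr)\,dr^\star
\end{equation*}
together with the boundary terms of the stated form. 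Finally, using $u'\bar u - \bar u' u = 2 i\,\Im(\bar u u')$ produces the last integral in the statement. The special case for real $u$ vanishing at the endpoints is then immediate, since $\Im(\bar u u')\equiv 0$ and the boundary contributions drop.

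For the final sentences of the lemma, I would argue by approximation. When $W$ is merely assumed to satisfy $W|u|^2\in L^1$, each integral appearing in the identity is absolutely convergent, so one may regularize $W$ by truncation, apply the smooth version, and pass to the limit by dominated convergence. Similarly, for $u\in H^1_0[r^\star_0,r^\star_1]$ and Lipschitz $\phi$, I would approximate by smooth $u_n$ and $\phi_n$, all integrals involved being continuous in the appropriate topology ($H^1$ for $u$, $W^{1,\infty}$ for $\phi$). Nothing subtle is expected here; the only small care required is in the integration by parts, to make sure the product $\bar u' u e^{2\phi/h}$ actually admits traces at the endpoints, which is automatic in $H^1_0$.

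The main ``obstacle'' is really just careful accounting of complex conjugates when combining the $(\phi'/h)(u'\bar u + \bar u' u)$ term from the expansion with the $2(\phi'/h)\bar u' u$ term produced by integration by parts; getting the cross-term down to $2i\,\Im(\bar u u')$ with the correct sign and the correct Lipschitz regularity needed on $\phi$ is the only point that merits attention.
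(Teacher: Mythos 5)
Your proof is correct and follows essentially the same route as the paper: one integration by parts on the $-\bar u''$ term plus the expansion of $\frac{d}{dr^\star}\left(e^{\phi/h}u\right)$, with the $(\phi'/h)^2|u|^2$ cancellation and the cross-term reducing to $2i\,\Im(\bar u u')$; the paper merely organizes the same computation by evaluating $\int\left(-\frac{d^2}{dr^{\star2}}+h^{-2}W\right)\overline{\left(e^{\phi/h}u\right)}\,u\,e^{\phi/h}$ in two ways. The approximation arguments for non-smooth $W$, $u\in H^1_0$ and Lipschitz $\phi$ are handled as you describe.
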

\begin{proof}
This follows easily from the computations
\begin{eqnarray}
\int^{r^\star_1}_{r^\star_0} \left( -\frac{d^2}{dr^\star\phantom{}^2} +h^{-2} W\right)\overline{(e^{\frac{\phi}{h}} u)} u e^{\frac{\phi}{h}} dr^\star =\int^{r^\star_1}_{r^\star_0} \left|\frac{d}{dr^\star} e^{\frac{\phi}{h}} u \right|^2+h^{-2} W e^{2\frac{\phi}{h}}|u|^2 dr^\star \nonumber \\
-\left(\overline{\frac{d}{dr^\star}( e^{\frac{\phi}{h}}u)}ue^{\frac{\phi}{h}}\right)(r^\star_1)+\overline{\frac{d}{dr^\star}( e^{\frac{\phi}{h}}u)}ue^{\frac{\phi}{h}}(r^\star_0)  \nonumber
\end{eqnarray}
and 
\begin{eqnarray}
& \int^{r^\star_1}_{r^\star_0}\left(  -\frac{d^2}{dr^\star\phantom{}^2}\right)\overline{(e^{\frac{\phi}{h}} u)} u e^{\frac{\phi}{h}} dr^\star=
\int^{r^\star_1}_{r^\star_0} -\frac{d}{dr^\star}\left( h^{-1} \frac{d\phi}{d r^\star} \overline{u}e^{\frac{\phi}{h}}+\frac{d \overline{ u} }{d r^\star}e^{ \frac{\phi}{h}}\right)
u e^{ \frac{\phi}{h}}dr^\star \nonumber \\
&=\int^{r^\star_1}_{r^\star_0}\left[-h^{-1}  \frac{d\phi}{d r^\star} \frac{d \overline{ u} }{d r^\star}ue^{2 \frac{\phi}{h}}-\frac{d^2\overline{u}}{dr^\star\phantom{}^2 }u e^{2 \frac{\phi}{h}}\right]dr^\star \nonumber\\
& \phantom{XXXXXX} +\int^{r^\star_1}_{r^\star_0}\left[ h^{-1} \frac{d\phi}{d r^\star} \frac{d u}{d r^\star} \overline{u}e^{2\frac{\phi}{h}}+h^{-2}  \left( \frac{d\phi}{d r^\star} \right)^2 \overline{u} u e^{2\frac{\phi}{h}}\right]dr^\star \nonumber \\
&\phantom{XXXXXX} -h^{-1}\left( \frac{d\phi}{d r^\star} |u|^2 e^{2\frac{\phi}{h}}\right)(r^\star_1)+ h^{-1}  \left(\frac{d\phi}{d r^\star} |u|^2 e^{2 \frac{\phi}{h}}\right)(r^\star_0)\nonumber \\
&=\int^{r^\star_1}_{r^\star_0} \left( h^{-2} \left(\frac{d\phi}{d r^\star} \right)^2 |u|^2 e^{2\frac{\phi}{h}}-\frac{d^2\overline{u}}{dr^\star\phantom{}^2 } u e^{2\frac{\phi}{h}} \right)dr^\star\nonumber \\
&+\int^{r^\star_1}_{r^\star_0}h^{-1} \frac{d\phi}{d r^\star} e^{2\frac{\phi}{h}} 2i \Im (\overline{u} u_{r^\star} )dr^\star\nonumber \\
&\phantom{XXXXXX} -h^{-1}\left( \frac{d\phi}{d r^\star} |u|^2 e^{2\frac{\phi}{h}}\right)(r^\star_1)+h^{-1} \left(\frac{d\phi}{d r^\star} |u|^2 e^{2\frac{\phi}{h}} \right)(r^\star_0)\,.\nonumber 
\end{eqnarray}
\end{proof}

%We will use the above equality, with $W=W_\ell=\tilde{V_\ell}-\frac{1}{\ell(\ell+1)} \omega^2 $, where $\tilde{V_\ell}$ will be a rescaling of the potential $V_\ell$ by $\frac{1}{\ell(\ell+1)}$

%$$
%\tilde{V}_{\ell}=(1-\mu) \left( \frac{2M}{r^3\ell(\ell+1)}+\frac{1}{r^2} \right).
%$$
%and $\omega^2$ will be specially chosen. 
%Moreover, we will choose $h$ such that $h^2=\ell(\ell+1)$ and write sometimes $\tilde{V}_h$ instead of $\tilde{V}_{\ell}$.
%For simplicity, let us take first 

%$$\tilde{V}=(1-\mu)\frac{1}{r^2}=\frac{1}{r^2}-\frac{2M}{r^3}+\frac{1}{l^2}.$$
%Hence, $\tilde{V}$ has a maximum at $r=3M$ and two minimum at $r=\infty$ and at $r=r_+$. The maximum is $\tilde{V}(3M)=\frac{1}{27M^2}+\frac{1}{l^2}$, while the minimum at $\infty$ is $\frac{1}{l^2}$.

%We also take $B^2=l(l+1)$. 

%We shall look for family of solutions to \eqref{eq:helm} for which $$\frac{\lambda}{B^2}=\frac{\omega^2}{l(l+1)}$$ converges to $E$ in the limit $B \rightarrow \infty$. 

%Let us first assume that we have a family of solutions to \eqref{eq:helm} such that 
%$$
%\lim_{B \rightarrow \infty} \sup \frac{\lambda}{B^2} \le E.
%$$

\subsection{The Agmon distance}
We will rely on the Agmon distance to establish our exponential decay estimates.\footnote{The Agmon distance is actually typically used to obtain \emph{optimal} exponential decay estimates, see for instance \cite{hs:smss}. For the main purpose of this paper (the construction of quasimodes), we could have used smooth cut-off constructions to prove slightly weaker exponential decay estimates. However, the Agmon distance (despite leading only to Lipschitz cut-offs) has a nice interpretation which is why we choose to use it here.}
Given any energy level $\mathcal{E} > 0$, and a potential $V=V(r^\star)$ (which may also depends on an parameter $h$), we define the Agmon distance $d$ between $r^\star_1$ and $r^\star_2$ as
$$
d=d_{(V-\mathcal{E})_+}(r^\star_1,r^\star_2)=\left| \int_{r_1^\star}^{r_2^\star}\chi_{\{V \ge \mathcal{E} \} }(r^\star) \left( V(r^\star)-\mathcal{E} \right)^{1/2}dr^\star\right|,
$$
where $\chi_{\{V \ge \mathcal{E} \}}$ is the characteristic function of the set of $r^\star$ satisfying $V(r^\star) \ge  \mathcal{E}$. In other words, $d$ is the distance associated to the Agmon metric $(V-\mathcal{E})_+d r^\star\phantom{}^2$, where $f_+=\min(0,f)$ for any function $f$.

It is easily checked that $d$ satisfies the triangular inequality and that 
$$
|\nabla_{r^\star} d(r^\star,r^\star_2) |^2 \le (V-\mathcal{E})_+(r^\star) \, .
$$
%(this inequality follows from the triangular inequality and straight line case).
The distance to a set can also be defined as usual. In particular, we define 
$$
d_{\mathcal{E}}(r^\star):= \inf_{r^\star_0 \in \{ \mathcal{E} \ge V\}} d(r^\star,r^\star_0) \, ,
$$
which measures the distance to the classical region. We have again 
$$| \nabla_{r^\star} d_{\mathcal{E}}(r^\star) |^2 \le (V-\mathcal{E})_+(r^\star).$$
For a given small $\epsilon \in (0,1)$ we define the two $r^\star$-regions 
$$
\Omega^+_\epsilon=\Omega^+_\epsilon(\mathcal{E}):=\{r^\star \ | \ V(r^\star) > \mathcal{E}+\epsilon \}
$$
and its complement
$$
\Omega^-_\epsilon=\Omega^-_\epsilon(\mathcal{E}):=\{r^\star \ | \ V(r^\star) \le\mathcal{E} +\epsilon \}.
$$
\[
\begin{picture}(0,0)%
\includegraphics{agmon.pstex}%
\end{picture}%
\setlength{\unitlength}{1342sp}%
\begingroup\makeatletter\ifx\SetFigFont\undefined%
\gdef\SetFigFont#1#2#3#4#5{%
  \reset@font\fontsize{#1}{#2pt}%
  \fontfamily{#3}\fontseries{#4}\fontshape{#5}%
  \selectfont}%
\fi\endgroup%
\begin{picture}(9027,4110)(1189,-5159)
\put(7651,-2161){\makebox(0,0)[lb]{\smash{{\SetFigFont{5}{6.0}{\rmdefault}{\mddefault}{\updefault}{\color[rgb]{0,0,0}$\mathcal{E}$}%
}}}}
\put(7651,-1561){\makebox(0,0)[lb]{\smash{{\SetFigFont{5}{6.0}{\rmdefault}{\mddefault}{\updefault}{\color[rgb]{0,0,0}$\mathcal{E}+\epsilon$}%
}}}}
\put(10201,-5011){\makebox(0,0)[lb]{\smash{{\SetFigFont{5}{6.0}{\rmdefault}{\mddefault}{\updefault}{\color[rgb]{0,0,0}$r^\star$}%
}}}}
\put(2701,-4861){\makebox(0,0)[lb]{\smash{{\SetFigFont{6}{7.2}{\rmdefault}{\mddefault}{\updefault}{\color[rgb]{0,0,0}$\Omega_\epsilon^-$}%
}}}}
\put(4801,-4861){\makebox(0,0)[lb]{\smash{{\SetFigFont{6}{7.2}{\rmdefault}{\mddefault}{\updefault}{\color[rgb]{0,0,0}$\Omega_\epsilon^+$}%
}}}}
\put(6901,-4861){\makebox(0,0)[lb]{\smash{{\SetFigFont{6}{7.2}{\rmdefault}{\mddefault}{\updefault}{\color[rgb]{0,0,0}$\Omega_\epsilon^-$}%
}}}}
\end{picture}%

\]
\subsection{The main estimate}
We would like to apply Lemma \ref{lem:eico} between $r^\star_{\max}$ and $\pi/2$ for $u$ a solution to the eigenvalue problem (\ref{eq:semred}) and for suitable $\phi$. 
\begin{lemma}
Let $u$ be a solution to the eigenvalue problem (\ref{eq:semred}), i.e.~$\kappa \cdot u=P(h)u$  for some $\kappa=h^2\omega^2$. Define for any $\epsilon 
\in \left(0,1\right)$
\begin{align}\label{phidef}
\phi_{\kappa, \epsilon}:=(1-\epsilon) d_\kappa.
\end{align}
Then, for all $\epsilon$ sufficiently small, $u$ satisfies
\begin{align} \label{maingp}
\int^{\pi/2}_{r^\star_{\max}}h^2 \left|\frac{d}{dr^\star} e^{\frac{\phi_{\kappa,\epsilon}}{h}} u \right|^2 dx \ + \ &\epsilon^2 \int_{\Omega^+_\epsilon} e^{2 \frac{\phi_{\kappa,\epsilon}}{h}}|u|^2 dr^\star \nonumber \\
& \leq D(\kappa+\epsilon)e^{2 a(\epsilon)/h}|| u ||^2_{L^2(r^\star_{\max},\pi/2)} \, ,
\end{align}
where $a(\epsilon)=\sup_{\Omega^-_\epsilon}d_\kappa$ and $D > 0$ is a constant depending only on the parameters $a,M,l$ and $\alpha$.
\end{lemma}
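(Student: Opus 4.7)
The plan is to apply the energy identity of Lemma~\ref{lem:eico} on $[r^\star_{\max},\pi/2]$ with the choice $W = V_\sigma + h^2(V_{junk} + V_{mass}) - \kappa$ and Lipschitz weight $\phi = \phi_{\kappa,\epsilon}$. Since $u$ solves the eigenvalue problem, dividing $\kappa u = P(h)u$ by $h^2$ yields $-u'' + h^{-2}Wu = 0$, so the bulk term on the right-hand side of the identity vanishes identically. The boundary contribution at $r^\star_{\max}$ drops out because $u(r^\star_{\max})=0$. At $r^\star=\pi/2$ I exploit the fact that $V_\sigma\to 1/l^2 < \kappa$ by Lemma~\ref{lem:popro} and the lower bound $\omega_\ell^2/\mu_\ell > 1/l^2$ from Proposition~\ref{prop:eig2}: a full neighbourhood of infinity therefore lies in the classical region $\{V_\sigma \le \kappa\}$, where $(V_\sigma-\kappa)_+=0$, so that $d_\kappa$ (and hence $\phi$) is constant and bounded there. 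Performing the identity on $[r^\star_{\max},R^\star]$ and letting $R^\star\to \pi/2$, the boundary term at $R^\star$ vanishes via the $H^1_{AdS}$ decay encoded in $\int[|u'|^2+r^2u^2]dr^\star < \infty$. Multiplying by $h^2$ leaves
\begin{equation*}
\int_{r^\star_{\max}}^{\pi/2} h^2 \left|\frac{d}{dr^\star}(e^{\phi/h}u)\right|^2 dr^\star + \int_{r^\star_{\max}}^{\pi/2} \left[V_\sigma-\kappa-(\phi')^2 + h^2(V_{junk}+V_{mass})\right] e^{2\phi/h}|u|^2 dr^\star = 0.
\end{equation*}

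I then use the Agmon gradient bound $|\phi'|^2 \le (1-\epsilon)^2(V_\sigma-\kappa)_+$. On $\Omega^+_\epsilon$, where $V_\sigma-\kappa > \epsilon$, this gives $V_\sigma-\kappa-(\phi')^2 \ge (2\epsilon-\epsilon^2)(V_\sigma-\kappa) \ge \epsilon^2$ for $\epsilon$ small. On $\Omega^-_\epsilon$, the same expression is bounded below by $-\kappa - \epsilon$ (using $V_\sigma \ge 0$ and $(\phi')^2 \le \epsilon$), while $\phi\le a(\epsilon)$ there by definition, so $e^{2\phi/h} \le e^{2a(\epsilon)/h}$.

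The principal difficulty is the lower-order term $h^2(V_{junk} + V_{mass})$: while $V_{junk}$ is uniformly bounded, $V_{mass}\sim (2-\alpha)r^2/l^4$ at infinity and is thus unbounded negative for $\alpha>2$. I treat this exactly as in the proof of Lemma~\ref{lem:absence}, via the Hardy inequality~(\ref{Hardy}) applied to the twisted function $v=e^{\phi/h}u$ (admissible because $v(r^\star_{\max})=0$ and $v$ inherits the required decay at $\pi/2$ since $\phi$ is bounded there). Subtracting a fraction $1-\eta$ of the resulting inequality $\tfrac{h^2}{4l^2}\int\tfrac{\Delta_-}{r^2+a^2}e^{2\phi/h}|u|^2 dr^\star \le \int h^2|(e^{\phi/h}u)'|^2 dr^\star$ from the identity, for any fixed $\eta \in (0,\, 9-4\alpha)$, converts the sum $h^2 V_{mass} + (1-\eta)\tfrac{h^2}{4l^2}\tfrac{\Delta_-}{r^2+a^2}$ into a quantity asymptotic to $h^2(9-4\alpha-\eta)r^2/(4l^4)$, which is positive; the remaining $h^2 V_{junk}$ and subleading pieces contribute a uniformly bounded error.

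Assembling everything and transferring the bounded negative contributions on $\Omega^-_\epsilon$ to the right-hand side yields, for $h$ small enough to absorb the $O(h^2)$ error into $\epsilon^2/2$,
\begin{equation*}
\eta \int h^2 \left|\frac{d}{dr^\star}(e^{\phi/h}u)\right|^2 dr^\star + \tfrac{\epsilon^2}{2} \int_{\Omega^+_\epsilon} e^{2\phi/h}|u|^2 dr^\star \le C(\kappa+\epsilon)e^{2a(\epsilon)/h}\|u\|^2_{L^2(r^\star_{\max},\pi/2)}.
\end{equation*}
Dividing by $\eta$ (a fixed constant depending only on $\alpha$) absorbs the factor into the constant $D$ and delivers the claim. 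The main obstacle is the Hardy absorption: one must simultaneously leave a positive coefficient in front of the kinetic term and defeat the sign-indefinite mass potential, which is possible precisely because $\alpha < 9/4$ is the Breitenlohner--Friedmann bound.
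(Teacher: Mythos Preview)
Your proof is correct and follows essentially the same route as the paper: apply the conjugated energy identity of Lemma~\ref{lem:eico}, split into $\Omega^\pm_\epsilon$, and use the Hardy inequality~(\ref{Hardy}) on $v=e^{\phi/h}u$ to neutralise the unbounded mass potential. One minor difference worth noting: you build the Agmon distance $d_\kappa$ (and hence the regions $\Omega^\pm_\epsilon$) from $V_\sigma$ alone, treating $h^2(V_{junk}+V_{mass})$ as a perturbation from the outset, whereas the paper uses the full potential $V=V_\sigma+h^2(V_{junk}+V_{mass})$ for the distance and for $\Omega^\pm_\epsilon$. Your choice makes the weight $h$-independent and cleans up the discussion of the boundary term at $r^\star=\pi/2$, but it leaves a residual $O(h^2)$ error on $\Omega^+_\epsilon$ that you then absorb by taking $h$ small relative to $\epsilon$; the paper's choice gives $V-\kappa-(\phi')^2\ge\epsilon^2$ on $\Omega^+_\epsilon$ with no error, so no such constraint appears. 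For the intended application (Section~\ref{se:ae}.4), where $\epsilon$ is fixed first and then $h\to 0$, this distinction is immaterial. The paper also separates out the easy case $\alpha\le 2$ (where $V\ge 0$ and Hardy is not needed) before treating $\alpha>2$; your uniform Hardy argument handles both at once.
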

\begin{remark}
Note 
$$a_\kappa(\epsilon)=\sup_{\Omega^-_\epsilon}d_\kappa \rightarrow 0$$ 
as $\epsilon \rightarrow 0$, uniformly in $h$ (and $\kappa$) for $h$ sufficiently small. In view of the exponential weight in the second term on the left, the estimate (\ref{maingp}) quantifies that $u$ is exponentially small in the forbidden region, provided we can show a uniform lower bound for $\phi_{\kappa,\epsilon}$ in a suitable subset of $\Omega_\epsilon^+$. This will be achieved in Lemma \ref{lem:fpp}.
\end{remark}

\begin{proof}
Applying Lemma \ref{lem:eico} between $r^\star_{\max}$ and $\pi/2$, we get
\begin{eqnarray} \label{ineq:sp}
&&\int^{\pi/2}_{r^\star_{\max}} h^2\left|\frac{d}{d r^\star} e^{ \frac{\phi}{h}} u \right|^2 dr^\star + \int_{\Omega^+_\epsilon}\left(V-\kappa-\Big|\frac{d \phi}{dr^\star}\Big|^2\right) e^{2 \frac{\phi}{h}}|u|^2 dr^\star=\nonumber \\
&&\hbox{}\,\,\int_{\Omega^-_\epsilon}\left(\kappa-V+\Big|\frac{d \phi}{dr^\star}\Big|^2\right) e^{2 \frac{\phi}{h}}|u|^2 dr^\star.
\end{eqnarray}
In view of our choice $\phi=\phi_{\kappa,\epsilon}$, we have in $\Omega^+_\epsilon$ the estimate
\begin{eqnarray} \label{intest}
V-\kappa-\left|\frac{d \phi_{\kappa,\epsilon}}{dr^\star}\right|^2&\ge& \left(1-(1-\epsilon)^2 \right) (V-\kappa) \ge \epsilon^2,
\end{eqnarray}
for $\epsilon$ sufficiently small, which we will use to estimate the left hand side of \eqref{ineq:sp}.

For the right-hand side of \eqref{ineq:sp}, we note that if $V \ge 0$ (which occurs if $\alpha \le 2$), then we immediately obtain%\footnote{Note also that if $V$ is uniformly bounded (which occurs if $\alpha=2$), then we may just estimate directly $\left(\kappa-V+|\frac{d\phi_{\kappa,\epsilon}}{dr^\star}|^2\right)$ by $\kappa+\epsilon+|| V||_{\infty}$.}
\begin{align} \label{aux2}
\int_{\Omega^-_\epsilon}\left[\kappa-V+\Big|\frac{d \phi_{\kappa,\epsilon}}{dr^\star}\Big|^2\right] e^{2 \frac{\phi_{\kappa,\epsilon}}{h}}|u|^2 dr^\star \le (\kappa+\epsilon)e^{2 a(\epsilon)/h}|| u ||^2_{L^2(r^\star_{\max},\pi/2)} \, ,
\end{align}
so that combining (\ref{intest}) and (\ref{aux2}) yields (\ref{maingp}).

To obtain (\ref{maingp}) also in the case $\alpha > 2$ (for which we have $V(r^\star)\rightarrow -\infty$ as $r^\star \rightarrow \pi/2$), we need once again to appeal to a Hardy-type inequality to absorb the error by the derivative term on the left-hand side of \eqref{ineq:sp}. 

This we do as follows. Recall that $V=V_\sigma+h^2( V_{junk} + V_{mass})$, and that the unbounded term is $h^2 V_{mass}=h^2 \frac{2-\alpha}{l^2} \frac{ \Delta_- r^2}{\left( r^2+a^2\right)^2}<0$ for $\alpha>2$.

Note that $\frac{ \Delta_- r^2}{\left( r^2+a^2\right)^2}=\frac{ \Delta_-}{\left( r^2+a^2\right)}-\frac{ \Delta_- a^2}{\left( r^2+a^2\right)^2}$. The second term is bounded (and in fact will contribute with the right sign if $\alpha \ge 2)$ so its contribution can be treated as before. Thus, we only need to estimate $\int_{\Omega^-_\epsilon} \frac{ \Delta_-}{\left( r^2+a^2\right)} e^{2 \frac{\phi_{\kappa,\epsilon}}{h}}|u|^2 dr^\star$.

By Lemma 7.2  of \cite{gs:dpkads} (cf.~(\ref{Hardy})) we have for any function $v$ in $H^1_0(r^\star_{\max},\pi/2)$ 
%and for any $R^\star$ sufficiently large
%
%\begin{eqnarray}
%\int^{\pi/2}_{R^\star} \frac{ \Delta_-}{\left( r^2+a^2\right)} |v|^2dr^\star&=&\int^{\pi/2}_{R^\star }\frac{ dr }{dr^\star} |v|^2dr^\star=-r(R^\star) |v|^2 -\int_{R^\star}^{\pi/2} 2 r v \frac{d v}{dr^\star} dr^\star \nonumber\\
                                                                    %                                                      &\le& 2 \left(\int_{R^\star}^{\pi/2} r^2 |%v|^2 dr^* \right)^{1/2}\left(\int_{R^\star}^{\pi/2} \Big|\frac{dv}{dr^*}\Big|^2 dr^* \right)^{1/2}\nonumber
%\end{eqnarray}
% Since $\frac{ \Delta_-}{\left( r^2+a^2\right)}=\frac{r^2}{l^2}+O(1)$ it follows that
%
\begin{eqnarray}
\int^{\pi/2}_{R^\star} \frac{ \Delta_-}{\left( r^2+a^2\right)} |v|^2dr^\star \le 4 l^2 \int_{R^\star}^{\pi/2} \Big|\frac{dv}{dr^*}\Big|^2 dr^* \ \ \ \textrm{for any $R^\star \geq r^\star_{max}$} \, .
\end{eqnarray}
Applying the above Hardy inequality to $v=e^{\phi_{\kappa,\epsilon}/h}u$, we obtain that there exists a uniform constant $C>0$ such that
\begin{align} 
\int^{\pi/2}_{r^\star_{\max}} h^2 \left(9/4-\alpha \right)\left|\frac{d}{d r^\star} e^{ \frac{\phi_{\kappa,\epsilon}}{h}} u \right|^2 dr^\star +\int_{\Omega^+_\epsilon}\left(V-\kappa-\Big|\frac{d \phi_{\kappa,\epsilon}}{dr^\star} \Big|^2\right) e^{2 \frac{\phi_{\kappa,\epsilon}}{h}}|u|^2 dr^\star
\nonumber \\
\leq C (\kappa+\epsilon)e^{2 a(\epsilon)/h}|| u ||^2_{L^2(r^\star_{\max},\pi/2)} \, ,
\nonumber
\end{align}
i.e.~there exists a constant $D>0$ (which degenerates as $\alpha \rightarrow 9/4$) such that 
\begin{eqnarray} \label{ineq:aph}
&&\int^{\pi/2}_{r^\star_{\max}} h^2 \left|\frac{d}{d r^\star} e^{ \frac{\phi_{\kappa,\epsilon}}{h}} u \right|^2 dr^\star +\int_{\Omega^+_\epsilon}\left(V_\sigma-\kappa-\Big|\frac{d \phi_{\kappa,\epsilon}}{dr^\star} \Big|^2\right) e^{2 \frac{\phi_{\kappa,\epsilon}}{h}}|u|^2 dr^\star
\nonumber \\
&\le&\hbox{}\,\, D (\kappa+\epsilon)e^{2 a(\epsilon)/h}|| u ||^2_{L^2(r^\star_{\max},\pi/2)}.
\end{eqnarray}
This estimate, when combined with (\ref{intest}), yields again \eqref{maingp} from (\ref{ineq:sp}).
\end{proof}

\subsection{Application of the main estimate}
Before we can exploit \eqref{maingp}, we need the following Lemma, which quantifies the size of the forbidden region for a given energy level.
\begin{lemma}\label{lem:fpp} 
Let $E \in \left(\frac{1}{l^2},V_{\max}\right)$ and suppose that $\kappa \in (\frac{1}{l^2},E+\delta]$ for some $\delta>0$ such that $E+\delta < V_{\max}$. Then there exists a $\delta' > 0$ and a $C>0$, both constants being independent of $h$, such that $V_\sigma-\kappa> 2C$, in $[r_{\max},r_{\max}+\delta']$, for all $\kappa \in [E-\delta,E+\delta]$. 
%Let $r_{\max}$,$V_{\max}$ be as in Lemma \ref{lem:popro}. Assume that $E \in (\frac{1}{l^2},V_{\max})$. Then, for all $\epsilon > 0$, there exists a $h_0 > 0$ and a $\delta_0 > 0$ small enough such that, for all $(h,\delta) \in (0,h_0] \times (0,\delta_0]$, $[r^\star_{\max}- \delta,r^\star_{\max}+\delta] \subset \Omega^+_\epsilon$. %, for all $h \in (0,h_0]$. 
\end{lemma}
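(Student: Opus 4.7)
The strategy is essentially a continuity argument about $V_\sigma$, exploiting that the potential $V_\sigma$ does not depend on $h$ at all, so any constants derived from its analytic behaviour are automatically $h$-independent.

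First I would note the key pointwise fact at the local maximum. By Lemma \ref{lem:popro}, $V_\sigma(r_{\max}) = V_{\max}$, and by hypothesis $\kappa \leq E + \delta$ with $E+\delta < V_{\max}$. Setting
\[
3C := V_{\max} - (E+\delta) > 0,
\]
we get a uniform gap $V_\sigma(r_{\max}) - \kappa \geq 3C$ for every $\kappa$ in the allowed range $[E-\delta, E+\delta]$. Note that $C$ depends only on the fixed parameters $M, a, l, \alpha$ (through $V_{\max}$) and on the chosen $E, \delta$; crucially it is independent of $h$.

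Next I would upgrade this pointwise bound to a uniform bound on a small interval using continuity. Since $V_\sigma$ is a smooth function of $r$ and $r^\star$ is a diffeomorphism onto its image on a neighbourhood of $r_{\max}$, there is a modulus of continuity for $V_\sigma$ at $r_{\max}$: one can choose $\delta' > 0$ (depending only on $M, a, l$ and $C$) such that
\[
|V_\sigma(r) - V_\sigma(r_{\max})| \leq C \quad \text{for all } r \in [r_{\max}, r_{\max} + \delta'].
\]
Combining with the previous step, for every $r$ in this interval and every admissible $\kappa$,
\[
V_\sigma(r) - \kappa \;=\; \bigl(V_\sigma(r_{\max})-\kappa\bigr) - \bigl(V_\sigma(r_{\max})-V_\sigma(r)\bigr) \;\geq\; 3C - C \;=\; 2C,
\]
which is the desired conclusion.

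There is no real obstacle here: the only thing to verify is that $\delta'$ and $C$ can be chosen independently of $h$. This is automatic because $V_\sigma$ (and hence $V_{\max}$, $r_{\max}$, and its modulus of continuity) is determined by the fixed parameters $M, a, l$ alone, while the potential perturbations $h^2(V_{junk} + V_{mass})$ present in the full operator $P(h)$ do not enter this statement --- the lemma is phrased purely in terms of $V_\sigma$. This is precisely why one defines the Agmon distance using $V_\sigma$ rather than the full potential: one obtains an $h$-uniform forbidden-region estimate near $r_{\max}$, which can then be combined with (\ref{maingp}) to yield exponential smallness of $u_h$ on $[r_{\max}, r_{\max}+\delta']$.
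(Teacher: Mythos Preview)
Your proof is correct and is precisely the continuity argument the paper has in mind; the paper's own proof is the single sentence ``This is a simple consequence of the continuity of $V_\sigma$ at $r^\star_{\max}$,'' which you have spelled out explicitly. One small inaccuracy in your closing remark: the paper actually defines the Agmon distance using the \emph{full} potential $V = V_\sigma + h^2(V_{junk}+V_{mass})$, not $V_\sigma$ alone --- the role of this lemma is rather to feed into the immediately following Corollary, which transfers the bound to $V-\kappa$ for $h$ small.
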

\begin{proof}
This is a simple consequence of the continuity of $V_\sigma$ at $r^\star_{\max}$.
\end{proof}
In view of the full potential being $V = V_\sigma + h^2 \left(V_{junk} + V_{mass}\right)$ we also obtain
\begin{corollary} 
For $h$ sufficiently small (depending only on $M$, $l$ and $a$) we have $V-\kappa > C$ in $[r_{\max},r_{\max}+\delta']$ for all $\kappa \in [E-\delta,E+\delta]$ with both $\delta^\prime$ and $C$ depending only on $M$, $l$ and $a$.
\end{corollary}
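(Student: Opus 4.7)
The plan is to deduce this directly from Lemma \ref{lem:fpp} by absorbing the $h^2(V_{junk}+V_{mass})$ correction into the lower bound on $V_\sigma - \kappa$. First I would recall from Lemma \ref{lem:fpp} that there exist $\delta' > 0$ and $C > 0$, depending only on $M$, $l$, $a$ (and $E$, $\delta$, but these are fixed by the setup), such that
\[
V_\sigma(r^\star) - \kappa \geq 2C \quad \text{for all } r \in [r_{\max}, r_{\max}+\delta'],\ \kappa \in [E-\delta, E+\delta].
\]

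Next I would observe that on the compact interval $[r_{\max}, r_{\max}+\delta']$, the functions $V_{junk}$ and $V_{mass}$ (as given by \eqref{eq:ava}) are smooth and in particular uniformly bounded: indeed, the interval stays strictly away from the conformal boundary $r^\star = \pi/2$ where $V_{mass}$ blows up, and $V_{junk}$ is uniformly bounded on $[r_+,\infty)$ as noted after its definition. Hence there exists a constant $K = K(M,l,a,\alpha,\delta')$ such that
\[
|V_{junk}(r^\star) + V_{mass}(r^\star)| \leq K \quad \text{for all } r \in [r_{\max}, r_{\max}+\delta'].
\]

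Then, writing $V = V_\sigma + h^2(V_{junk}+V_{mass})$ as in the statement, I would estimate
\[
V - \kappa \;=\; (V_\sigma - \kappa) + h^2(V_{junk}+V_{mass}) \;\geq\; 2C - h^2 K.
\]
Choosing $h$ so small that $h^2 K \leq C$ (which depends only on $M$, $l$, $a$ and $\alpha$) yields $V - \kappa \geq C$ throughout $[r_{\max}, r_{\max}+\delta']$, as required. There is no real obstacle here; the only point that deserves a line of justification is the boundedness of $V_{junk}+V_{mass}$ on the chosen interval, which is immediate because the potentially singular behavior of $V_{mass}$ occurs only as $r \to \infty$ ($r^\star \to \pi/2$), far from our localized region.
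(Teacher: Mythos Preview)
Your proposal is correct and follows exactly the reasoning the paper intends: the paper does not give a separate proof of this corollary but simply prefaces it with ``In view of the full potential being $V = V_\sigma + h^2(V_{junk}+V_{mass})$ we also obtain\ldots'', which is precisely the absorption argument you have written out. Your explicit justification that $V_{junk}+V_{mass}$ is bounded on the compact interval $[r_{\max},r_{\max}+\delta']$ (away from $r^\star=\pi/2$) is the only thing the paper leaves implicit.
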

With $E \in \left(\frac{1}{l^2},V_{\max}\right)$ given, we now fix $\delta'>0$ and $C>0$ as promised by Lemma \ref{lem:fpp}. This implies that $\phi_{\kappa,\epsilon} \geq c_{M,l,a}$ in $\left[r_{max},r_{max}+\delta^\prime\right]$ uniformly in $\epsilon$ (the constant $c_{M,l,a}$ being of size $C\cdot\delta^\prime$). Next we fix $\epsilon>0$ sufficiently small so that in particular
$a(\epsilon) \le c_{M,l,a}/2$. We finally conclude from \eqref{maingp} that there exists a $\widetilde{C}>0$ (independent of $h$) such that
\begin{align} \label{no1}
\int_{r^\star_{\max}}^{r^\star_{\max}+\delta'} |u|^2 dr^\star\le \widetilde{C} e^{-\widetilde{C}/h}|| u ||^2_{L^2(r^\star_{\max},\pi/2)} \, .
\end{align}
Turning to the derivative term on the left of \eqref{ineq:aph}, we also have
\begin{align*}
\int_{r^\star_{\max}}^{r^\star_{\max}+\delta'} h^2 e^{2 \phi_{\kappa,\epsilon}/h}\bigg( \frac{1}{h^2}\left|\frac{d \phi_{\kappa,\epsilon}}{dr^\star}\right|^2|u|^2 +\frac{2}{h} \frac{d \phi_{\kappa,\epsilon}}{dr^\star} u& \frac{d u}{dr^\star}+\left|\frac{d u}{dr^\star}\right|^2 \bigg)dr^\star\lesssim \\
&\hbox{} e^{2a(\epsilon)/h}|| u ||^2_{L^2(r^\star_{\max},\pi/2)} \, .
\end{align*}
The $|u|^2$ term in the above integral can be ignored since it has the right sign. The crossterm can be absorbed using (\ref{no1}) and $\frac{1}{2}$ of the derivative-term. Therefore,
$$
\int_{r^\star_{\max}}^{r^\star_{\max}+\delta'} \left|\frac{d u}{dr^\star}\right|^2 dr^\star\le \widetilde{C} h^{-2} e^{-\widetilde{C}/h}|| u ||^2_{L^2(r^\star_{\max},\pi/2)} \, .
$$
\[
\begin{picture}(0,0)%
\includegraphics{agmon2.pstex}%
\end{picture}%
\setlength{\unitlength}{1342sp}%
\begingroup\makeatletter\ifx\SetFigFont\undefined%
\gdef\SetFigFont#1#2#3#4#5{%
  \reset@font\fontsize{#1}{#2pt}%
  \fontfamily{#3}\fontseries{#4}\fontshape{#5}%
  \selectfont}%
\fi\endgroup%
\begin{picture}(8874,5071)(1189,-6120)
\put(4785,-5285){\makebox(0,0)[lb]{\smash{{\SetFigFont{6}{7.2}{\rmdefault}{\mddefault}{\updefault}{\color[rgb]{0,0,0}$\Omega_\epsilon^+$}%
}}}}
\put(9686,-5823){\makebox(0,0)[lb]{\smash{{\SetFigFont{5}{6.0}{\rmdefault}{\mddefault}{\updefault}{\color[rgb]{0,0,0}$r^\star$}%
}}}}
\put(4873,-2971){\makebox(0,0)[lb]{\smash{{\SetFigFont{5}{6.0}{\rmdefault}{\mddefault}{\updefault}{\color[rgb]{0,0,0}$2\delta^\prime$}%
}}}}
\put(4896,-6051){\makebox(0,0)[lb]{\smash{{\SetFigFont{5}{6.0}{\rmdefault}{\mddefault}{\updefault}{\color[rgb]{0,0,0}$r^\star_{max}$}%
}}}}
\put(7651,-2161){\makebox(0,0)[lb]{\smash{{\SetFigFont{5}{6.0}{\rmdefault}{\mddefault}{\updefault}{\color[rgb]{0,0,0}$\kappa$}%
}}}}
\put(7576,-1861){\makebox(0,0)[lb]{\smash{{\SetFigFont{5}{6.0}{\rmdefault}{\mddefault}{\updefault}{\color[rgb]{0,0,0}$\kappa+\epsilon$}%
}}}}
\put(3287,-4786){\makebox(0,0)[lb]{\smash{{\SetFigFont{6}{7.2}{\rmdefault}{\mddefault}{\updefault}{\color[rgb]{0,0,0}$\Omega_\epsilon^-$}%
}}}}
\put(6047,-4796){\makebox(0,0)[lb]{\smash{{\SetFigFont{6}{7.2}{\rmdefault}{\mddefault}{\updefault}{\color[rgb]{0,0,0}$\Omega_\epsilon^-$}%
}}}}
\end{picture}%

\]
Summarizing these decay estimates, we have proven:
\begin{lemma} \label{lem:dec}
Let $E \in (\frac{1}{l^2},V_{\max})$ be fixed and let $\delta$ be sufficiently small so that $[E-\delta,E+\delta] \subset (\frac{1}{l^2},V_{\max})$. Then, there exists constants $D,\delta'>0$ depending only on the parameters $M$, $l$, $a$ and $\alpha$, such that the sequence of eigenfunctions $\left[u_\ell\right]_{\ell \geq L}^\infty$ arising from Proposition \ref{prop:eig2} satisfies the estimate
$$
\int_{r^\star_{\max}}^{r^\star_{\max}+\delta'}\left(  \left|\frac{d u_\ell}{dr^\star}\right|^2+ |u_\ell|^2 \right) dr^\star\le D e^{-D/h}|| u_\ell ||^2_{L^2(r^\star_{\max},\pi/2)} \, ,
$$
where $h=\left(\mu_\ell\left(a^2 \omega_\ell^2\right)\right)^{-1/2}$ and $\omega_\ell$ are as in Proposition \ref{prop:eig2}. 
\end{lemma}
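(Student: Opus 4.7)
\textbf{Proof plan for Lemma \ref{lem:dec}.} The plan is to apply the main Agmon estimate \eqref{maingp} to each eigenfunction $u_\ell$ from Proposition \ref{prop:eig2}, specializing parameters to exploit the potential barrier created by $V_\sigma - \kappa$ on a thin strip above $r_{\max}^\star$. Proposition \ref{prop:eig2} guarantees that $\kappa := h^2\omega_\ell^2 \in (1/l^2, E+\delta]$ and that $\kappa$ is uniformly bounded. Lemma \ref{lem:fpp} together with its corollary then produces $\delta', C > 0$ depending only on $M,l,a,\alpha$, and the pointwise inequality $V - \kappa \geq C$ throughout $[r_{\max}^\star, r_{\max}^\star + \delta']$ for all sufficiently small $h$. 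In particular, the whole strip lies inside the forbidden set $\Omega^+_\epsilon$ once $\epsilon < C$.

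Next I would use this barrier to derive a uniform lower bound on the Agmon weight $\phi_{\kappa,\epsilon} = (1-\epsilon)d_\kappa$. Since $\sqrt{V_\sigma - \kappa} \geq \sqrt{C}$ on $[r_{\max}^\star, r_{\max}^\star + \delta']$, the monotonicity of the Agmon integrand yields $d_\kappa(r^\star) \geq \sqrt{C}(r^\star - r_{\max}^\star)$, so on a sub-interval of the form $[r_{\max}^\star + \delta'/2, r_{\max}^\star + \delta']$ one has $\phi_{\kappa,\epsilon} \geq c_0$ for an explicit $c_0 > 0$ independent of $h$. Choose $\epsilon > 0$ small enough that $a(\epsilon) = \sup_{\Omega^-_\epsilon} d_\kappa \leq c_0/2$; this is possible because $a(\epsilon) \to 0$ as $\epsilon \to 0$ uniformly in $\kappa$ and $h$. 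Plugging into \eqref{maingp}, the term $\epsilon^2 \int_{\Omega^+_\epsilon} e^{2\phi_{\kappa,\epsilon}/h}|u_\ell|^2$ is bounded below by $\epsilon^2 e^{2c_0/h}\int|u_\ell|^2$ over the sub-interval, while the right-hand side is $\lesssim e^{c_0/h}\|u_\ell\|_{L^2}^2$ using the boundedness of $\kappa$. Rearranging yields $\int |u_\ell|^2 dr^\star \leq \widetilde{C}\,e^{-\widetilde{C}/h}\|u_\ell\|_{L^2}^2$ on the sub-interval, and a second application with $\delta'$ replaced by a slightly smaller quantity (together with the Dirichlet condition $u_\ell(r_{\max}^\star)=0$) extends the bound to all of $[r_{\max}^\star, r_{\max}^\star + \delta']$.

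For the derivative bound, I would expand the first term on the left of \eqref{maingp}: from $(e^{\phi_{\kappa,\epsilon}/h}u)' = e^{\phi_{\kappa,\epsilon}/h}\bigl(u' + (\phi_{\kappa,\epsilon}'/h)u\bigr)$ one obtains
\[
 h^2 \int e^{2\phi_{\kappa,\epsilon}/h}|u_\ell'|^2 \, dr^\star + 2h\int e^{2\phi_{\kappa,\epsilon}/h}\phi_{\kappa,\epsilon}'\, u_\ell u_\ell'\, dr^\star + \int e^{2\phi_{\kappa,\epsilon}/h}|\phi_{\kappa,\epsilon}'|^2 |u_\ell|^2 \, dr^\star.
\]
Since $|\phi_{\kappa,\epsilon}'|^2 \leq V_\sigma - \kappa$ is bounded uniformly in $h$, Cauchy--Schwarz absorbs the cross-term into half of the derivative integral plus the $L^2$-bound already obtained. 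This produces $\int |u_\ell'|^2 dr^\star \leq \widetilde{C} h^{-2} e^{-\widetilde{C}/h}\|u_\ell\|_{L^2}^2$. The harmless polynomial factor $h^{-2}$ is then absorbed into the exponential at the cost of slightly shrinking the constant $D$, yielding the lemma. The main subtlety is that $\phi_{\kappa,\epsilon}$ vanishes at $r_{\max}^\star$ so the exponential suppression only activates after a short buffer; this forces the two-step argument of first localizing away from the boundary and then propagating back using the Dirichlet condition.
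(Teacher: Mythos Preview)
Your overall strategy matches the paper's: apply \eqref{maingp}, exploit the barrier from Lemma \ref{lem:fpp} to get a uniform lower bound on $\phi_{\kappa,\epsilon}$ on the strip, then extract the derivative bound by expanding the square and absorbing the cross-term. The one genuine error is your claim that ``$\phi_{\kappa,\epsilon}$ vanishes at $r^\star_{\max}$''. Recall that $d_\kappa(r^\star)$ is the Agmon distance from $r^\star$ to the classical region $\{V\le\kappa\}$; on the domain $[r^\star_{\max},\pi/2]$ the potential $V_\sigma$ decreases monotonically from $V_{\max}$, so the classical region lies entirely to the \emph{right} of the barrier, beyond a turning point $r^\star_c>r^\star_{\max}+\delta'$. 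Hence $d_\kappa$ is \emph{largest} at $r^\star_{\max}$, not zero, and your formula $d_\kappa(r^\star)\ge\sqrt{C}(r^\star-r^\star_{\max})$ goes in the wrong direction.

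With the correct picture the two-step argument evaporates: after possibly halving $\delta'$ one has $\phi_{\kappa,\epsilon}\ge c_0$ uniformly on all of $[r^\star_{\max},r^\star_{\max}+\delta']$, including the left endpoint, and the $L^2$ bound \eqref{no1} follows directly from \eqref{maingp} exactly as in the paper. The Dirichlet condition plays no role here; your ``propagation back'' step is neither needed nor, as written, justified (pointwise vanishing of $u_\ell$ at $r^\star_{\max}$ does not by itself control $\int|u_\ell|^2$ on a neighborhood without first having the derivative bound). Your handling of the derivative term is essentially the paper's.
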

We remark that by reusing once again the equation, we can obtain such an exponential decay estimates on all higher order derivatives, with the constants in the above lemma depending on the order of commutation.
\section{The construction of quasimodes} \label{se:cq}
By now we have established the existence (Proposition \ref{prop:eig2}) of a sequence of functions $[u_\ell]$ such that for each $\ell$ the corresponding $u_\ell$ solves
$$
 \omega_\ell^2 h^2 u_\ell =P(h) u_\ell,
$$
where $h=\left(\mu_{\ell}(a^2\omega^2_\ell))\right)^{-1/2} \rightarrow 0$ as $\ell \rightarrow \infty$, and such that these $u_\ell$ obey the estimate of Lemma \ref{lem:dec} with some constants $D,\delta'>0$ independent of $h$ (or equivalently $\ell$). \newline

Let now $\chi$ be a smooth function such that $\chi=1$ on $[r^\star_{\max}+\delta',\pi/2]$ and $\chi=0$ on $(-\infty,r^\star_{\max}]$. We then define $\psi_\ell(t,r,\theta,\tilde{\phi})$ as 
\begin{align} \label{qmode}
\psi_\ell(t,r,\theta,\tilde{\phi})=e^{i\omega_\ell t} \chi(r^\star(r)) (r^2+a^2)^{-1/2} u_\ell(r^\star(r)) S_{\ell 0}(\theta).\end{align}

\begin{remark}
As defined above, the $\psi_\ell$ are complex functions, but of course, we could have worked below with $Re ( \psi_\ell)$ or $Im  (\psi_\ell)$.
\end{remark}

The next Lemma shows that the $\psi_\ell$ satisfy the Klein-Gordon equation up to an exponentially small error:
\begin{lemma} \label{lem:Flest}
For each $\ell$ and each $k\ge 0$, $\psi_\ell \in CH^k_{AdS}$. Moreover, there exists $L > 0$ such that we have the following estimates. For all $k \ge 0$, there exists a $C_k>0$ such that for all $\ell \ge L$, for all $t \ge t_0$, 
$$
||\square_g \psi_\ell+\frac{\alpha}{l^2}\psi_\ell ||_{H^k_{Ads}(\Sigma_t)} \le C_k e^{-C_k \ell} || \psi_\ell ||_{H^0_{AdS}\left(\Sigma_{t_0}\right)}.
$$
Finally, all the $H^k_{AdS}$ norms of $\psi_\ell$ and of its time derivatives on each $\Sigma_t$ are constant in $t$.
\end{lemma}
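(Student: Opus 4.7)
I would prove the three claims in turn.

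\textbf{Regularity and time-constancy of the norms.} Since $\chi \equiv 0$ on $(-\infty, r^\star_{\max}]$, the function $\psi_\ell$ vanishes identically in a neighborhood of $\mathcal{H}^+$, and so extends smoothly across the horizon. Near $r^\star = \pi/2$, the finiteness condition $\int_{r^\star_{\max}}^{\pi/2}(|u_\ell'|^2 + r^2|u_\ell|^2)\,dr^\star < \infty$ from Proposition \ref{prop:eig2} puts $\psi_\ell$ into $H^1_{AdS}(\Sigma_t)$; using the ODE of that proposition to express $u_\ell''$ in terms of $u_\ell, u_\ell'$ and smooth potentials, induction then gives arbitrary $H^k_{AdS}$-regularity and smooth dependence on $t^\star$, so $\psi_\ell \in CH^k_{AdS}$. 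Furthermore, the support of $\psi_\ell$ lies in $\{r \ge r_{\max}\} \subset \{r \ge r_{cut}\}$, where the coordinate systems $(t,\tilde\phi)$ and $(t^\star,\phi)$ coincide. On that set $\psi_\ell = e^{i\omega_\ell t^\star}\varphi_\ell(r,\theta)$, each $\partial_{t^\star}$ produces a factor $i\omega_\ell$ of unit modulus and leaves the spatial profile untouched, and hence every $H^k_{AdS}$ norm of $\psi_\ell$ and of its time derivatives is independent of $t^\star$.

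\textbf{The error estimate for $k=0$.} By the separation-of-variables computation of Section \ref{se:sere} combined with Proposition \ref{prop:eig2}, the (uncut) profile $e^{i\omega_\ell t}(r^2+a^2)^{-1/2}u_\ell(r^\star)S_{\ell 0}(\theta)$ is an exact axisymmetric solution of $(\square_g + \alpha/l^2)\,\cdot\, = 0$ on $(r^\star_{\max},\pi/2)$. Consequently $F_\ell$ vanishes wherever $\chi' = \chi'' = 0$, and on the strip $[r^\star_{\max}, r^\star_{\max}+\delta']$ it takes the form
\[
F_\ell = e^{i\omega_\ell t}\bigl[\,a_1(r,\theta)\,\chi'(r^\star)\,u_\ell'(r^\star) + a_0(r,\theta)\,\chi''(r^\star)\,u_\ell(r^\star)\,\bigr]\, S_{\ell 0}(\theta),
\]
with smooth, $\ell$-independent coefficients $a_0, a_1$. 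Since all weights $r, r^\star$ are uniformly bounded above and below on this strip,
\[
\|F_\ell\|_{H^0_{AdS}(\Sigma_t)}^2 \lesssim \int_{r^\star_{\max}}^{r^\star_{\max}+\delta'}\bigl(|u_\ell'|^2 + |u_\ell|^2\bigr)\,dr^\star \le D\, e^{-D/h}\,\|u_\ell\|_{L^2(r^\star_{\max},\pi/2)}^2,
\]
by Lemma \ref{lem:dec}. Using $r \ge r_{\max}$ on $\mathrm{supp}(\psi_\ell)$ and $V_\sigma \ge 1/l^2$, a short argument (absorbing the exponentially small contribution from the transition strip) yields $\|u_\ell\|_{L^2(dr^\star)}^2 \lesssim \|\psi_\ell\|_{H^0_{AdS}(\Sigma_{t_0})}^2$. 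Since $h^{-1} = \sqrt{\mu_\ell(a^2\omega_\ell^2)} \gtrsim \ell$ by Proposition \ref{prop:eig2}, one obtains the stated $k=0$ bound.

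\textbf{Higher $k$ and the main obstacle.} The Killing vectors $\partial_{t^\star}$ and $\partial_{\tilde\phi}$ commute with $\square_g$; $\partial_{\tilde\phi}$ annihilates $\psi_\ell$ by axisymmetry, and each $\partial_{t^\star}$ costs only a factor $\omega_\ell = \mathcal{O}(\ell)$. For radial and angular derivatives of $F_\ell$, one iterates the ODE to trade any $u_\ell^{(i)}$ with $i \ge 2$ for combinations of $u_\ell, u_\ell'$ with coefficients of size $\mathcal{O}(\ell^{i})$, and then invokes the commuted Agmon estimate mentioned at the end of Section \ref{se:ae}. Polynomial-in-$\ell$ prefactors are absorbed into $e^{-D\ell/2}$ at the cost of a $k$-dependent constant, giving the $C_k e^{-C_k\ell}$ bound. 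I expect no essential obstacle beyond bookkeeping of the $\ell$-powers and verifying uniform boundedness of all weights on the compact transition strip $[r^\star_{\max}, r^\star_{\max}+\delta']$; no new analytic input beyond Lemma \ref{lem:dec} is required.
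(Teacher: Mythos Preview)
Your proposal is correct and follows essentially the same approach as the paper: localize the error to the transition strip, apply the Agmon estimate of Lemma~\ref{lem:dec}, and for higher $k$ iterate the radial ODE (and the angular eigenvalue equation for $S_{\ell 0}$) to trade derivatives for polynomial-in-$\ell$ factors absorbed into the exponential. Your $k=0$ computation is in fact slightly cleaner than the paper's---you use the ODE cancellation upfront to see that only $\chi'' u_\ell$ and $\chi' u_\ell'$ survive, whereas the paper records the cruder pointwise bound $|F_\ell|\lesssim(\omega^2|u_\ell|+|u_\ell''|+|u_\ell'|+h^{-2}|u_\ell|)|S_{\ell 0}|$ and then invokes the ODE to handle $u_\ell''$---but the content is the same.
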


\begin{proof}
Note that by standard elliptic estimates, any $u_\ell$ is smooth on $(r^\star_{\max},\pi/2)$. Thus, as far as the regularity of $\psi_\ell$ is concerned, it is sufficient to check that $\psi_\ell$ and its derivative decay sufficiently fast near $r=\infty$, which is easy and therefore omitted.

Moreover, in view of our construction, we have $\square_g \psi_\ell =0$ in $[r_+,r_{\max}] \cup [r(r^\star_{\max}+\delta'),\infty]$. Hence, the error is supported in a bounded strip in which we have the following naive estimate: For all $(t,r,\theta,\phi)$ with $r \in [r_{\max},r_{\max}+\delta']$,
$$
|\square_g \psi_\ell + \frac{\alpha}{l^2}\psi_\ell| \lesssim \left( \omega^2 |u_\ell| + |u_\ell''|+|u'_\ell|+h^{-2}|u_\ell|\right) S_{\ell 0}(\theta) ,
$$
which gives the required estimate for $k=0$ after integration, using the Agmon estimates of the previous section and the equation satisfied by $u_\ell$ in order to estimate $u''_\ell$. 
For higher $k$, it suffices to commute the equation and to use the equation for $u_\ell$ every time two radial derivatives occur, or the equation for $S_{\ell 0}$ every time angular derivatives occur.
\end{proof}

Note that we finally proved Theorem \ref{th:main}. Indeed, the $\psi_\ell$ are of the form claimed in \ref{prop1}.~by construction of (\ref{qmode}). The estimate on the $\omega_\ell$ in \ref{prop2}.~was obtained as part of Proposition \ref{prop:eig2}. The error-estimate \ref{prop3}.~ist the statement of Lemma \ref{lem:Flest}, while the localization properties \ref{prop4}.~and \ref{prop5}.~are obvious from (\ref{qmode}) itself.

\section{Proof of Corollary \ref{cor}}
In this section, we prove Corollary \ref{cor}. Given the quasimodes, the proof is standard, but we include it for the paper to be self contained. 

Let us therefore fix a Kerr-AdS spacetime such that the assumption of Corollary \ref{cor} are satisfied and also a Klein-Gordon mass $\alpha < 9/4$. For convenience, we set $t^\star_0=0$. Recall also that $t^\star=t$ in $r \ge r_{\max}$.

 We shall consider solutions $\psi$ to homogeneous and inhomogeneous Klein-Gordon equations with initial data $\psi|_{\Sigma_t}$ and $\partial_t \psi |_{\Sigma_t}$ given on slices of constant $t$. We shall avoid completely issues regarging the facts that $\partial_t$ is not always timelike and that the coordinate $t$ breaks down at the horizon by considering only axisymmetric data which is compactly supported away from the horizon.

Thus, given any $t,s \in \mathbb{R}$ and given any smooth, axisymmetric initial data set $w=\left(\overline{\psi},\overline{T\psi}\right)$, whose support is bounded away from the horizon and which decays sufficiently fast near infinity, we will denote by $P(t,s)w$ the unique solution at time $t$ of the homogenous problem
\begin{eqnarray}
\left(\square_g + \frac{\alpha}{l^2} \right)\psi&=&0, \nonumber \\ 
\psi|_{\Sigma_s}&=&\overline{\psi}, \nonumber \\ 
\frac{\partial \psi}{\partial t}\phantom{}\Big|_{\Sigma_s}&=&\overline{T\psi}. \nonumber
\end{eqnarray}

Given a smooth axisymmetric function $F$ defined on $\mathcal{R}$, compactly supported in $r$ away from the horizon and infinity,  we can consider the inhomogeneous problem
\begin{eqnarray}
\left(\square_g + \frac{\alpha}{l^2} \right)\psi&=& F \nonumber \\ 
\psi|_{\Sigma_0}&=&\overline{\psi}, \nonumber \\ 
\frac{\partial \psi}{\partial t}\phantom{}\Big|_{\Sigma_0}&=&\overline{T\psi}. \nonumber
\end{eqnarray}
For regular data as above, this problem is well-posed in $CH^2_{AdS}$ and we shall denote its solution by $\psi_F(t)$, suppressing the dependence on $r$ and the angular variables. 
%In view of Thanks tothe extra weight in the inhomogeneity and in view of the fact that $g^{tt}=-\frac{1}{1-\mu}$, we only need $F(t)$ to lies in $C^0(L^2_{AdS})$ to be able to apply D'Alembert formula. Under this condition, w
%Let us now assume that the data $w$ is axisymmetric and supported away from the horizon. 
If the data is axisymmetric, then $\psi_F$ will be axisymmetric and writing $v(s)=(0, F(s)(g^{tt})^{-1})$, $\psi_F(t)$ is given by the Duhamel formula
$$
\psi_F(t)=P(t,0)w+\int_0^t P(t,s)v(s)ds.
$$
%Since $\widetilde{\square}_g$ commutes with $\frac{\partial}{\partial_t}$, D'Alembert formula gives us

We now consider the family of $\psi_\ell$ given by Theorem \ref{th:main}. For each $\ell$, $\psi_\ell$ provides an initial data set $w_\ell=\left(\psi_\ell(t=0),\frac{\partial \psi_\ell}{\partial t}(t=0) \right)$ for \eqref{mwe} on the slice $t=0$. 
Moreover,  $\psi_\ell$ satisfy the inhomogeneous Klein-Gordon equation
\begin{eqnarray}
 \left(\square_g + \frac{\alpha}{l^2} \right)\psi_\ell&=&F_\ell \, , \nonumber
\end{eqnarray}
for some $F_\ell$ satisfying $||F_\ell||_{H^{k,-2}_{AdS}} \le C_k e^{-C_k\ell} || \psi_\ell ||_{H^0_{AdS}}$.
%For each $l$, $\psi_l$ provides an initial data set $w_l=\left(\psi_l(t=0),\frac{\partial \psi_l}{\partial t}(t=0) \right)$ for \eqref{mwe} on the slice $t=0$.

Let $\widetilde{\psi}_\ell$ denote the solution of the \emph{homogeneous} problem associated with the \emph{same} initial data $w_\ell$, i.e. $ \widetilde{\psi}_\ell=P(t,0)w_\ell$. From Duhamel's formula, we then get
\begin{eqnarray} \label{dfg}
||\psi_\ell- \widetilde{\psi}_\ell||_{H^1_{AdS}( \Sigma_t \cap \{ r \ge r_{\max}\})} &\le& t \sup_{ s \in [0,t]} ||P(t,s)(0,F_\ell)(s) ||_{H^1_{AdS}( \Sigma_t \cap \{ r \ge r_{\max}\})}  \nonumber \\
&\le& t C || F_\ell ||_{H^{0,-2}_{AdS}\left(\Sigma_0\right)} \nonumber \\
&\le& t C e^{-C \ell} || \psi_\ell ||_{H^0_{AdS}( \Sigma_0 \cap \{ r \ge r_{\max}\})}, \nonumber \\
&\le& t C e^{-C \ell} || \psi_\ell ||_{H^1_{AdS}( \Sigma_0 \cap \{ r \ge r_{\max}\})} \, ,
\end{eqnarray}
where we have used the boundedness statement of Theorem \ref{theo:pre} to bound \\$||P(t,s)(0,F_\ell)(s) ||_{H^0_{AdS}( \Sigma_t \cap \{ r \ge r_{\max}\})}$ in terms of the data, as well as Lemma \ref{lem:Flest}.
In particular, since the norms of $\psi_{\ell}$ are time invariant, for any $t \le  \frac{e^{C \ell}}{2C}$, the reverse triangle inequality and (\ref{dfg}) yield
\begin{eqnarray} \label{finishit}
\left(\int_{\Sigma_t \cap \{ r \ge r_{\max}\}} e_1\left[\widetilde{\psi}_\ell\right] \ r^2 dr \sin \theta d\theta d\phi \right)^{\frac{1}{2}} \geq || \widetilde{ \psi}_\ell ||_{H^1_{AdS}( \Sigma_t \cap \{ r \ge r_{\max}\})} \nonumber \\
\ge \frac{1}{2} ||  \widetilde{\psi}_\ell||_{H^1_{AdS}(\Sigma_0\cap \{ r \ge r_{\max}\}) }  \nonumber \\
\geq \frac{c}{2 \ell} \left[ || \Omega_i \widetilde{\psi}_\ell ||_{H^1_{AdS}(\Sigma_0 \cap \{ r \ge r_{\max}\})} + || \partial_{t^\star} \widetilde{\psi}_\ell ||_{H^1_{AdS}(\Sigma_0 \cap \{ r \ge r_{\max}\})}\right] \nonumber \\ 
\geq  \frac{c}{2 \ell} \left(\int_{\Sigma_0} e_2\left[\widetilde{\psi}_\ell\right] \ r^2 dr \sin \theta d\theta d\phi \right)^{\frac{1}{2}}. 
\end{eqnarray}
Here we have used -- in the step from the second to the third line -- that the data for $\widetilde{\psi}_\ell$ is frequency localized, which allows to exchange angular and time derivatives with powers of $\ell$ using the second item of Theorem \ref{th:main}, and radial derivatives by angular and time derivatives using the wave equation the $\widetilde{\psi}_\ell$ satisfy. From the third to the fourth line we exploited the fact that the data is localized in $r \ge r_{\max}$. The constant $c$ depends only on the parameters $M$, $l$, $a$ and $\alpha$.

Finally, setting $t_\ell=\frac{e^{C \ell}}{2C}$, we obtain from (\ref{finishit}) a family $(t_\ell, \psi_\ell)$ such that, for $\ell$ large enough,
$ Q \left[\widetilde{\psi}_\ell\right] \left(t_\ell\right) > C > 0$ holds for any $\ell$, which proves the Corollary.
\appendix
\section{The Improved boundedness statement} \label{app}
The  boundedness statement at the $H^2$-level proven in \cite{Holzegelwp, HolzegelAdS} is the estimate (\ref{bndhigher}) for the $\tilde{e}_2\left[\psi\right]$-based energies, cf.~Section \ref{sec:norms}. It is remarked in \cite{Holzegelwp} that stronger norms can be shown to be uniformly bounded using commutation by angular momentum operators leading to the statement (\ref{bndhigher}).  Since the latter statement has been used in this paper and also in \cite{gs:dpkads}, we provide here a sketch of the proof of this well-known (but absent from the literature) argument. We define the energies
\begin{align}
E_1\left[\psi\right]\left(t^\star\right) = \int_{\Sigma_{t^\star}} e_1 \left[\psi\right] \left(t^\star\right) \  r^2 \sin \theta dr d\theta d\phi
\end{align}
and with the obvious replacement, $E_2\left[\psi\right]\left(t^\star\right)$ and $\tilde{E}_2\left[\psi\right]\left(t^\star\right)$. Recall that uniform boudedness for the $\tilde{E}_2\left[\psi\right]$ energy is derived, in addition to known techniques near the horizon (cf.~the red-shift vector field), by commuting the Klein-Gordon equation with $\partial_t$ (which yields (\ref{bdness}) with $\psi$ replaced by $\partial_t \psi$) followed by elliptic estimates on spacelike slices, which control the $H^2_{AdS}$ norm. %There is a small gain of $H^{2,s}_{AdS}$ with $s>0$ depending on the mass which we will not need below. 

Let us sketch how to prove boundedness (\ref{bndhigher}) for the ${E}_2\left[\psi\right]$ energy. If we commute the Klein-Gordon equation with angular momentum operators we obtain
\begin{align} \label{commute}
\Box_g \left({\Omega}_i \psi \right) + \frac{\alpha}{l^2} \left({\Omega}_i \psi \right) \nonumber \\ =  2 {}^{({\Omega}_i)}\pi^{\mu \nu} \cdot \nabla_\mu \nabla_\nu \psi + \left[2\nabla^\alpha \left( {}^{({\Omega}_i)}\pi_{\alpha \mu} \right) - \nabla_\mu \left( {}^{({\Omega}_i)}\pi^{\alpha}_\alpha\right)  \right] \nabla^\mu \psi
\end{align}
with ${}^{({\Omega}_i)}\pi $ the (non-vanishing in Kerr!) deformation tensor of ${\Omega}_i$. The right hand side decays suitably in $r$ but not in $t$. More precisely, in view of the fact that there is no integrated decay estimate available, we cannot close the basic energy estimate on its own. Let us instead commute with localized angular momentum operators $\widetilde{\Omega}_i=\chi\left(r\right)\Omega_i$ where $\chi\left(r\right)$ is equal to $1$ for $r \geq 2R$ and equal to zero for $r\leq R$. Applying the energy estimate for the vectorfield $\partial_{t^\star}$, we can derive
\begin{align} \label{eest}
 ||\widetilde{\Omega}_i \psi_t ||_{H^{0,-2}_{AdS}(\Sigma_{t^\star} \cap \{r\geq 2R\})}+ ||\widetilde{\Omega}_i\psi ||_{H^1_{AdS}(\Sigma_{t^\star} \cap \{r\geq 2R\})} \nonumber \\
  \leq  C_{M,l,a,\alpha} \left[ ||\widetilde{\Omega}_i \psi_t ||_{H^{0,-2}_{AdS}(\Sigma_{t^\star_0})}+ ||\widetilde{\Omega}_i \psi ||_{H^1_{AdS}(\Sigma_{t^\star_0})}  \right] \nonumber \\
  +  \left(\tau_2-\tau_1\right)  \left[ \epsilon\sup_{\tau \in \left(\tau_1,\tau_2\right)} E_2\left[\psi\right]\left(\tau\right) + \epsilon \cdot \tilde{E}_2 \left[\psi\right] \left(0\right)\right] \, ,
\end{align}
where $\epsilon$ can be made small by choosing $R$ large. The last term arises from the spacetime error-term which decays strongly in $r$.

The idea is to combine this with an integrated decay estimate for the $\widetilde{\Omega}_i \psi$ which loses linearly in $\tau$. Recall that if $\Box \Psi + \frac{\alpha}{l^2}\Psi= f$, then we have the identity
\begin{align} \label{min}
 \nabla_a \Psi \nabla^a \psi - \frac{\alpha}{l^2}  \Psi^2 =  \nabla^\mu \left(\Psi \nabla_\mu \Psi \right) - g^{t^\star t^\star}\nabla_{t^\star} \Psi \nabla_{t^\star} \Psi - 2g^{t^\star b} \nabla_{t^\star} \Psi \nabla_b \Psi
  - f \Psi 
\end{align}
where $a,b$ run over $r,\theta, \phi$ only. When integrating this identity (with $\Psi$ replaced by $\widetilde{\Omega}_i \psi$ and $f$ being the error arising from the commutation in (\ref{commute})) with the usual spacetime volume we observe that 
\begin{itemize}
\item the left hand side is non-negative and controls all spatial derivatives after applying the standard Hardy inequality (cf.~(\ref{Hardy})), %which controls the zeroth order term by the radial derivative term 

\item the second and third term on the right are essentially controlled by the $\tilde{E}_2\left[\psi\right]$ energy times the length of the time interval, $\left(\tau_2-\tau_1\right)$:
\[
g^{t^\star t^\star} \nabla_{t^\star}\left( \widetilde{\Omega}_i \psi\right)  \nabla_{t^\star} \left( \widetilde{\Omega}_i \psi \right) \sim \frac{1}{r^2} r^2 |\partial_{t^\star} \slashed{\nabla} \psi|_{\slashed{g}}^2 =  |\partial_{t^\star} \slashed{\nabla} \psi|_{\slashed{g}}^2 \leq e_1 \left[\partial_{t^\star} \psi\right]
\]
\[
g^{t^\star r} \nabla_{t^\star}\left( \widetilde{\Omega}_i \psi\right)  \nabla_{r} \left( \widetilde{\Omega}_i \psi \right) \sim \frac{1}{r^3} r^2 \left( |\partial_{t^\star} \slashed{\nabla} \psi|_{\slashed{g}}^2 + \| \partial_r \slashed{\nabla} \psi|_{\slashed{g}}^2 \right) \leq e_1 \left[\partial_{t^\star} \psi\right]
\]
\begin{align}
g^{t^\star \phi} \nabla_{t^\star}\left( \widetilde{\Omega}_i \psi\right)  \nabla_{\phi} \left( \widetilde{\Omega}_i \psi \right) \sim \frac{1}{r^2} r^2 \left( \frac{1}{\epsilon} |\partial_{t^\star} \slashed{\nabla} \psi|_{\slashed{g}}^2 + \epsilon \| \partial_{\phi} \slashed{\nabla} \psi|_{\slashed{g}}^2 \right) \nonumber \\
\leq C_\epsilon \cdot e_1 \left[\partial_{t^\star} \psi\right] + \epsilon \cdot {e}_2 \left[\psi\right]
\end{align}
and $g^{t^\star \theta}=0$. The fact that we need to borrow an $\epsilon$ of ${e}_2\left[\psi\right]$ is due to the fact that the $t^\star \phi$ coordinates are not optimal near infinity. The cross-term $g^{t^\star \phi}$ term would decay much stronger in coordinates adapted to the asymptotically AdS end, which would allow us to estimate all terms by the weaker energy $\tilde{E}_2\left[\psi\right]$. 

\item the first term on the right hand side is a boundary term, which can be estimated 
\begin{align}
\Big| \int_{\mathcal{D}\left(\tau_1,\tau_2\right)}  \nabla^\mu \left(\widetilde{\Omega}_i \psi \nabla_\mu \left(\widetilde{\Omega}_i \psi\right) \right) \Big| \leq \sup_{t^\star} \int_{\Sigma_{t^\star}} \tilde{e}_2\left[\psi\right] r^2 \sin \theta dr d\theta d\phi \, .
\end{align}

\item the last term in (\ref{min}) is controlled as previously by the last line in the energy estimate (\ref{eest}).
\end{itemize}
It follows that integrating (\ref{min}) furnishes the estimate
\begin{align} \label{ied}
\int_{\mathcal{D}\left(\tau_1,\tau_2\right) \cap \{r\geq 2R\}} r^2 \sin \theta  dt^\star dr d\theta d\phi \Big[ r^2 |\partial_r \Omega_i \psi|^2 +  | \slashed{\nabla} \Omega_i \psi|^2 \nonumber \\
\leq \max\left(1, \left(\tau_2-\tau_1\right)\right)  \left[ \epsilon\sup_{\tau \in \left(\tau_1,\tau_2\right)} {E}_2\left[\psi\right] \left(\tau\right) + C \cdot \tilde{E}_2\left[\psi\right] \left(0\right)\right] \, .
\end{align}
Now note that
\begin{align} \label{locen}
\tilde{E}_2\left[\psi\right]\left(t^\star\right) +  ||\Omega_i \psi_t ||^2_{H^{0,-2}_{AdS}(\Sigma_{t^\star} \cap \{r\leq 2R\})}+ ||\Omega_i \psi ||^2_{H^1_{AdS}(\Sigma_{t^\star} \cap \{r\leq 2R\})} \nonumber \\  
\leq C_R \cdot \tilde{E}_2\left[\psi\right]\left(t^\star\right)  
 \leq C_R \cdot \tilde{E}_2\left[\psi\right]\left(0\right)
 \end{align}
follows right from the boundedness statement for the $\tilde{E}_2\left[\psi\right]$ energy and estimating the weights away from infinity. We can integrate (\ref{locen}) in time and
add it to (\ref{ied}) which yields (first without the boxed terms)
\begin{align} \label{boot}
\boxed{ {E}_2 \left[\psi\right] \left(\tau_2 \right) }+ \int_{\tau_1}^{\tau_2}{E}_2  \left[\psi\right] \left(\tau\right) d\tau
 \leq \boxed{C_{M,l,a,\alpha} \cdot {E}_2  \left[\psi\right]  \left(\tau_1 \right)}  \nonumber \\ + \max\left(1, \tau_2-\tau_1\right)  \left[ \epsilon\sup_{\tau \in \left(\tau_1,\tau_2\right)} {E}_2 \left[\psi\right] \left(\tau\right) + C_\epsilon \cdot \tilde{E}_2 \left[\psi\right] \left(0\right)\right] \, .
\end{align}
The estimate also holds with the boxed terms included, as follows from adding (\ref{eest}) and (\ref{locen}). We claim that (\ref{boot}) implies ${E}_2 \left[\psi\right]\left(t^\star\right) \lesssim{E}_2\left[\psi\right]\left(0\right)$ provided $\epsilon$ is sufficiently small depending only on the parameters (the constant $C_{M,l,a,\alpha}$) and leave the verification to the reader.

\begin{remark}
An easier proof is available if one is willing to go to $H^3_{AdS}$. The Carter operator
\begin{align}
Q \psi = \Delta_{\mathbb{S}^2} \psi - \partial_\phi^2 \psi + \left(a^2 \sin^2 \theta\right) \partial_t^2 \psi
\end{align}
commutes with the wave operator. Since $\partial_\phi^2$ and $\partial_t^2$ trivially commute, we have
\begin{align}
E_1\left[\Delta_{\mathbb{S}^2} \psi \right] \left(t^\star\right) &\lesssim  E_1\left[Q \psi \right]  \left(t^\star\right) + E_1\left[\partial_\phi^2 \psi \right]  \left(t^\star\right) +E_1\left[\partial_t^2\psi \right]  \left(t^\star\right) \nonumber \\
&\lesssim  E_1\left[Q \psi \right]  \left(0\right) + E_1\left[\partial_\phi^2 \psi \right]  \left(0\right) +E_1\left[\partial_t^2\psi \right]  \left(0\right)
\end{align}
and we can control all derivatives on $S^2$ from controlling the Laplacian via elliptic estimates. This yields the desired gain, albeit at the level of three derivatives. This is analogous to commuting with angular momentum operators twice.
\end{remark}

\bibliographystyle{hacm}
\bibliography{refs}
\end{document}